\DeclareSymbolFontAlphabet{\mathbbl}{bbold}
\tikzstyle{mypetristyle}=[
\newcommand{\eqdef}{\overset{\mathrm{def}}{=\joinrel=}}
\newcommand{\myarrow}[3]{#1 \overset{#3}{\longrightarrow} #2}
\newcommand{\qed}{} 
\newcommand{\WMGineq}{\texorpdfstring{WMG$_\le$}}
\newcommand{\es}{\emptyset}
\newcommand{\pminus}{
\mbox{\textrm{$-$}\!\!\!\!\!\;\raisebox{1.3mm}{$\scriptstyle \bullet$}}\:}
\newcommand{\dt}{\bullet}
\newcommand{\lbul}{^\bullet}
\newcommand{\emptyseq}{\varepsilon}
\newcommand{\support}{\mathit{supp}}
\newcommand{\Parikh}{{\mathbf P}}
\newcommand{\is}{\iota}
\newcommand{\one}{\mathbbl{1}}
\newcommand{\zero}{\mathbbl{0}}
\def\N{\mathbb{N}}
\def\projection#1#2{\mathchoice
              {\setbox1\hbox{${\displaystyle #1}_{\scriptstyle #2}$}
              \projectionaux{#1}{#2}}
              {\setbox1\hbox{${\textstyle #1}_{\scriptstyle #2}$}
              \projectionaux{#1}{#2}}
              {\setbox1\hbox{${\scriptstyle #1}_{\scriptscriptstyle #2}$}
              \projectionaux{#1}{#2}}
              {\setbox1\hbox{${\scriptscriptstyle #1}_{\scriptscriptstyle #2}$}
              \projectionaux{#1}{#2}}}
\def\projectionaux#1#2{{#1\,\smash{\vrule height .8\ht1 depth .85\dp1}}_{\,#2}}
\tikzstyle{ltsNode}=[circle,fill=black, inner sep=0, minimum width=4pt] 
\tikzstyle{ltsnst}=[circle,draw=black,fill=black!10, very thick, minimum width=1mm]
\tikzstyle{ltsNodePattern}=[circle,fill=black!10, very thick, minimum width=1mm]
\tikzstyle{petriNode}=[place,minimum size=3mm,very thick,fill=black!10]
\tikzset{every picture/.style={mypetristyle}}
\tikzset{every label/.style={black!90}}
\begin{document}

\title{On the Petri Nets with a Single Shared Place and Beyond}

\author{Thomas Hujsa\thanks{Supported by the STAE foundation/project DAEDALUS, Toulouse, France.}\\  
LAAS-CNRS\\Universit\'e de Toulouse, CNRS, INSA\\Toulouse, France
({thomas.hujsa@laas.fr})
\and Bernard Berthomieu\\
LAAS-CNRS\\Universit\'e de Toulouse, CNRS, INSA\\Toulouse, France
({bernard.berthomieu@laas.fr})
\and Silvano Dal Zilio\\
LAAS-CNRS\\Universit\'e de Toulouse, CNRS, INSA\\Toulouse, France
({silvano.dalzilio@laas.fr})
\and Didier Le Botlan\\
LAAS-CNRS\\Universit\'e de Toulouse, CNRS, INSA\\Toulouse, France
({didier.le.botlan@laas.fr})
}

\maketitle
\runninghead{T. Hujsa, B. Berthomieu, S. Dal Zilio, D. Le Botlan}{On the Petri Nets with a Single Shared Place}

\vspace*{-2cm} 

\begin{abstract}
%
%
Petri nets proved useful to describe various real-world systems, but many of their properties are very hard to check.
To alleviate this difficulty, subclasses are often considered.
The class of weighted marked graphs with relaxed place constraint (\WMGineq{} for short),
in which each place has at most one input and one output,
and the larger class of choice-free (CF) nets,
in which each place has at most one output, have been extensively studied to this end, with various applications.

In this work, we develop new properties related to the fundamental and intractable problems of reachability, liveness and reversibility 
in weighted Petri nets.
We focus mainly on the homogeneous Petri nets with a single shared place (H$1$S nets for short), 
which extend the expressiveness of CF nets 
by allowing one shared place 
(i.e.\ a place with at least two outputs and possibly several inputs)
under the homogeneity constraint 
(i.e.\ all the output weights of the shared place are equal).
Indeed, this simple generalization already yields new challenging problems and is expressive enough 
for modeling existing use-cases, justifying a dedicated study. 

One of our central results is the first characterization of liveness in a subclass of H$1$S nets more expressive than \WMGineq{}
that is expressed by the infeasibility of an integer linear program (ILP) of polynomial size.
This trims down the complexity to co-NP, contrasting with the known EXPSPACE-hardness of liveness 
in the more general case of weighted Petri nets.
In the same subclass, we obtain a new reachability property related to the live markings,
which is a variant of the well-known Keller's theorem.
Another central result is a new reversibility characterization for the live H$1$S class, simplifying its checking.
Finally, we apply our results to use-cases, highlight their scalability and discuss their extensibility to more expressive classes.

\end{abstract}

\keywords{Weighted Petri net, Efficient analysis, Shared place or buffer, Liveness, Reversibility, Reachability, 
Marked graph, T-net, Choice-free nets, Integer linear programming, State equation, Keller's theorem.} 

\maketitle

{ 

\section{Introduction}\label{intro.sec}

Petri nets, or equivalently vector addition systems (VASs), have proved useful to model numerous artificial and natural systems.
Their weighted version allows weights (multiplicities) on arcs,
making possible the bulk consumption or production of tokens, 
hence a more compact representation of the systems.

Many fundamental properties of Petri nets are decidable, although often hard to check.
Given a bounded Petri net,
a naive analysis can be performed by constructing its finite reachability graph,
whose size may be considerably larger than the net size. 
In case the net is unbounded, the reachability graph is infinite, thus cannot be constructed,
requiring to relate the net structure to the behavioral properties of interest.
To address this problem and to limit the cost of the computation, subclasses are often considered,
allowing to derive more efficiently the behavior from the structure only.
This approach has led to various polynomial-time checking methods
dedicated to several subclasses,
the latter being defined by structural restrictions in many cases~\cite{DesEsp,STECS,LAT98,HDM2016,HD2017,HD2018}.\\

\noindent {\bf Persistent systems, applications and properties.}
A Petri net system is persistent if, from each reachable marking, no transition firing can disable any other transition.
Persistence~\cite{keller,PersistentBaryl} is generally required when designing asynchronous hardware, 
notably to implement Signal Transition Graphs (STGs)--specially interpreted Petri nets--specifications as speed-independent circuits, 
ensuring the absence of hazards~\cite{STG-BDD1995,Carmona2004}, as well as to arbiter-free synchronization of processes~\cite{Lamport2003}. 
This property also appears in several workflow models~\cite{Workflow2003}.

Choice-free (CF) nets, also called output-nonbranching nets, force each place to have at most one output,
hence have no shared place.
They are a special case of persistent systems
and have already been widely investigated, e.g.\ in~\cite{tcs97,HDM2014,Hujsa2015,ChoiceFreeBestDevil2015}.

The CF class contains the weighted marked graphs with relaxed place constraint (\WMGineq{} for short),
in which each place has at most one input and one output. 
Various studies have been devoted to \WMGineq{}, sometimes with further constraints~\cite{WTS92,Sauer2003,March09,DH18,DH19FI}.
\WMGineq{} can model Synchronous DataFlow graphs~\cite{LeeM87}, 
which have been fruitfully used to design and analyze many real-world systems
such as embedded applications, notably Digital Signal Processing (DSP) applications \cite{LM87a,Emb09,Pino95}.

Embedded systems, among others, must fulfill a set of fundamental properties.
Typically,
all their functionalities must be preserved after any evolution of the system,
they must use a limited amount of memory,
and 
they should ideally have a cyclic, regular behavior from the start, 
avoiding an highly time-consuming transient phase~\cite{PhDHujsa}.\\
The corresponding properties in Petri nets
are most often defined as liveness
(meaning the existence, from each reachable marking, of a fireable sequence containing all transitions), 
boundedness
(meaning the existence of an upper bound on the number of tokens in each place over all reachable markings) 
and 
reversibility
(meaning the strong connectedness of the reachability graph).

Various characterizations and polynomial-time sufficient conditions of structural and behavioral properties, 
including liveness, boundedness and reversibility,
have been developed for CF nets and their \WMGineq{} subclass~\cite{WTS92,March09,Hujsa2015,HDM2016}
as well as for larger classes~\cite{STECS,HDM2016}.\\

\noindent {\bf Generalizations of CF nets and related classes}.
In this work, we introduce an extension of CF nets, the homogeneous Petri nets with at most one shared place (H$1$S nets for short),
in which the output weights of the shared place are all equal.
This class makes more flexible the modeling of applications with an underlying CF structure,
allowing to slightly move away from the determinism induced by the persistence.

Previous works considered the addition of shared places to (unit-weighted) marked graphs, yielding notably Augmented Marked Graphs (AMG). 
The latter are unit-weighted and must satisfy a set of further restrictions on their structure and initial marking~\cite{ChuXie97}.
This class does not contain the H$1$S nets nor is generalized by them.
AMG benefit from various results relating their structure to their behavior,
notably on liveness, boundedness, reversibility and reachability~\cite{ChuXie97}.
Other classes allow shared places through place merging and are studied in~\cite{Jiao2004},
but do not include our class.

As we shall highlight, H$1$S nets deserve a dedicated study with new theoretical developments,
and embody a part of the frontier between well-analyzable CF-like nets and the more tricky ones.
As use-cases, we take the \emph{Swimming pool protocol} 
from the Model Checking Contest\footnote{\url{https://mcc.lip6.fr/models.php}} (MCC)
and
the emblem of the International Conference on Application and Theory of Petri Nets and Concurrency (\emph{Petri nets}).\\

\noindent {\bf Contributions.}
We develop new checking methods for reachability, liveness and reversibility in weighted Petri nets, focusing mainly on the H$1$S class.
In the H$1$S-\WMGineq{} subclass,
i.e.\ the H$1$S systems in which the deletion of the shared place (if any) yields a \WMGineq{},
we study reachability and liveness.
In the larger H$1$S class, we investigate reversibility.

Our first main result is that, 
for each live H$1$S-\WMGineq{} system $(N,M_0)$, 
every \emph{potentially reachable}\footnote{i.e.\ that appears in a solution $(M,Y)$ of the \emph{state equation} $M = M_0 + I \cdot Y$ of the system, where $I$ is the incidence matrix.} marking $M$ is live, 
and 
that, for each such $M$, some marking is reachable from both $M_0$ and $M$.
This result, combined with a previous liveness characterization known for a larger class,
yields a new characterization of liveness based on the state equation,
which amounts to check the infeasibility of an Integer Linear Program (ILP) 
whose number of linear inequalities may be exponential in the number of places.\\
In case the H$1$S-\WMGineq{} is strongly connected and structurally bounded with a strongly connected underlying WMG,
we derive an ILP whose number of linear inequalities is linear in the number of places and transitions
and
where each inequality has his length linear in the number of places, transitions
and
the number of bits in the binary encoding of the largest number used.
This trims down the complexity of liveness checking to co-NP for a given Petri net system with binary-encoded numbers. 
This improves upon the EXPSPACE-hardness of liveness checking in weighted Petri nets.

Our second main result is the first characterization of reversibility for the H$1$S systems under the liveness assumption.
More precisely, we show that the existence of a \emph{T-sequence} feasible from the initial marking, 
meaning a sequence containing all transitions and leading to the same (initial) marking,
is (obviously) necessary, but also sufficient for reversibility under the liveness assumption.
This characterization was known to hold in other subclasses of weighted Petri nets,
but not yet known to hold in the H$1$S class.
This avoids to build the whole reachability graph for checking its strong connectedness,
which is even impossible when infinite (in case the Petri net is unbounded).

Besides, we highlight the sharpness of our liveness, reachability and reversibility 
conditions by providing counter-examples when only few assumptions are relaxed.

As use-cases, we study the \emph{Swimming pool protocol} and the emblem of the 
\emph{International Conference on Application and Theory of Petri Nets and Concurrency},
which can both be modeled with H$1$S-\WMGineq{} systems.
We apply our methods to these use-cases, highlighting their scalability.

Finally, we discuss foreseen extensions of this work.\\ 

\noindent {\bf Organization of the paper.}
In Section~\ref{SectionDefs}, we introduce general definitions and notations.
In Section~\ref{SectionSubclasses}, we define the main subclasses studied in this paper;
then, in Section~\ref{SectionPNproperties}, we recall some known properties of Petri nets related to reachability
and also propose a new reachability property.

In Section~\ref{PotentialReachDirected}, we investigate properties of the \emph{potential reachability graphs},
in which each marking solution of the state equation is represented.
We introduce \emph{initial directedness} as well as \emph{strong liveness},
and prove a relationship between them.

In Section~\ref{SecLiveness}, we develop our new checking methods for liveness and reachability dedicated to the H$1$S-\WMGineq{} systems.
In Section~\ref{SecReversibility}, we characterize reversibility in the H$1$S systems under the liveness assumption.

We apply our results to the use-cases in Section~\ref{UseCase}
and
we discuss possible extensions of this work with a modular approach in Section~\ref{ModularH1C}.

Finally, Section~\ref{SecConclu} forms our conclusion with further perspectives.

\section{General Definitions and Notations}\label{SectionDefs}

\noindent {\bf Petri net, incidence matrix, pre-/post-set, shared place, choice transition.}
A~{\em (Petri) net} is a tuple $N=(P,T,W)$ such that 
$P$ is a finite set of {\em places}, 
$T$ is a finite set of {\em transitions}, with $P\cap T=\es$, 
and $W$ is a weight function $W\colon((P\times T)\cup(T\times P))\to\N$ 
setting the weights on the arcs.
A {\em marking} of the net $N$ is a mapping from $P$ to $\N$, i.e.\ a member of $\N^P$, defining the number of tokens in each place of $N$.

A {\em (Petri net) system} is a tuple $S=(N,M_0)$ where $N$ is a net and $M_0$ is a marking, often called {\em initial marking}. 
The {\em incidence matrix} $I$ of $N$ (and $S$) is the integer place-transition matrix 
with components $I(p,t)=W(t,p)-W(p,t)$, for each place $p$ and each transition~$t$.

A net or system is {\em non-trivial} if it contains at least one place and one transition.

The {\em post-set} $n\lbul$ and {\em pre-set} $\lbul n$ of a node $n \in P \cup T$ are defined
as $n\lbul=\{n'\in P\cup T\mid W(n,n'){>}0\}$ and $\lbul n=\{n'\in P\cup T \mid W(n',n){>}0\}$.
Generalizing to any set $A$ of nodes, $\lbul A = \cup_{n \in A} {\lbul n}$ and $A\lbul = \cup_{n \in A} n\lbul$. 

A place $p$ is \emph{enabled} by a marking $M$ if for each output transition $t$ of $p$,  $M(p) \ge W(p,t)$.

A place $p$ is \emph{shared} if it has at least two outputs. 

A transition $t$ is a \emph{choice} transition if it has at least one input place that is shared. 
If no input place of a transition $t$ is shared, 
$t$ is called a \emph{non-choice} transition.

Some of these notions are illustrated in Figure \ref{ex1.fig}.\\

\noindent {\bf Firings and reachability in Petri nets.}
Consider a system $S=(N,M_0)$ with $N=(P,T,W)$.
A transition $t$ is {\em enabled} at $M_0$ (i.e.\ in $S$) 
if for each $p$ in ${\lbul t}$, $M_0(p) \ge W(p,t)$, 
in which case $t$ is {\em feasible} or {\em fireable} from $M_0$,
which is denoted by $M_0[t\rangle$.
The firing of $t$ from $M_0$ leads to the marking $M = M_0 + I[P,t]$ 
where $I[P,t]$ is the column of $I$ associated to $t$: this is denoted by $M_0[t\rangle M$.

A finite {\em (firing) sequence} $\sigma$ of length $n \ge 0$ on the set $T$,
denoted by $\sigma = t_{1} \ldots t_{n}$ with $t_{1} \ldots t_{n} \in T$, is a mapping $\{1, \ldots, n\} \to T$.
Infinite sequences are defined similarly as mappings $\N\setminus\{0\} \to T$.
A sequence $\sigma$ of length $n$ is {\em enabled} (or {\em feasible}, {\em fireable}) in $S$ 
if the successive states obtained,
$M_0 [t_1\rangle M_1 \ldots [t_n\rangle M_n$,
satisfy 
$M_{k-1}[t_k\rangle M_k$, for each $k$ in $\{1,\ldots, n\}$,
in which case $M_n$ is said to be {\em reachable} from $M_0$: we denote this by $M_0 [\sigma\rangle M_n$.
If $n=0$, $\sigma$ is the {\em empty sequence} $\epsilon$, implying $M_0 [\epsilon\rangle M_0$.
The set of markings reachable from $M_0$ is denoted by $R(S)$ or $[S\rangle$;
when it is clear from the context, it is also denoted by $R(M_0)$ or $[M_0\rangle$.

The {\em reachability graph} of $S$, denoted by $RG(S)$, is the rooted directed graph $(V,A,\is)$
where $V$ represents the set of vertices labeled bijectively with the markings $[M_0\rangle$,
$A$ is the set of arcs labeled with transitions of $T$
such that the arc $M \xrightarrow[]{t} M'$
belongs to $A$ if and only if
$M [t\rangle M'$ and $M \in [M_0\rangle$, and $\is$ is the root, labeled with $M_0$.

In Figure \ref{ex1.fig}, a weighted system is pictured on the left.
Its reachability graph is pictured on the right, where $v^T$ denotes the transpose of vector $v$.\\

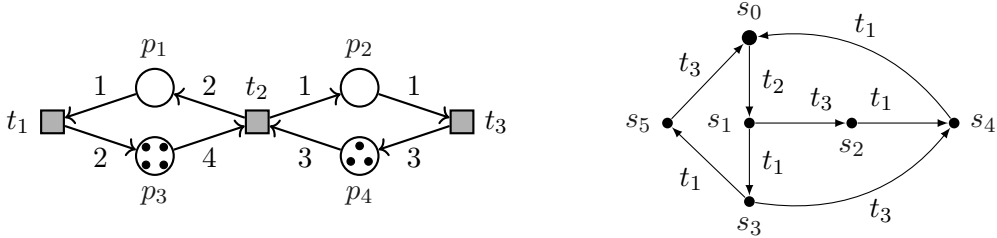
\begin{figure}[!h]
\centering
\raisebox{4mm}{
\begin{tikzpicture}[scale=0.9,place/.style={circle, draw=black, fill=white, thick, minimum size=4mm, inner sep=5pt}]

\node (p1) at (1.5,0.5) [place,label=below:$p_3$,tokens=4] {};
\node (p2) at (4.5,0.5) [place,label=below:$p_4$,tokens=3] {};
\node (p3) at (1.5,1.5) [place,label=above:$p_1$] {};
\node (p4) at (4.5,1.5) [place,label=above:$p_2$] {};

\node (t1) at (0,1) [transition] {};
\node (t2) at (3,1) [transition] {};
\node (t3) at (6,1) [transition] {};

\draw [->,thick, bend left=0] (t2) to node [above] {$2$} (p3);
\draw [->,thick, bend left=0] (t2) to node [above] {$1$} (p4);

\draw [->,thick, bend left=0] (p1) to node [below] {$4$} (t2);
\draw [->,thick, bend left=0] (p2) to node [below] {$3$} (t2);

\draw [->,thick, bend left=0] (p3) to node [above] {$1$} (t1);
\draw [->,thick, bend left=0] (p4) to node [above] {$1$} (t3);

\draw [->,thick, bend left=0] (t1) to node [below] {$2$} (p1);
\draw [->,thick, bend left=0] (t3) to node [below] {$3$} (p2);

 \node [anchor=east] at (t1.west) {$t_1$};
 \node [anchor=south] at (t2.north) {$t_2$};
 \node [anchor=west] at (t3.east) {$t_3$};

\end{tikzpicture}
}
\hspace*{1cm}
\begin{tikzpicture}[scale=0.9]

\node[ltsNode](s5)at(1.8,1.5)[label=left:$s_5$]{};
\node[ltsNode,minimum width=2mm](s0)at(3,2.75)[label=above:$s_0$]{}; 
\node[ltsNode](s1)at(3,1.5)[label=left:$s_1$]{}; 
\node[ltsNode](s3)at(3,0.35)[label=below:$s_3$]{};
\node[ltsNode](s2)at(4.5,1.5)[label=below:$s_2$]{};
\node[ltsNode](s4)at(6,1.5)[label=right:$s_4$]{};

\draw(s0)[]edge[-latex,bend left=0]node[ellipse,right,inner sep=2pt,pos=0.5]{$t_2$}(s1);
\draw(s1)[]edge[-latex,bend left=0]node[ellipse,right,inner sep=2pt,pos=0.5]{$t_1$}(s3);
\draw(s1)[]edge[-latex,bend left=0]node[ellipse,above,near end,inner sep=2pt]{$t_3$}(s2);
\draw(s2)[]edge[-latex,bend left=0]node[ellipse,above,near start,inner sep=2pt]{$t_1$}(s4);
\draw(s4)[]edge[-latex,bend right=30]node[ellipse,above,inner sep=2pt,pos=0.5]{$t_1$}(s0);
\draw(s5)[]edge[-latex,bend left=0]node[ellipse,above left,inner sep=2pt,pos=0.5]{$t_3$}(s0);
\draw(s3)[]edge[-latex,bend left=0]node[ellipse,below left,inner sep=2pt,pos=0.5]{$t_1$}(s5);
\draw(s3)[]edge[-latex,bend right=30]node[ellipse,below right,inner sep=2pt,pos=0.5]{$t_3$}(s4);

\end{tikzpicture}

\caption{ 
A system $S=(N,M_0)$ is pictured on the left. 
The pre-set $^\bullet t_2$ of $t_2$ is $\{p_3, p_4\}$
and
the post-set $t_2^\bullet$ of $t_2$ is $\{p_1, p_2\}$.
There is no shared place.
The reachability graph $RG(S)$ of $S$ is pictured on the right.
The initial marking is the state $s_0 = (0,0,4,3)^T$ in $RG(S)$.
The firing sequence $\sigma = t_2\, t_1\, t_3$ is feasible from $M_0$ and reaches the marking $(1,0,2,3)^T = s_4$.
} 
\label{ex1.fig}
\end{figure}

\noindent {\bf Relating reachability to the state equation.} 
Consider a system $S=(N,M_0)$ with its incidence matrix $I$, where $N=(P,T,W)$ is the underlying net.

The {\em state equation} associated to $S$ is expressed as $M = M_0 + I \cdot Y$,
where the variable $M$ takes its value in the set of markings 
and the variable $Y$ ranges over the set of vectors whose components are non-negative integers.

The set of markings {\em potentially reachable} in $S$ is defined as 
$PR(S) = \{ M \in \N^{\mid P\mid} \mid \exists Y \in \N^{\mid T\mid}, M = M_0 + I \cdot Y \}$;
this set is called the \emph{linearized reachability set} of $S$ in~\cite{LAT98}.
Potential reachability is a necessary condition for reachability, but it is not sufficient in general.

We denote by $PRG(S)$ the {\em potential reachability graph} of $S$, 
defined as the rooted directed graph $(V,A,\is)$
where $V$ represents the set of vertices $PR(S)$,
$A$ is the set of arcs labeled with transitions of $S$
such that, for each transition $t$, the arc $M \xrightarrow[]{t} M'$ belongs to $A$ if and only if
$M [t\rangle M'$ 
and 
$M \in PR(S)$, and $\is = M_0$ is the root.
$RG(S)$ is a subgraph of $PRG(S)$.\\

\noindent {\bf Vectors, semiflows, conservativeness and consistency.} 
The {\em support} of a vector $v$, denoted by $\support(v)$, is the set of the indices of its non-null components.
Consider any net $N=(P,T,W)$ with its incidence matrix~$I$.

A {\em T-vector} (respectively P-vector) is an element of $\N^{T}$ (respectively $\N^{P}$); 
it is called {\em prime} if 
the greatest common divisor of its components is one 
(i.e.\ its components do not have a common non-unit factor). 

The {\em Parikh vector} $\Parikh(\sigma)$ of a finite sequence $\sigma$ of transitions is the T-vector
counting the number of occurrences of each  transition in $\sigma$,
and the {\em support} of $\sigma$ is the support of its Parikh vector, 
i.e.
$\support(\sigma)=\support(\Parikh(\sigma))=\{t\in T\mid\Parikh(\sigma)(t)>0\}$.

We denote by $\zero^n$ (respectively $\one^n$) the column vector of size $n$ whose components are all equal to~$0$ (respectively~$1$).
We denote by $\one_t^n$ the column vector of size $n$ whose only non-null component has index $t$ and equals $1$.
The exponent $n$ may be omitted when it is clear from the context.

A {\em T-semiflow} (respectively P-semiflow) $\nu$ of the net is a non-null T-vector (respectively P-vector)
whose components are only non-negative integers (i.e.\ $\nu\gneqq\zero$)
and such that $I \cdot\nu=\zero$ (respectively $\nu^T \cdot I = \zero$). 
A T-(respectively P-)semiflow is called {\em minimal} 
when it is prime and its support 
is not a proper superset of the support of any other T-(respectively P-)semiflow.

$N$ is {\em conservative}, or {\em invariant}, if a P-semiflow $X \in \N^{|P|}$ exists for $I$
such that
$X \ge \one^{|P|}$, in which case $X$ is called a {\em conservativeness vector}.
In case such a P-vector $X$ exists and, in addition, $X=\one^{|P|}$, $N$ is called {\em $1$-conservative}, or {\em $1$-invariant}.

$N$ is {\em consistent} if a T-semiflow $Y \in \N^{|T|}$ exists for $I$
such that
$Y \ge \one^{|T|}$, in which case $Y$ is called a {\em consistency vector}.

Such vectors are frequently exploited in the structural and behavioral analysis of Petri nets~\cite{LAT98}.\\

\noindent {\bf Deadlockability, liveness, boundedness and reversibility.}
Consider any system $S=(N,M_0)$.
A transition $t$ is {\em dead} in $S$ if no marking of $[M_0\rangle$ enables $t$.
A {\em deadlock}, or {\em dead marking}, is a marking enabling no transition.
$S$ is {\em deadlock-free} if no deadlock belongs to $[M_0\rangle$; otherwise it is {\em deadlockable}. 
%

A transition $t$ is {\em live} in $S$ if for every marking $M$ in $[M_0\rangle$,
there is a marking $M' \in [M\rangle$ enabling~$t$.
$S$ is {\em live} if every transition is live in $S$.
$N$ is {\em structurally live} if a marking $M$ exists such that $(N,M)$ is live.

$S$ is {\em $k$-bounded} (or {\em $k$-safe}) if an integer $k$ exists such that: for each $M$ in $[M_0\rangle$, 
for each place~$p$, $M(p) \le k$.
It is {\em bounded} if an integer $k$ exists such that $S$ is $k$-bounded. 
$N$ is {\em structurally bounded} if $(N,M)$ is bounded for each $M$.

$N$ is {\em well-formed} if it is structurally bounded and structurally live.

A marking $M$ is a {\em home state} of $S$ if it can be reached from every marking in $[M_0\rangle$. 
$S$ is {\em reversible} if its initial marking is a home state, meaning that $RG(S)$ is strongly connected.

The underlying net $N$ in Figure~\ref{ex1.fig} is structurally live and bounded, thus well-formed.
In the same figure, the system $S$ is live, $4$-bounded and reversible, thus non-deadlockable,
which can be checked on its finite reachability graph.\\

\noindent{\bf Subnets and subsystems.} 
Let $N=(P,T,W)$ and $N'=(P',T',W')$ be two nets.
$N'$ is a {\em subnet} of $N$ if $P'$ is a subset of $P$, $T'$ is a subset of $T$, 
and $W'$ is the restriction of $W$ to $(P'\times T') \cup (T'\times P')$.
$S'=(N',M_0')$ is a {\em subsystem} of $S=(N,M_0)$ if $N'$ is a subnet of $N$ 
and its initial marking $M_0'$ is the restriction of $M_0$ to $P'$, denoted by $M_0'= \projection{M_0}{P'}$.

$N'$ is a {\em P-subnet} of $N$ if $N'$ is a subnet of $N$
and $T'=\mathit{^\bullet P'} \cup P'^\bullet$,
the pre- and post-sets being taken in $N$. 
$S'=(N',M_0')$ is a {\em P-subsystem} of $S=(N,M_0)$ if $N'$ is a P-subnet of $N$ 
and $S'$ is a subsystem of $S$.
We say that $N'$ and $S'$ are {\em induced} by the subset $P'$.


Subsystems play a fundamental role in the analysis of Petri nets,
typically leading to characterizations relating the system's behavior to properties of its subsystems;
this approach yielded polynomial-time checking methods in various subclasses, see e.g.~\cite{March09,HDM2016,HD2018}.
We exploit such subsystems in this paper to obtain some of our new results.\\

\noindent{\bf Siphons.} 
Consider a net $N = (P,T,W)$.
A 
subset $D \subseteq P$ of places is a {\em siphon} (sometimes also called a deadlock) 
if $\lbul D \subseteq D \lbul$.
%
%
For the sake of conciseness, in this paper, 
we assume siphons 
to be non-empty, unless emptiness is explicitly allowed.

There exist various studies relating the structure to the behavior with the help of siphons, 
see e.g.~\cite{DesEsp,STECS}.
Intuitively,
insufficiently marked siphons induce P-subsystems that cannot receive new tokens and thus block some transitions irremediably.

A siphon 
is {\em minimal} if it does not contain any proper siphon, 
i.e.\ there is no subset of the same type with smaller cardinality.

In Figure~\ref{ex1.fig}: on the left,
$\{p_1, p_2, p_3, p_4\}$ is a siphon 
and includes the smaller minimal ones $\{p_1, p_3\}$ and $\{p_2, p_4\}$,
while $\{p_1\}$ is not a siphon. 


\section{Petri net subclasses}\label{SectionSubclasses}

Let us define the Petri net subclasses studied in this work.

\subsection{Classical restrictions on the structure}

We define subclasses from restrictions on the structure of any weighted net $N$.
Most of them have been exploited in various works,
e.g.\ in~\cite{tcs97,Jiao2004,HDM2016}.\\

\noindent {$-$ Subclasses defined by restrictions on the weights:}

$N$ is {\it unit-weighted} (or {\it plain}, {\it ordinary}) if no arc weight exceeds~$1$; 
$N$ is {\em homogeneous} if for each place $p$, all output weights of $p$ are equal.
In particular, unit-weighted nets are homogeneous.
In this paper, for any class of nets C, we denote by HC the homogeneous subclass of C.
Examples are pictured in Figure~\ref{ExSubclasses2}.\\

\noindent {$-$ Subclasses without shared places:}

$N$ is {\it choice-free} (CF for short, also called place-output-nonbranching)
if each place has at most one output,
i.e.\ $\forall p\in P$, $|p^\dt|\leq 1$;
%
it is a {\em weighted marked graph with relaxed place constraint} (\WMGineq{})
if it is choice-free and, in addition, each place has at most one input,
i.e.
$\forall p\in P$, $|{}^\dt p|\leq 1$ and $|p{}^\dt|\leq 1$.
\WMGineq{} contain the {\em weighted T-systems} (WTS) of \cite{WTS92}, 
also known as {\em weighted marked graphs} (WMG) 
and 
{\em weighted event graphs} (WEG)~\cite{March09},
in which $\forall p\in P$, $|{}^\dt p| = 1$ and $|p{}^\dt| = 1$.
The system on the left of Figure~\ref{ex1.fig} is a WMG.
%
Well-studied subclasses encompass {\em marked graphs}~\cite{MDG71}, also known as {\em T-nets}~\cite{DesEsp},
which are WMG with unit weights.\\
%

\noindent {$-$ Subclasses with shared places:}

We define the class H$k$S as the set of homogeneous nets with at most $k$ shared places.
Figure~\ref{OneChoiceWMG} depicts an H$1$S system on the left and the CF net obtained by deleting the only shared place on the right.
We shall also consider H$k$S-\WMGineq{} systems, which are H$k$S systems with the further restriction that
the deletion of the $k$ shared places yields a \WMGineq{}.

A net $N$ is {\em asymmetric-choice} (AC) if it satisfies 
the following condition for any two input places $p_1$, $p_2$ of each synchronization~$t$,
$p_1^\bullet \subseteq p_2^\bullet$
or
$p_2^\bullet \subseteq p_1^\bullet$.
It is {\em free-choice} (FC) if for any two input places $p_1$, $p_2$ of each synchronization $t$,
$p_1^\bullet = p_2^\bullet$.
Thus, FC nets form a subclass of AC nets.
Examples are given in Figure~\ref{ExSubclasses2}.

A net $N$ is a {\em state machine} if it is unit-weighted 
and each transition has exactly one input and one output. 

\begin{figure}[!h]
 
 
\begin{minipage}{0.3\linewidth}

\centering

\hspace*{1cm}
\begin{tikzpicture}[mypetristyle,scale=1]

\node (p1) at (0,0) [place,thick] {};
\node (p2) at (1.25,0) [place,thick] {};

\node [anchor=east] at (p1.west) {$p_1$};
\node [anchor=west] at (p2.east) {$p_2$};

\node (t1) at (0,1.2) [transition,thick] {};
\node (t2) at (1.25,1.2) [transition,thick] {};

\node [anchor=east] at (t1.west) {$t_0$};
\node [anchor=west] at (t2.east) {$t_1$};

\draw [->,thick] (p1) to node [left] {$2$} (t1);
\draw [->,thick] (p1) to node [below,near start] {$2$} (t2.south west);
\draw [->,thick] (p2) to node [below,near start] {$3$} (t1.south east);
\draw [->,thick] (p2) to node [right] {$3$} (t2);

\end{tikzpicture}

\end{minipage}
\begin{minipage}{0.33\linewidth}

\centering

\begin{tikzpicture}[mypetristyle,scale=1]

\node (p1) at (0.5,0) [place,thick] {};
\node (p2) at (2,0) [place,thick] {};

\node [anchor=east] at (p1.west) {$p_1$};
\node [anchor=west] at (p2.east) {$p_2$};

\node (t0) at (0,1.2) [transition,thick] {};
\node (t1) at (1,1.2) [transition,thick] {};
\node (t2) at (2,1.2) [transition,thick] {};
\node (t3) at (3,1.2) [transition,thick] {};

\node [anchor=east] at (t1.west) {$t_1$};
\node [anchor=east] at (t2.west) {$t_2$};
\node [anchor=east] at (t0.west) {$t_0$};
\node [anchor=east] at (t3.west) {$t_3$};

\draw [->,thick] (p1) to node [below right, near start] {$2$} (t1);
\draw [->,thick] (p2) to node [above, near start] {$3$} (t1.south east);
\draw [->,thick] (p2) to node [right] {$3$} (t2);
\draw [->,thick] (p2) to node [below right, near end] {$3$} (t3);

\draw [->,thick] (p1) to node [left,near start] {$2$} (t0.south);
\draw [->,thick] (p2) to node [below,near start] {$3$} (t0.south east);

\end{tikzpicture}

\end{minipage}
\begin{minipage}{0.3\linewidth}

\centering

\begin{tikzpicture}[mypetristyle,scale=1]

\node (p1) at (0.25,0) [place,thick] {};
\node (p2) at (2.25,0) [place,thick] {};

\node [anchor=east] at (p1.west) {$p_1$};
\node [anchor=west] at (p2.east) {$p_2$};

\node (t0) at (0.25,1.2) [transition,thick] {};
\node (t1) at (1.25,1.2) [transition,thick] {};
\node (t2) at (2.25,1.2) [transition,thick] {};

\node [anchor=east] at (t1.west) {$~t_1$};
\node [anchor=west] at (t2.east) {$t_2$};
\node [anchor=east] at (t0.west) {$t_0$};

\draw [->,thick] (p1) to node [right, near start] {$2$} (t1);
\draw [->,thick] (p2) to node [left, near start] {$3$} (t1);
\draw [->,thick] (p2) to node [below left, very near end] {$3$} (t2);
\draw [->,thick] (p1) to node [below right, very near end] {$2$} (t0);

\end{tikzpicture}

\end{minipage}
%
%
%
%
%
%
%
%
%

\caption{The first net (on the left) is HFC, the second one is HAC. 
The third net is homogeneous, non-AC since $\mathit{\lbul t_1} = \{p_1,p_2\}$,
while 
$\mathit{p_1\lbul} \not\subseteq \mathit{p_2\lbul}$
and
$\mathit{p_2\lbul} \not\subseteq \mathit{p_1\lbul}$.
%
%
None of these nets is CF.
}

\label{ExSubclasses2}

\end{figure}
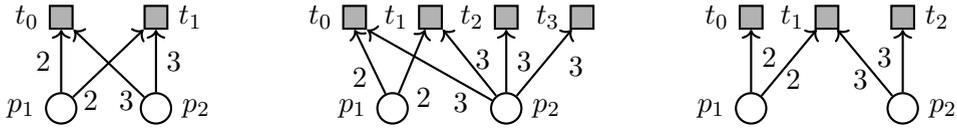

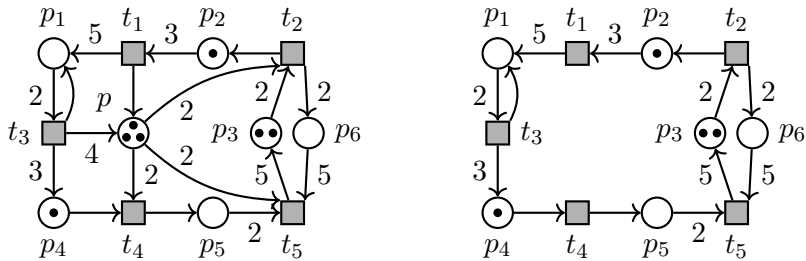
\begin{figure}
\centering
\begin{tikzpicture}[scale=0.7,mypetristyle]

\node (p) at (2.5,1.5) [place,tokens=3] {};
\node [anchor=south east] at (p.north west) {$p$};

\node (p1) at (1,3) [place] {};
\node [anchor=south] at (p1.north) {$p_1$};

\node (p2) at (4,3) [place,tokens=1] {};
\node [anchor=south] at (p2.north) {$p_2$};

\node (p3) at (5,1.5) [place,tokens=2] {};
\node [anchor=east] at (p3.west) {$p_3$};

\node (p4) at (1,0) [place,tokens=1] {};
\node [anchor=north] at (p4.south) {$p_4$};

\node (p5) at (4,0) [place] {};
\node [anchor=north] at (p5.south) {$p_5$};

\node (p6) at (5.8,1.5) [place] {};
\node [anchor=west] at (p6.east) {$p_6$};

\node (t1) at (2.5,3) [transition,thick] {};
\node (t2) at (5.5,3) [transition,thick] {};
\node (t3) at (1,1.5) [transition,thick] {};
\node (t4) at (2.5,0) [transition,thick] {};
\node (t5) at (5.5,0) [transition,thick] {};

\node [anchor=east] at (t3.west) {$t_3$};
\node [anchor=south] at (t1.north) {$t_1$};
\node [anchor=north] at (t5.south) {$t_5$};
\node [anchor=north] at (t4.south) {$t_4$};
\node [anchor=south] at (t2.north) {$t_2$};

\draw [->,thick] (t2.south east) to node [right] {$2$} (p6);
\draw [->,thick] (p6) to node [right] {$5$} (t5.north east);
\draw [->,thick] (t1) to node [left] {} (p);
\draw [->,thick] (p) to node [right] {$2$} (t4);
\draw [->,thick] (p1) to node [left] {$2$} (t3);
\draw [->,thick] (t3) to node [left] {$3$} (p4);
\draw [->,thick] (p4) to node [] {} (t4);
\draw [->,thick] (t4) to node [] {} (p5);
\draw [->,thick] (p5) to node [below] {$2$} (t5);
\draw [->,thick] (t5) to node [left] {$5$} (p3);
\draw [->,thick] (p3) to node [left] {$2$} (t2);
\draw [->,thick, bend right=20] (p) to node [above right, near start, inner sep=1pt] {$2$} (t5.north west);
\draw [->,thick, bend left=20] (p) to node [below right, near start, inner sep=1pt] {$2$} (t2.south west);
\draw [->,thick] (t2) to node [] {} (p2);
\draw [->,thick] (p2) to node [above] {$3$} (t1);
\draw [->,thick] (t1) to node [above] {$5$} (p1);
\draw [->,thick,bend right=30] (t3.north east) to node [above] {} (p1);
\draw [->,thick] (t3) to node [below] {$4$} (p);
\end{tikzpicture}
\hspace*{1cm}
\begin{tikzpicture}[scale=0.7,mypetristyle]


\node (p1) at (1,3) [place] {};
\node [anchor=south] at (p1.north) {$p_1$};

\node (p2) at (4,3) [place,tokens=1] {};
\node [anchor=south] at (p2.north) {$p_2$};

\node (p3) at (5,1.5) [place,tokens=2] {};
\node [anchor=east] at (p3.west) {$p_3$};

\node (p4) at (1,0) [place,tokens=1] {};
\node [anchor=north] at (p4.south) {$p_4$};

\node (p5) at (4,0) [place] {};
\node [anchor=north] at (p5.south) {$p_5$};

\node (p6) at (5.8,1.5) [place] {};
\node [anchor=west] at (p6.east) {$p_6$};

\node (t1) at (2.5,3) [transition,thick] {};
\node (t2) at (5.5,3) [transition,thick] {};
\node (t3) at (1,1.5) [transition,thick] {};
\node (t4) at (2.5,0) [transition,thick] {};
\node (t5) at (5.5,0) [transition,thick] {};

\node [anchor=west] at (t3.east) {$t_3$};
\node [anchor=south] at (t1.north) {$t_1$};
\node [anchor=north] at (t5.south) {$t_5$};
\node [anchor=north] at (t4.south) {$t_4$};
\node [anchor=south] at (t2.north) {$t_2$};

\draw [->,thick] (t2.south east) to node [right] {$2$} (p6);
\draw [->,thick] (p6) to node [right] {$5$} (t5.north east);
%
%
\draw [->,thick] (p1) to node [left] {$2$} (t3);
\draw [->,thick] (t3) to node [left] {$3$} (p4);
\draw [->,thick] (p4) to node [] {} (t4);
\draw [->,thick] (t4) to node [] {} (p5);
\draw [->,thick] (p5) to node [below] {$2$} (t5);
\draw [->,thick] (t5) to node [left] {$5$} (p3);
\draw [->,thick] (p3) to node [left] {$2$} (t2);
%
\draw [->,thick] (t2) to node [] {} (p2);
\draw [->,thick] (p2) to node [above] {$3$} (t1);
\draw [->,thick] (t1) to node [above] {$5$} (p1);
\draw [->,thick,bend right=30] (t3.north east) to node [above] {} (p1);
%
\end{tikzpicture}
\caption{
Deleting place $p$ in the H$1$S net on the left yields the CF net on the right.
}
\label{OneChoiceWMG}
\end{figure}

\section{Checking marking reachability in weighted Petri nets}\label{SectionPNproperties}

In this section, we recall the classical difficulties, some previous approaches and results 
for tackling reachability problems in weighted Petri nets.
We also propose a new simple related property.\\

\noindent {\bf Difficulty of analysis.}
In weighted Petri nets, the liveness checking problem is EXPSPACE-hard~\cite{Lipton76,esparza1994decidability,Complexity95},
the reversibility checking problem is PSPACE-hard~\cite{esparza1996decidability}
and
the marking reachability problem is non-elementary~\cite{ReachNonElementary2019}.
Besides, we recall next that most behavioral properties of Petri nets
are not preserved upon any increase of the initial marking, limiting the scalability of checking methods.\\

\noindent {\bf (Non-)Monotonicity.}
Consider any behavioral property $\mathcal{P}$. 
A system $(N,M_0)$ is said to satisfy $\mathcal{P}$ \emph{monotonically} if $(N,M)$ satisfies $\mathcal{P}$ for each marking $M \ge M_0$.
All the live choice-free systems are known to be \emph{m-live}, meaning they satisfy liveness monotonically~\cite{tcs97}.
However, generally, numerous fundamental properties, including liveness, reversibility, boundedness and deadlock-freeness,
are not preserved upon any increase of the initial marking, 
see e.g.~\cite{LAT98,DSSP98,Homothetic2012,HD2018}.
In Figure~\ref{NonMonotonicH1S}, we show that live, unit-weighted H$1$S systems are not always m-live nor m-reversible,
taking inspiration from Figure~$1$ in~\cite{ImportanceDeadTrap2010}.\\
\begin{figure}[!h]
 
%
\centering

\begin{tikzpicture}[mypetristyle,scale=0.85]

\node (p1) at (-0.25,0) [place,tokens=2] {};
\node (p2) at (1.18,0) [place,tokens=1] {};
\node (p3) at (1.82,0) [place,tokens=0] {};
\node (p4) at (2.9,0.4) [place] {};
\node (p5) at (2.9,-0.4) [place,tokens=0] {};
\node (p6) at (3.2,1.75) [place,tokens=0] {};

\node [anchor=east] at (p1.west) {$p_1$};
\node [] at (p2.west) {$p_2~~~~~$};
\node [] at (p3.east) {$~~~~~p_3$};
\node [anchor=south] at (p4.north) {$p_4$};
\node [anchor=north] at (p5.south) {$p_5$};
\node [anchor=west] at (p6.east) {$p_6$};

\node (t1) at (1.5,1) [transition] {};
\node (t2) at (1.5,-1) [transition] {};
\node (t3) at (4.25,0) [transition] {};
\node (t4) at (0.75,1.75) [transition] {};

\node [anchor=west] at (t1.east) {$t_1$};
\node [anchor=west] at (t2.east) {$t_2$};
\node [anchor=west] at (t3.east) {$t_3$};
\node [anchor=east] at (t4.west) {$t_4$};

\draw [->,thick] (p1) to node [above] {} (t1);
\draw [->,thick] (p1) to node [above] {} (t2);

\draw [->,thick] (t2) to node [above right] {} (p2);
\draw [->,thick] (p2) to node [above left] {} (t1);

\draw [->,thick] (t1) to node [above] {} (p3);
\draw [->,thick] (p3) to node [above] {} (t2);

\draw [->,thick] (t2.north east) to node [above] {} (p5);
\draw [->,thick] (p5) to node [above] {} (t3);

\draw [->,thick] (t1.south east) to node [right] {} (p4);
\draw [->,thick] (p4) to node [above] {} (t3);

\draw [->,thick] (t3) to node [left] {} (p6);
\draw [->,thick,bend left=100] (t3.south) to node [below] {} (p1.south);

\draw [->,thick] (p6) to node [below] {} (t4);
\draw [->,thick] (t4) to node [below] {} (p1);

\end{tikzpicture}
\hspace*{1.5cm}
\begin{tikzpicture}[mypetristyle,scale=0.85]

\node (p1) at (-0.25,0) [place,tokens=2] {};
\node (p2) at (1.18,0) [place,tokens=2] {};
\node (p3) at (1.82,0) [place,tokens=0] {};
\node (p4) at (2.9,0.4) [place] {};
\node (p5) at (2.9,-0.4) [place,tokens=0] {};
\node (p6) at (3.2,1.75) [place,tokens=0] {};

\node [anchor=east] at (p1.west) {$p_1$};
\node [] at (p2.west) {$p_2~~~~~$};
\node [] at (p3.east) {$~~~~~p_3$};
\node [anchor=south] at (p4.north) {$p_4$};
\node [anchor=north] at (p5.south) {$p_5$};
\node [anchor=west] at (p6.east) {$p_6$};

\node (t1) at (1.5,1) [transition] {};
\node (t2) at (1.5,-1) [transition] {};
\node (t3) at (4.25,0) [transition] {};
\node (t4) at (0.75,1.75) [transition] {};

\node [anchor=west] at (t1.east) {$t_1$};
\node [anchor=west] at (t2.east) {$t_2$};
\node [anchor=west] at (t3.east) {$t_3$};
\node [anchor=east] at (t4.west) {$t_4$};

\draw [->,thick] (p1) to node [above] {} (t1);
\draw [->,thick] (p1) to node [above] {} (t2);

\draw [->,thick] (t2) to node [above right] {} (p2);
\draw [->,thick] (p2) to node [above left] {} (t1);

\draw [->,thick] (t1) to node [above] {} (p3);
\draw [->,thick] (p3) to node [above] {} (t2);

\draw [->,thick] (t2.north east) to node [above] {} (p5);
\draw [->,thick] (p5) to node [above] {} (t3);

\draw [->,thick] (t1.south east) to node [right] {} (p4);
\draw [->,thick] (p4) to node [above] {} (t3);

\draw [->,thick] (t3) to node [left] {} (p6);
\draw [->,thick,bend left=100] (t3.south) to node [below] {} (p1.south);

\draw [->,thick] (p6) to node [below] {} (t4);
\draw [->,thick] (t4) to node [below] {} (p1);

\end{tikzpicture}


\caption{
On the left, a live and reversible H$1$S-\WMGineq{} system.
Adding a token to $p_2$ leads to the non-live and non-reversible system on the right.
}
\label{NonMonotonicH1S}
\end{figure}
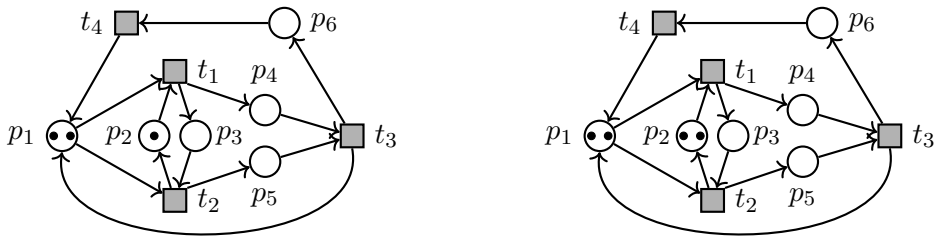

\noindent {\bf Deadlocked siphons.}
In various subclasses (see e.g.~\cite{STECS,HD2017}), 
non-liveness and deadlockability are tightly related to the existence 
of a reachable marking at which some \emph{siphon} is \emph{deadlocked}, defined as follows.
Formally, a siphon $D$ of a system $S=(N,M)$ is said to be \emph{deadlocked} if, 
for each place $p$ in $D$, for each $t \in p^\bullet$, $M(p) < W(p,t)$. 
Each transition having an input place in a deadlocked siphon can never be fired again.
Moreover, each transition having an output place in a deadlocked siphon necessarily has an input place in the same siphon,
hence can never be fired again.
%
Notice that a deadlocked siphon may have places without output that contain an arbitrary number of tokens.\\

In the rest of this section, 
we recall classical approaches for simplifying the analysis of reachability, liveness and reversibility,
and 
derive a new sufficient condition of siphon deadlockability.\\

\noindent {\bf Exploiting the state equation.}
For any weighted system $(N,M_0)$ with incidence matrix $I$,
an integer linear program (ILP) can be used as in~\cite{LHS1993} to check, 
for each minimal siphon $D$ of $N$, 
if a solution $(M,Y)$ to the state equation $M = M_0 + I \cdot Y$ exists
such that $D$ is deadlocked at $M$.

If such a solution exists, one cannot conclude that $M$ is indeed reachable.
If no such solution exists for any minimal siphon $D$,
then no siphon is deadlocked at any reachable marking,
which is necessary but not always sufficient for liveness;
in some subclasses, the non-deadlockability of all siphons is sufficient for liveness~\cite{STECS,Jiao2004}.
In such cases, solving the state equation with an ILP can be much more efficient than building the reachability graph,
even though the general ILP solving problem is NP-complete.

However, in addition to solving the (in)equalities describing the state equation,
it generally needs to compute a set of siphons, the cardinality of which grows exponentially 
in the number of places of the net, and, for each such siphon $D$, 
one must encode the fact that $D$ is deadlocked at $M$, which usually needs $|D|$ additional inequalities.
Thus, using a naive approach, one needs to solve an exponential number of ILPs, each of which has a number of inequalities 
at least equal to the number of places
and
a number of variables at least equal to the number of places and transitions.\\

\noindent {\bf Polynomial-time sufficient conditions.}
Despite intractability in the general case, 
polynomial-time sufficient conditions of monotonic liveness and reversibility exist for well-formed equal-conflict (EC, or HFC) nets
and
join-free (JF) nets (i.e.\ without synchronizations)~\cite{March09,HDM2014,HDM2016,HD2018},
simplifying the study of reachability.
Their efficiency comes from structural methods,
relating the structure (e.g.\ siphons, subnets, the incidence matrix, the initial marking...) 
to the behavior (i.e.\ to the reachability graph).\\

\noindent {\bf Dividing the marking in the unit-weighted case.}
For any system $(N,M_0)$, in case $M_0$ can be divided by some positive integer $k$,
the behavior of $(N,M_0)$ can sometimes be deduced from the behavior of $(N,M_0/k)$,
while checking the latter is generally much less costly.
This idea is related to \emph{separability}~\cite{BD2011,BHW18}
and 
to the \emph{homothetic markings} studied in~\cite{Homothetic2012}, 
a relationship with the fluidized version of the net being also investigated in the latter.

Before deriving a new simple sufficient condition for siphon deadlockability based on marking division,
we need the next notation.\\

\noindent {\bf Concatenation $\mathbf{\sigma^n}$:} 
For a sequence $\sigma$ and a positive integer~$n$, $\sigma^n$ denotes the concatenation of $\sigma$ taken $n$ times.

\begin{proposition}[Deadlocked siphons in unit-weighted systems]
Consider any system $S=(N,M_0)$ such that $M_0 = k \cdot M_0'$ for some positive integer $k$.
If some siphon $D$ of $N$ is deadlocked at some marking $M'$ reachable in $S'=(N,M_0')$,
then $D$ is deadlocked at $k \cdot M'$, which is reachable in $S$.
\end{proposition}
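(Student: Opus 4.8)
The statement has two parts: (i) scaling a firing sequence, and (ii) preservation of being a deadlocked siphon under scaling of the marking. The key observation is a simple linearity/monotonicity fact about firing in Petri nets: if $M_0'[\sigma\rangle M'$ in $S' = (N, M_0')$, then $k \cdot M_0' \, [\sigma^k\rangle \, k \cdot M'$ in $S = (N, k \cdot M_0')$. I would prove this first, since part (ii) then follows almost immediately.

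\textbf{Step 1: scaling the run.} Let $\sigma$ be a sequence with $M_0'[\sigma\rangle M'$, say $\sigma = t_1 \ldots t_n$ with intermediate markings $M_0' = N_0 [t_1\rangle N_1 \ldots [t_n\rangle N_n = M'$. I claim $k \cdot M_0' \,[\sigma^k\rangle\, k \cdot M'$. The cleanest argument: first show $k \cdot M_0' \,[\sigma\rangle\, k \cdot M_0' + I \cdot \Parikh(\sigma)$. Each prefix $t_1 \ldots t_j$ of $\sigma$ leads, from $M_0'$, to $N_j = M_0' + I \cdot \Parikh(t_1 \ldots t_j) \ge \zero$, and the enabledness of $t_{j+1}$ there means $N_j(p) \ge W(p, t_{j+1})$ for all $p \in {}^\bullet t_{j+1}$. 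From $k \cdot M_0'$, the same prefix (whenever fireable so far) reaches $k \cdot M_0' + I \cdot \Parikh(t_1 \ldots t_j) = N_j + (k-1) M_0' \ge N_j$ (using $M_0' \ge \zero$), so every required inequality $\ge W(p,t_{j+1})$ still holds; hence $\sigma$ is fireable from $k \cdot M_0'$ and reaches $k \cdot M_0' + I \cdot \Parikh(\sigma) = M' + (k-1) M_0'$. Now iterate: firing $\sigma$ a second time is legal because the reached marking $M' + (k-1)M_0'$ dominates $M_0'$ again by the same bookkeeping (it equals $M_0' + I \cdot \Parikh(\sigma) + (k-1)M_0'$, and at each prefix the marking is $N_j + (k-1)M_0' \ge N_j \ge W(\cdot, t_{j+1})$); after $i$ rounds the marking is $M_0' + i\, I\cdot\Parikh(\sigma) + (k-1)M_0'$, wait — more carefully, after the $i$-th full copy the marking is $i \cdot M' - (i-1) M_0' + (k-1) M_0'$; for $i \le k$ this is $\ge M_0'$, so the $(i+1)$-th copy fires, and after $k$ copies we land exactly at $k M' = k M_0' + k\, I \cdot \Parikh(\sigma)$. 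So $k \cdot M_0' \,[\sigma^k\rangle\, k \cdot M'$, and in particular $k \cdot M'$ is reachable in $S$. (An induction on the number of copies, using the invariant "current marking $\ge M_0'$ until the last copy", is the tidy way to write this.)

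\textbf{Step 2: the siphon stays deadlocked.} Suppose $D$ is deadlocked at $M'$, i.e.\ for every $p \in D$ and every $t \in p^\bullet$ we have $M'(p) < W(p,t)$. Then $(k \cdot M')(p) = k \cdot M'(p) < k \cdot W(p,t)$. This is not quite what we want — we need $(k\cdot M')(p) < W(p,t)$, not $< k\,W(p,t)$. So the naive bound is too weak and we must use that $D$ is a \emph{siphon} together with a sharper fact. The right fact is: at a marking reachable from $k \cdot M_0'$, a deadlocked-siphon test only cares about places \emph{with} an output; and for a siphon $D$, since $^\bullet D \subseteq D^\bullet$, no transition can put tokens into $D$ without also taking tokens out — but that governs dynamics, not the static count at a single marking. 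Hmm: actually the clean route is to not pass through inequalities on $k\cdot M'$ abstractly but to observe directly that since $k \cdot M'$ is the marking reached, and $D$ was deadlocked at $M'$ meaning the $D$-transitions were all disabled there, one uses that the relevant quantity is the restriction of the marking to $D$. Reconsider: the statement only claims "$D$ is deadlocked at $k \cdot M'$" — for this we genuinely need $(k M')(p) < W(p,t)$ for $p \in D$, $t \in p^\bullet$. This forces $k \cdot M'(p) < W(p,t)$, which for $k \ge 2$ is \emph{stronger} than $M'(p) < W(p,t)$. So either the intended reading is that $W$-weights on $D$-output arcs are also being divided, or — more likely — the proposition is implicitly about \emph{unit-weighted} systems (as its title says!): then $W(p,t) = 1$, so $M'(p) < 1$ means $M'(p) = 0$, hence $(kM')(p) = 0 < 1 = W(p,t)$, and $D$ is deadlocked at $k M'$.

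\textbf{Main obstacle.} The real content is Step 1 (the scaling of the run), which is a bounded-iteration argument needing the invariant that the running marking dominates $M_0'$ through the first $k-1$ copies; this is where care is needed. Step 2 is immediate \emph{once one uses unit-weightedness} (the title of the proposition), reducing "deadlocked at $p$" to "$M'(p) = 0$", which scales trivially. I would phrase Step 2 explicitly via: $D$ deadlocked at $M'$ in a unit-weighted net $\iff$ $M'(p) = 0$ for all $p \in D$ with $p^\bullet \ne \es$ $\implies$ $(k M')(p) = 0$ for those $p$ $\iff$ $D$ deadlocked at $k M'$; and combine with Step 1 to conclude $k M'$ is reachable in $S$.
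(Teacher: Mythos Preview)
Your proof is correct and follows essentially the same approach as the paper: show that $\sigma^k$ is feasible from $k\cdot M_0'$ and reaches $k\cdot M'$, then observe that in a unit-weighted net a deadlocked siphon has all its output-bearing places empty, a condition preserved under multiplication by $k$. The paper dispatches Step~1 with a single ``clearly feasible,'' whereas you spell out the domination argument (marking after $i$ copies and a prefix equals $(k-i-1)M_0' + iM' + N_j \ge N_j$); your write-up is messier than it needs to be (the exploratory detour in Step~2 and the slight slip ``for $i \le k$'' where you mean $i \le k-1$), but the substance is right and you correctly isolate unit-weightedness as needed only for Step~2.
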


\begin{proof}
Suppose that a sequence $\sigma$ feasible in $S'$ leads to a marking $M'$
such that some siphon $D$ is deadlocked in $(N,M')$. 
It means that, for each place $p$ of $D$,
either $p$ has some output and is empty at $M'$, or $p$ has no output.
In $S$, $\sigma^k$ is clearly feasible, leading to the marking $M = k \cdot M'$ at which $D$ is deadlocked,
since each place empty at $M'$ remains empty at $M$ and each other place (if any) of $D$ has no output.
\end{proof}

When the marking is not divisible, we need to exploit other approaches.
In the next section, we introduce notions dealing with reachability and liveness in weighted Petri nets,
and prove a new related property.
Then, 
in the rest of this paper,
we develop new and more efficient checking techniques for the H$1$S class.

\section{Directedness and strong liveness}\label{PotentialReachDirected}

In this section, we first introduce in Subsection~\ref{SubsecDirected}
the notion of {\em directedness} of the potential reachability graph,
with variants,
taking inspiration from~\cite{STECS,DSSP98}; these can be seen as kinds of confluence.
Then, in Subsection~\ref{SubsecDirLiveSubclasses}, we introduce \emph{strong liveness}
and prove a new general property relating initial directedness to strong liveness.
%
We also show that the converse of this property does not hold in the unit-weighted asymmetric-choice class.
Finally, in Subsection~\ref{SubsecPersistent}, we define \emph{persistent} systems
and
recall Keller's theorem, which applies to this class and is also a kind of confluence.

\subsection{Directedness and variants}\label{SubsecDirected}

\begin{definition}[Directedness of the potential reachability graph]
Let us consider any system $S = (N,M_0)$ and its potential reachability graph $PRG(S)$:\\
$-$ $PRG(S)$ is \emph{directed} if every two potentially reachable markings have a common reachable marking.
More formally: $\forall M_1, M_2 \in PR(S)$: $R((N,M_1)) \cap R((N,M_2)) \neq \emptyset$.\\
$-$ $PRG(S)$ is \emph{initially directed} if $\forall M_1 \in PR(S): R(S) \cap R((N,M_1)) \neq \emptyset$.
\qed
\end{definition}

The directedness of $PRG(S)$ is called {\em structural directedness} in~\cite{LenderProc2006}.
We shall also consider the particular case of directedness restricted to the reachability graph,
i.e.\ when every two reachable markings have a common reachable marking.
Figure~\ref{FigDefDirectedness} illustrates these properties, where $M$ denotes some common reachable marking.

\begin{figure}[!h]

\centering

\begin{tabular}{ccc}
 
\begin{minipage}{0.3\linewidth}

\centering

\begin{tikzpicture}[scale=0.7]
\node[ltsNode,label=below:$M_0$](n0)at(1,0){};
\node[ltsNode,label=left:$M_1$](n1)at(0,1){};
\node[ltsNode,label=right:$M_2$](n2)at(2,1){};
\node[ltsNode,label=above:$M$](n3)at(1,2){};
\draw[-{>[scale=2.5,length=2,width=2]},dashed](n0)to node[auto,swap]{}(n1);
\draw[-{>[scale=2.5,length=2,width=2]},dashed](n0)to node[auto,swap]{}(n2);
\draw[-{>[scale=2.5,length=2,width=2]}](n1)to node[auto,swap]{}(n3);
\draw[-{>[scale=2.5,length=2,width=2]}](n2)to node[auto,swap]{}(n3);
\end{tikzpicture}\\
\centering
\footnotesize
Directedness of $PRG(S)$
\end{minipage}
&
\begin{minipage}{0.3\linewidth}
 \centering

\begin{tikzpicture}[scale=0.7]
\node[ltsNode,label=below:$M_0$](n0)at(1,0){};
\node[ltsNode,label=left:$M_1$](n1)at(0,1){};
\node[ltsNode,label=right:$M_2$](n2)at(2,1){};
\node[ltsNode,label=above:$M$](n3)at(1,2){};
\draw[-{>[scale=2.5,length=2,width=2]}](n0)to node[auto,swap]{}(n1);
\draw[-{>[scale=2.5,length=2,width=2]}](n0)to node[auto,swap]{}(n2);
\draw[-{>[scale=2.5,length=2,width=2]}](n1)to node[auto,swap]{}(n3);
\draw[-{>[scale=2.5,length=2,width=2]}](n2)to node[auto,swap]{}(n3);
\end{tikzpicture}\\
\centering
\footnotesize
Directedness of $RG(S)$
\end{minipage}
&
\begin{minipage}{0.3\linewidth}

\begin{tikzpicture}[scale=0.7]
\node[ltsNode,label=below:$M_0$](n0)at(1,0){};
\node[ltsNode,label=left:$M_1$](n1)at(0,1){};
\node[ltsNode,label=above:$M$](n3)at(1,2){};
\draw[-{>[scale=2.5,length=2,width=2]},dashed](n0)to node[auto,swap]{}(n1);
\draw[-{>[scale=2.5,length=2,width=2]}](n0)to node[auto,swap]{}(n3);
\draw[-{>[scale=2.5,length=2,width=2]}](n1)to node[auto,swap]{}(n3);
\end{tikzpicture}\\
\centering
\footnotesize
Initial directedness of $PRG(S)$
\end{minipage}
\end{tabular}

\caption{
Variants of directedness.
}
\label{FigDefDirectedness}
\end{figure}
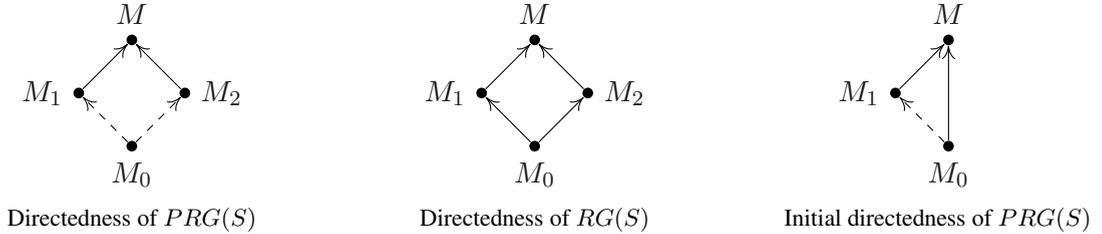

Directedness implies initial directedness, but the converse is not true, as shown in Figure~\ref{InitDirNotDir}.

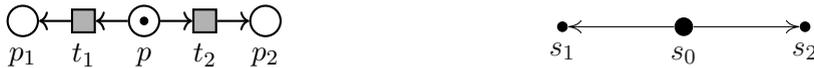
\begin{figure}[!h]
\centering
\begin{minipage}{0.45\linewidth}
\centering
\raisebox{0mm}{
\begin{tikzpicture}[scale=0.8,mypetristyle]

\node (p) at (0,0) [place,tokens=1] {};
\node (p1) at (-2,0) [place] {};
\node (p2) at (2,0) [place] {};

\node [anchor=north] at (p.south) {$p$};
\node [anchor=north] at (p1.south) {$p_1$};
\node [anchor=north] at (p2.south) {$p_2$};

\node (t1) at (-1,0) [transition,thick] {};
\node (t2) at (1,0) [transition,thick] {};

\node [anchor=north] at (t1.south) {$t_1$};
\node [anchor=north] at (t2.south) {$t_2$};

\draw [->,thick, bend left=0] (p) to node [] {} (t1);
\draw [->,thick, bend left=0] (t1) to node [] {} (p1);
\draw [->,thick, bend left=0] (p) to node [] {} (t2);
\draw [->,thick, bend right=0] (t2) to node [] {} (p2);

\end{tikzpicture}
}
\end{minipage}
%
%
\begin{minipage}{0.45\linewidth}
\centering

\begin{tikzpicture}[scale=0.8]
\node[ltsNode,minimum width=7pt,label=below:$s_0$](s0)at(0,0){};
\node[ltsNode,label=below:$s_1$](s1)at(-2,0){};
\node[ltsNode,label=below:$s_2$](s2)at(2,0){};
\draw[-{>[scale=2.5,length=2,width=2]}](s0)to node[auto,swap]{}(s1);
\draw[-{>[scale=2.5,length=2,width=2]}](s0)to node[auto,swap]{}(s2);
\end{tikzpicture}

\end{minipage}

\caption{
On the left, a system $S$. The LTS on the right represents both $RG(S)$ and $PRG(S)$.
The latter is initially directed but not directed.
}

\label{InitDirNotDir}

\end{figure}

\subsection{Directedness, (strong) liveness and HAC systems}\label{SubsecDirLiveSubclasses}

We introduce the new notion of \emph{strong liveness}.

\begin{definition}[Strong liveness]
A system $S=(N,M_0)$ is {\em strongly live} if, for each marking $M\in PR(S)$, $(N,M)$ is live.
\end{definition}

We derive next lemma. 

\begin{lemma}[Initial directedness and strong liveness]\label{LiveAndDirected}
Consider a live system $S$. 
If $PRG(S)$ is initially directed, then $S$ is strongly live.
\end{lemma}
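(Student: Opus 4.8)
The plan is to prove the contrapositive-free direct statement by a confluence argument. Take any $M \in PR(S)$; I must show $(N,M)$ is live, i.e.\ for every $M'' \in R((N,M))$ and every transition $t$, some marking in $R((N,M''))$ enables $t$. First I would observe that $M'' \in PR(S)$ as well, since $M$ is potentially reachable (there is a $Y$ with $M = M_0 + I\cdot Y$) and firing a sequence $\sigma$ from $M$ to $M''$ adds $I\cdot\Parikh(\sigma)$, so $M'' = M_0 + I\cdot(Y+\Parikh(\sigma))$ with a non-negative T-vector; hence $PR(S)$ is closed under reachability within any $(N,M_1)$ for $M_1 \in PR(S)$. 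So it suffices to prove the claim for $M \in PR(S)$ directly: every transition $t$ can be enabled from some marking reachable from $M$.

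Now I would invoke initial directedness: since $M \in PR(S)$, there is a common marking $\bar M \in R(S) \cap R((N,M))$. Fix such an $\bar M$ and a firing sequence $M [\rho\rangle \bar M$. Because $S$ is live and $\bar M \in R(S)$, for the given transition $t$ there is a marking $M^* \in R((N,\bar M))$ with $M^*[t\rangle$. Concatenating, $M [\rho\rangle \bar M [\cdots\rangle M^*$, so $M^* \in R((N,M))$ and $M^*$ enables $t$. Since $t$ was arbitrary, $(N,M)$ is live. The only subtlety is handling the general reachability closure: for arbitrary $M'' \in R((N,M))$ I apply exactly the same argument with $M''$ in place of $M$, using that $M'' \in PR(S)$ (established in the first step) so that initial directedness gives a common marking in $R(S) \cap R((N,M''))$, and then liveness of $S$ finishes it.

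The key steps in order: (1) show $PR(S)$ is closed under firing, i.e.\ $M \in PR(S)$ and $M[\sigma\rangle M''$ imply $M'' \in PR(S)$, via the state equation; (2) for $M \in PR(S)$ and target transition $t$, use initial directedness to get $\bar M \in R(S) \cap R((N,M))$; (3) use liveness of $S$ at $\bar M$ to reach a marking enabling $t$, and concatenate the witnessing sequences; (4) generalize from $M$ to any $M'' \in R((N,M))$ using (1).

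I do not expect a serious obstacle here — the argument is essentially a diagram chase. The one place needing care is making sure the quantifiers in the definition of liveness are discharged correctly: liveness of $(N,M)$ requires that \emph{from every} marking reachable from $M$ one can eventually enable $t$, which is why step (4) (closure of $PR(S)$ under reachability) is genuinely needed rather than a convenience — without it, initial directedness only gives a common marking with $R(S)$ starting from $M$ itself, not from an arbitrary descendant of $M$. Once that closure is in hand, the rest is routine concatenation of firing sequences.
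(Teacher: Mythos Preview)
Your proposal is correct and follows essentially the same approach as the paper's own proof: both use closure of $PR(S)$ under reachability (which the paper invokes implicitly when asserting that $M_1'$ is potentially reachable), then apply initial directedness to obtain a common marking with $R(S)$, and finally use liveness of $S$ at that common marking. The only cosmetic difference is that the paper argues by contradiction (supposing some transition is dead at a descendant marking and deriving a contradiction), whereas you give the direct form; the underlying diagram chase is identical.
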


\begin{proof}
Consider any potentially reachable marking $M_1$.
Suppose that $M_1$ is not live.
Consequently, there is a marking $M_1'$ reachable from $M_1$ at which some transition $t$ is dead.
Since $M_1'$ is potentially reachable from $M_0$
and 
since $PRG(S)$ is initially directed,
some marking $M$ is reachable from both $M_0$ and $M_1'$.
Since $M_0$ is live, $M$ is also live.
Since the live marking $M$ is reachable from $M_1'$, $t$ is not dead at the latter, a contradiction. Thus, $M_1$ is live.
\end{proof}

In the following, we recall that the converse holds in the class of live HFC systems
and
we show that it does not hold in live unit-weighted AC systems.
Later on, in Section~\ref{PropertiesH1SWMG}, we will show that the converse holds in the live H$1$S-\WMGineq{} class,
which is a subclass of the HAC systems.\\

\noindent {\bf Directedness and strong liveness of live HFC systems.} 
Each live HFC system is strongly live (Corollary~13 of~\cite{STECS}, where HFC nets are called Equal-Conflict).
Moreover, 
the potential reach\-ability graph of each live HFC system is directed (hence also initially directed), as proven in~\cite{LHS1993}.\\

\noindent {\bf Liveness of HAC systems.} 
Theorem~3.1 in~\cite{Jiao2004} states the following for any HAC system $S$:
a transition $t$ is non-live in $S$ iff some siphon $D$ is deadlocked at some reachable marking and $t$ has an input place in $D$.
Since the H$1$S systems form a subclass of the HAC systems, they benefit from this result.
Notice that if some siphon $D$ is deadlocked at some marking $M$, then $D$ includes a minimal siphon $D'$ deadlocked at $M$. 
We deduce the next reformulation.

\begin{proposition}[Liveness of HAC (and H$1$S) systems \cite{Jiao2004}]\label{LivenessOfHAC}
Consider a connected (and non-trivial) HAC system.
It is live iff no minimal siphon is deadlocked at any reachable marking.
\end{proposition}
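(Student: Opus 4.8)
The plan is to derive Proposition~\ref{LivenessOfHAC} directly from Theorem~3.1 of~\cite{Jiao2004} together with the elementary remark about minimal siphons that immediately precedes the statement, so the whole argument is a short double-implication on the structural level. First I would unwind the stated Theorem~3.1: for an HAC system $S$, a transition $t$ is non-live iff there is a reachable marking at which some siphon $D$ is deadlocked and $t$ has an input place in $D$. Since $S$ is live exactly when \emph{no} transition is non-live, negating the right-hand side over all $t$ gives: $S$ is live iff there is no reachable marking at which some siphon $D$ is deadlocked and some transition has an input place in $D$. The only subtlety is that a deadlocked siphon $D$ that contains no output transition at all (i.e.\ $D^\bullet = \emptyset$) would be vacuously ``harmless''; but such a $D$ would satisfy $\lbul D \subseteq D^\bullet = \emptyset$, forcing $\lbul D = \emptyset$, and combined with connectedness and non-triviality this degenerate case is excluded, so every siphon has at least one output transition, hence at least one transition with an input place in it.

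Next I would handle the passage from ``some siphon'' to ``some minimal siphon.'' The forward direction ($\Leftarrow$) is trivial: a minimal siphon is a siphon, so if no minimal siphon is deadlocked at any reachable marking, in particular the reachability-marking-plus-deadlocked-siphon configuration of Theorem~3.1 cannot occur, hence $S$ is live. For the reverse direction ($\Rightarrow$), I would argue the contrapositive: suppose some minimal siphon $D'$ is deadlocked at a reachable marking $M$; then $D'$ is itself a (possibly non-minimal, but here minimal) siphon deadlocked at $M$, and since $D'$ has an output transition $t$ (by the non-degeneracy above), $t$ has an input place in $D'$, so by Theorem~3.1 $t$ is non-live and $S$ is not live. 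Conversely, if some (arbitrary) siphon $D$ is deadlocked at a reachable marking $M$, then by the observation stated just before the proposition, $D$ contains a minimal siphon $D'$ that is also deadlocked at $M$ — this uses the fact that deadlockedness is inherited by subsiphons, since every place of $D'\subseteq D$ is in particular a place of $D$ and therefore under-marked for all its outputs at $M$. So the existence of a deadlocked siphon at a reachable marking is equivalent to the existence of a deadlocked \emph{minimal} siphon at a reachable marking, and we may replace ``siphon'' by ``minimal siphon'' throughout.

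Putting these together: $S$ live $\iff$ no transition is non-live $\iff$ (by Theorem~3.1) no reachable marking deadlocks a siphon whose outputs are thereby blocked $\iff$ no reachable marking deadlocks any siphon $\iff$ no reachable marking deadlocks any minimal siphon. The main obstacle — really the only place requiring care — is the bookkeeping around siphons with no output transition and around connectedness/non-triviality, to be sure that ``$D$ deadlocked'' genuinely entails ``some transition with an input in $D$ is blocked,'' so that the quantifier over transitions in Theorem~3.1 can be suppressed cleanly; everything else is immediate from the cited result and the preceding remark.
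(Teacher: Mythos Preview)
Your proposal is correct and follows exactly the route the paper takes: the proposition is presented there as a direct reformulation of Theorem~3.1 in~\cite{Jiao2004} combined with the preceding remark that every deadlocked siphon contains a deadlocked minimal siphon, and you have simply spelled out the two implications that the paper leaves implicit. One small organizational slip: in your $(\Leftarrow)$ paragraph the sentence ``a minimal siphon is a siphon'' points the wrong way for the implication you need there; the actual justification (which you do give a few lines later) is the converse observation that a deadlocked siphon contains a deadlocked minimal one, so that ``no minimal siphon deadlocked'' forces ``no siphon deadlocked.'' This is cosmetic, not a gap --- all the necessary pieces are present.
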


\noindent {\bf A property of consistent systems.} 
We obtain the next simple lemma.

\begin{lemma}[Consistency and potential reachability]\label{LemPropConsistent}
For any consistent system $S=(N,M_0)$, let $M \in PR(S)$.
Then $M_0 \in PR((N,M))$.
\end{lemma}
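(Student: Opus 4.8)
The plan is to exploit the consistency vector to ``reverse'' the move witnessing potential reachability. By hypothesis $S$ is consistent, so there is a T-semiflow $Y_0 \in \N^{|T|}$ with $Y_0 \ge \one^{|T|}$ and $I \cdot Y_0 = \zero$. Also, $M \in PR(S)$ gives a T-vector $Y \in \N^{|T|}$ with $M = M_0 + I \cdot Y$, equivalently $M_0 = M - I \cdot Y$. The obstacle is that $-Y$ is not a legitimate T-vector, so I cannot directly read off $M_0 \in PR((N,M))$ from this identity; I need a \emph{non-negative} T-vector $Y'$ with $I \cdot Y' = -I \cdot Y$.

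First I would pick an integer $n$ large enough that $n \cdot Y_0 - Y \ge \zero^{|T|}$ componentwise. This is possible because $Y_0 \ge \one^{|T|}$, so taking $n \ge \max_{t \in T} Y(t)$ already forces $n \cdot Y_0(t) - Y(t) \ge n - Y(t) \ge 0$ for every transition $t$. Then set $Y' \eqdef n \cdot Y_0 - Y \in \N^{|T|}$.

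Next I would compute $I \cdot Y' = n \,(I \cdot Y_0) - I \cdot Y = -\,I \cdot Y$, using $I \cdot Y_0 = \zero$. Hence $M + I \cdot Y' = M + (M_0 - M) = M_0$, which exhibits $M_0$ as a solution of the state equation of $(N,M)$ with witness $Y' \in \N^{|T|}$; that is, $M_0 \in PR((N,M))$, as required.

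There is essentially no hard step here: the argument is a short linear-algebra manipulation, and the only point requiring care is the choice of $n$ guaranteeing non-negativity of $Y'$, which is exactly what the consistency assumption $Y_0 \ge \one^{|T|}$ provides. (Note that $M \in \N^{|P|}$ automatically, so no further positivity check on $M_0$ is needed.)
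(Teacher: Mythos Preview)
Your proof is correct and essentially identical to the paper's: both pick a consistency vector, scale it by a large enough integer so that subtracting the witness $Y$ leaves a non-negative T-vector, and plug this into the state equation of $(N,M)$ to recover $M_0$. The only differences are notational ($Y_0,n,Y'$ versus the paper's $\mathcal{Y},k,k\cdot\mathcal{Y}-Y$).
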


\begin{proof}
Denote by $\mathcal{Y}$ any consistency vector of $N=(P,T,W)$ (implying $I\cdot \mathcal{Y} = 0$ and $\support(\mathcal{Y}) = T$) 
and by $I$ the incidence matrix of $N$.
Since $M \in PR(S)$, there exists a T-vector $Y$ such that $M = M_0 + I\cdot Y$.
We can choose a sufficiently large positive integer $k$ such that $k\cdot \mathcal{Y} \ge Y$.
We have: $M_0 = M - I\cdot Y + I\cdot \mathcal{Y}\cdot k = M + I\cdot (k\cdot \mathcal{Y} - Y)$.
Thus, $M_0 \in PR((N,M))$.
\end{proof}

Consequently, all the potentially reachable markings of a consistent system $S=(N,M_0)$
are mutually potentially reachable,
i.e.\ $\forall M,M' \in PR(S)$, $M' \in PR((N,M))$ and $M \in PR((N,M'))$.\\

\noindent {\bf Counter-example to the converse of Lemma~\ref{LiveAndDirected}.}
Using Figure~\ref{LBOACwoHS}, which is extracted from \cite{DesEsp} (Fig.\ 8.3, p.\ 174), 
we show that the converse of Lemma~\ref{LiveAndDirected} does not hold, even in the class of unit-weighted AC systems.
We denote the system by $S=(N,M_0)$ and proceed as follows.\\
In $S$, no minimal siphon is deadlocked at any potentially reachable marking,
allowing to use Proposition~\ref{LivenessOfHAC}, 
from which we deduce the liveness of each potentially reachable marking
hence strong liveness. Consequently, each potentially reachable marking is also strongly live.\\
We select a (strongly live) marking $M \in R(S)$, defining the strongly live system $S'=(N,M)$.
Since $S$ is consistent, Lemma~\ref{LemPropConsistent} applies: in particular, all its reachable markings are mutually potentially reachable.
Using this fact, we select a marking $M' \in R(S) \cap PR(S')$ 
such that $S'$ and $(N,M')$ do not have any reachable marking in common.
Thus, $PRG(S')$ is not initially directed.
Notice that $S'$ is even bounded and reversible.

\begin{figure}[!h]

\centering

\begin{tikzpicture}[mypetristyle,scale=0.90]

\node (p0) at (0,10.3) [place,thick,tokens=1,label=above:$p_0$] {};
\node (p1) at (0,9) [place,thick,label=above:$p_1$] {};
\node (p2) at (0,7.8) [place,thick,tokens=1,label=above:$p_2$] {};
\node (p3) at (0,7.2) [place,thick,label=below:$p_3$] {};
\node (p4) at (0,6) [place,thick,label=below:$p_4$] {};
\node (p5) at (0,4.8) [place,thick,tokens=1,label=below:$p_5$] {};
\node (p6) at (-2,2.6) [place,thick,label=below:$p_6$] {};
\node (p7) at (2,2.6) [place,thick,label=below:$p_7$] {};
\node (p8) at (-2,4.8) [place,thick,label=right:$p_8$] {};
\node (p9) at (2,4.8) [place,thick,label=left:$p_9$] {};
\node (p10) at (-2,7.5) [place,thick,label=above:$~~~~~~~p_{10}$] {};
\node (p11) at (2,7.5) [place,thick,label=above:$p_{11}~~~~~~~$] {};
\node (p12) at (-3,6) [place,thick,tokens=1,label=left:$p_{12}$] {};
\node (p13) at (3,6) [place,thick,tokens=1,label=right:$p_{13}$] {};

\node (t0) at (-2,9) [transition,thick] {};
\node (t1) at (2,9) [transition,thick] {};
\node (t2) at (-0.85,7.5) [transition,thick] {};
\node (t3) at (0.85,7.5) [transition,thick] {};
\node (t4) at (-2,3.6) [transition,thick] {};
\node (t5) at (2,3.6) [transition,thick] {};
\node (t6) at (-2,6) [transition,thick] {};
\node (t7) at (2,6) [transition,thick] {};

\node [anchor=south] at (t0.north) {$t_0$};
\node [anchor=south] at (t1.north) {$t_1$};
\node [anchor=north] at (t2.south) {$t_2$};
\node [anchor=north] at (t3.south) {$t_3$};
\node [anchor=east] at (t4.west) {$t_4$};
\node [anchor=west] at (t5.east) {$t_5$};
\node [anchor=east] at (t6.west) {$t_6$};
\node [anchor=west] at (t7.east) {$t_7$};

\draw [->,thick] (t0) to node [below] {} (p1);
\draw [->,thick] (t1) to node [below] {} (p1);
\draw [->,thick] (p0) to node [left] {} (t0);
\draw [->,thick] (p0) to node [right] {} (t1);
\draw [->,thick] (p1) to node [left] {} (t2);
\draw [->,thick] (p1) to node [right] {} (t3);
\draw [->,thick] (t2) to node [below] {} (p4);
\draw [->,thick] (t3) to node [below] {} (p4);
\draw [->,thick] (p2) to node [left] {} (t2);
\draw [->,thick] (t2) to node [left] {} (p3);
\draw [->,thick] (p3) to node [left] {} (t3);
\draw [->,thick] (t3) to node [left] {} (p2);
\draw [->,thick] (t0) to node [left] {} (p10);
\draw [->,thick] (p10) to node [left] {} (t6);
\draw [->,thick] (t6) to node [left] {} (p0);
\draw [->,thick] (t1) to node [left] {} (p11);
\draw [->,thick] (p11) to node [left] {} (t7);
\draw [->,thick] (t7) to node [left] {} (p0);
\draw [->,thick] (p4) to node [left] {} (t6);
\draw [->,thick] (p4) to node [left] {} (t7);
\draw [->,thick] (p5) to node [left] {} (t6);
\draw [->,thick] (p5) to node [left] {} (t7);
\draw [->,thick] (t4) to node [left] {} (p5);
\draw [->,thick] (t5) to node [left] {} (p5);
\draw [->,thick] (t6) to node [left] {} (p8);
\draw [->,thick] (p8) to node [left] {} (t4);
\draw [->,thick] (t7) to node [left] {} (p9);
\draw [->,thick] (p9) to node [left] {} (t5);
\draw [->,thick] (t4) to node [left] {} (p12);
\draw [->,thick] (p12) to node [left] {} (t0);
\draw [->,thick] (t5) to node [left] {} (p13);
\draw [->,thick] (p13) to node [left] {} (t1);
\draw [->,thick,bend left=90] (t0) .. controls (-5.5,7.5) and (-5,-1.5) .. (p7);
\draw [->,thick] (p7) to node [left] {} (t4);
\draw [->,thick,bend left=90] (t1) .. controls (5.5,7.5) and (5,-1.5) .. (p6);
\draw [->,thick] (p6) to node [left] {} (t5);
\draw [->,thick,bend right=30] (p6) to node [left] {} (t4);
\draw [->,thick,bend right=30] (t4) to node [left] {} (p6);
\draw [->,thick,bend right=30] (p7) to node [left] {} (t5);
\draw [->,thick,bend right=30] (t5) to node [left] {} (p7);

\end{tikzpicture}
\raisebox{2cm}{
\begin{tikzpicture}[scale=1]
\node[ltsNode,minimum width=7pt,label=right:${s_0=M_0}$](s0)at(0,5.5){};
\node[ltsNode,label=right:{$s_1=M'$}](s1)at(0,6.25){};
\node[ltsNode,label=right:{$s_2=M$}](s2)at(0,4.75){};
\node[ltsNode,label=left:$s_3$](s3)at(-1,7.25){};
\node[ltsNode,label=right:$s_4$](s4)at(1,3.75){};
\node[ltsNode,label=left:$s_5$](s5)at(-1.25,8){};
\node[ltsNode,label=right:$s_6$](s6)at(1.25,3){};
\node[ltsNode,label=left:$s_7$](s7)at(-1,8.75){};
\node[ltsNode,label=right:$s_8$](s8)at(1,2.25){};
\node[ltsNode,label=above:$s_9$](s9)at(0,9){};
\node[ltsNode,label=above:$s_{10}$](s10)at(0,9.75){};
\node[ltsNode,label=below:$s_{11}$](s11)at(0,2){};
\node[ltsNode,label=below:$s_{12}$](s12)at(0,1.25){};
\node[ltsNode,label=right:$s_{13}$](s13)at(1,8.75){};
\node[ltsNode,label=left:$s_{14}$](s14)at(-1,2.25){};
\node[ltsNode,label=right:$s_{15}$](s15)at(1.25,8){};
\node[ltsNode,label=left:$s_{16}$](s16)at(-1.25,3){};
\node[ltsNode,label=right:$s_{17}$](s17)at(1,7.25){};
\node[ltsNode,label=left:$s_{18}$](s18)at(-1,3.75){};
\node[ltsNode,label=below:$s_{19}$](s19)at(0,7){};
\node[ltsNode,label=above:$s_{20}$](s20)at(0,4){};
\draw[-{>[scale=2.5,length=2,width=2]}](s0)to node[left]{$t_1$}(s1);
\draw[-{>[scale=2.5,length=2,width=2]}](s0)to node[left]{$t_0$}(s2);
\draw[-{>[scale=2.5,length=2,width=2]}](s2)to node[right]{$~t_2$}(s4);
\draw[-{>[scale=2.5,length=2,width=2]}](s4)to node[left]{$t_6$}(s6);
\draw[-{>[scale=2.5,length=2,width=2]}](s6)to node[left]{$t_1$}(s8);
\draw[-{>[scale=2.5,length=2,width=2]}](s8)to node[right]{$~t_3$}(s12);
\draw[-{>[scale=2.5,length=2,width=2]}](s8)to node[above]{$t_4$}(s11);
\draw[-{>[scale=2.5,length=2,width=2]}](s11)to node[above]{$t_3$}(s14);
\draw[-{>[scale=2.5,length=2,width=2]}](s12)to node[left]{$t_4~$}(s14);
\draw[-{>[scale=2.5,length=2,width=2]}](s14)to node[right]{$t_7$}(s16);
\draw[-{>[scale=2.5,length=2,width=2]}](s16)to node[right]{$t_0$}(s18);
\draw[-{>[scale=2.5,length=2,width=2]}](s18)to node[below]{$t_2$}(s20);
\draw[-{>[scale=2.5,length=2,width=2]}](s20)to node[below]{$t_5$}(s4);
\draw[-{>[scale=2.5,length=2,width=2]}](s18)to node[left]{$t_5~$}(s2);
\draw[-{>[scale=2.5,length=2,width=2]}](s18)to node[below]{$t_2$}(s20);
\draw[-{>[scale=2.5,length=2,width=2]}](s20)to node[below]{$t_5$}(s4);

\draw[-{>[scale=2.5,length=2,width=2]}](s1)to node[left]{$t_2~$}(s3);
\draw[-{>[scale=2.5,length=2,width=2]}](s19)to node[above]{$t_4$}(s3);
\draw[-{>[scale=2.5,length=2,width=2]}](s17)to node[right]{$~t_4$}(s1);
\draw[-{>[scale=2.5,length=2,width=2]}](s17)to node[above]{$t_2$}(s19);
\draw[-{>[scale=2.5,length=2,width=2]}](s3)to node[right]{$t_7$}(s5);
\draw[-{>[scale=2.5,length=2,width=2]}](s5)to node[right]{$t_0$}(s7);
\draw[-{>[scale=2.5,length=2,width=2]}](s7)to node[below]{$t_5$}(s9);
\draw[-{>[scale=2.5,length=2,width=2]}](s7)to node[left]{$t_3~$}(s10);
\draw[-{>[scale=2.5,length=2,width=2]}](s10)to node[right]{$~t_5$}(s13);
\draw[-{>[scale=2.5,length=2,width=2]}](s9)to node[below]{$t_3$}(s13);
\draw[-{>[scale=2.5,length=2,width=2]}](s13)to node[left]{$t_6$}(s15);
\draw[-{>[scale=2.5,length=2,width=2]}](s15)to node[left]{$t_1$}(s17);
\end{tikzpicture}
}

\vspace*{-2cm}

\caption{
On the left, a unit-weighted, consistent, live and bounded asymmetric-choice system $(N,M_0)$,
on the right a labeled transition system representing its reachability graph, where $s_0$ represents $M_0$.
The system $(N,M_0)$ has no home state and is strongly live: 
the only minimal siphons that might be problematic are $D_1 = \{p_0, p_5, p_6\}$ and $D_2 = \{p_0, p_5, p_7\}$,
which cannot be unmarked at any potentially reachable marking.
Indeed, $(1,0,0,0,0,1,1,1,0,0,0,0,0,0)$ is a P-semiflow, thus for each marking $M$ in $PR(S)$, we have $M(p_0)+M(p_5)+M(p_6)+M(p_7) = 2$;
since $S$ is $1$-safe, we deduce that $M$ has at least one token in each siphon $D_1$ and $D_2$.
Denote by $M$ and $M'$ the markings reached by firing respectively $t_0$ or $t_1$ from $M_0$:
$M_0$ is not reachable from $M$ nor from $M'$, $[M\rangle \cap [M'\rangle = \emptyset$
but, for each pair $(M_1, M_2)$ of markings in $[M_0\rangle$, $M_2$ is potentially reachable from $M_1$ (by consistency);
in particular, $M'$ is potentially reachable from $M$.
We deduce that $(N,M)$ is not initially directed.
On the right, $s_2$ represents $M$ and $s_1$ represents $M'$.
}
\label{LBOACwoHS}
\end{figure}
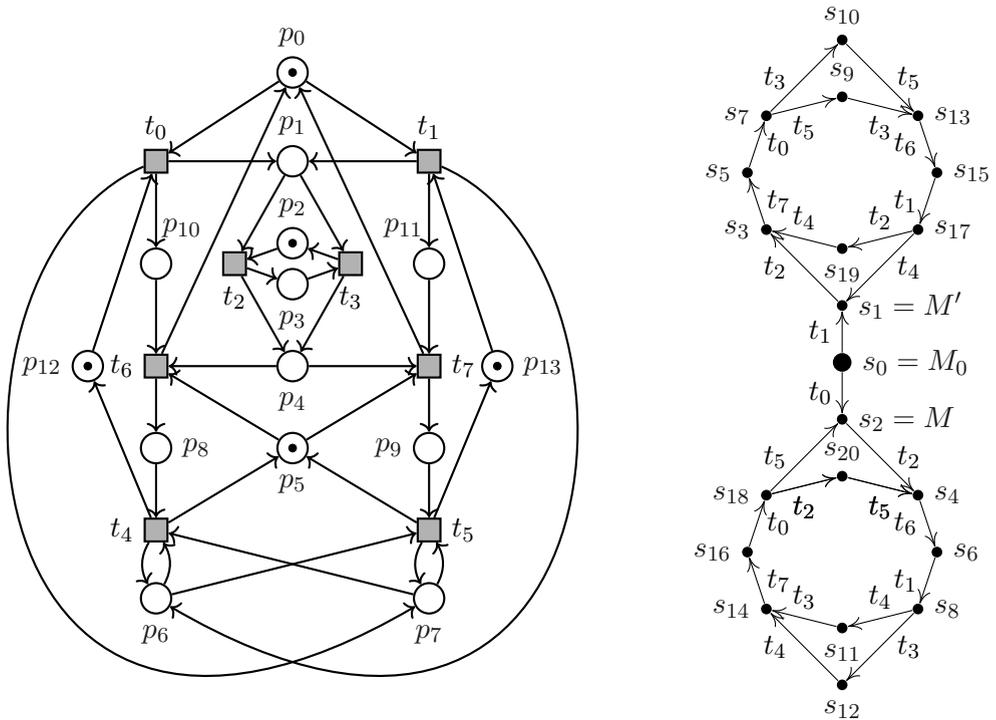

\subsection{Directedness in persistent systems}\label{SubsecPersistent}

Persistent systems have a strong restriction on their behavior: no transition firing can disable any other transition.
More formally: a system $S$ is \emph{persistent} if, for all distinct transitions $t_1,t_2$ 
and all reachable markings $M,M_1,M_2$ such that $M[t_1\rangle M_1$ and $M[t_2\rangle M_2$, 
we have $M_1[t_2\rangle$ (and~$M_2[t_1\rangle$).

Their reachability graph fulfills a kind of confluence, expressed by Keller's theorem below.
We first need to recall the notion of \emph{residues}, on which this theorem is based.

\begin{definition}[(Left) Residue]\label{residues.def}
Let $T$ be a set of labels (typically, transitions)	
and 
$\tau,\sigma\in T^*$
two sequences over this set.
The {\em (left) residue of $\tau$ with respect to $\sigma$}, 
denoted by $\tau\pminus\sigma$,
arises from cancelling successively
in $\tau$ the leftmost occurrences of all symbols from $\sigma$, read from left to right.

Inductively:
$\tau\pminus\emptyseq=\tau$;
$\tau\pminus t=\tau$ if $t\notin\support(\tau)$;
$\tau\pminus t$ is the sequence obtained by erasing the leftmost $t$ in $\tau$ if $t\in\support(\tau)$;
and 
$\tau\pminus(t\sigma)=(\tau\pminus t)\pminus\sigma$.
For instance:

$acbcacbc\pminus abbcb=cacc$ and $abbcb\pminus acbcacbc=b$.

Residues naturally extend to T-vectors as follows:
for any sequence $\tau$ and T-vector $Y$,
$\tau \pminus Y$ is~$\tau$ in which,
for each transition $t$ in $\support(Y)$,
the $\min\{P(\tau)(t),Y(t)\}$ leftmost occurrences of $t$ have been removed.
\end{definition}

\begin{theorem}[Keller \cite{keller}]\label{kellersTheorem}
Let $S$ be a persistent system. 
Let $\tau$ and $\sigma$ be two sequences feasible in $S$.
Then $\tau(\sigma\pminus\tau)$ and $\sigma(\tau\pminus\sigma)$ are both feasible in $S$ and lead to the same marking.
\end{theorem}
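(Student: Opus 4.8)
The plan is to prove Keller's theorem by induction on the length of $\sigma$, keeping $\tau$ arbitrary (a feasible sequence from the current marking $M$ where $M_0[\rho\rangle M$ for some $\rho$, but actually we may work from any reachable marking since the persistence property is required to hold at all reachable markings). The base case $\sigma = \emptyseq$ is immediate: $\tau\pminus\emptyseq = \tau$ and $\emptyseq\pminus\tau = \emptyseq$, so both $\tau(\emptyseq\pminus\tau) = \tau$ and $\emptyseq(\tau\pminus\emptyseq) = \tau$ are feasible and lead to the same marking trivially.

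For the inductive step, write $\sigma = t\sigma'$ where $t$ is a single transition and $\sigma'$ is shorter. The key sub-problem is to handle the interaction between a single transition $t$ and the sequence $\tau$: I first establish the ``one-step'' lemma, namely that $\tau(t\pminus\tau)$ and $t(\tau\pminus t)$ are both feasible from $M$ and reach the same marking. This splits into two cases. If $t \notin \support(\tau)$, then $t\pminus\tau = t$ and $\tau\pminus t = \tau$; feasibility of $\tau t$ follows because after firing $\tau$ the transition $t$ is still enabled — here is where persistence does the work: repeatedly, $t$ is enabled at $M$, the first transition of $\tau$ is enabled at $M$, $t$ is distinct from it (since $t$ does not occur in $\tau$), so persistence keeps $t$ enabled after that first step; iterate along $\tau$. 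That $\tau t$ and $t\tau$ reach the same marking is just commutativity of the incidence-vector additions (the marking reached depends only on the Parikh vector, once everything is feasible). If $t \in \support(\tau)$, then $t\pminus\tau = \emptyseq$, so $t(\tau\pminus t)$ must be shown feasible and equal to $\tau$; writing $\tau = \tau_1 t \tau_2$ with $t \notin \support(\tau_1)$, the previous case gives that $t\tau_1$ is feasible and reaches the same marking as $\tau_1 t$, hence $t\tau_1\tau_2 = t(\tau\pminus t)$ is feasible and reaches the same marking as $\tau_1 t \tau_2 = \tau$, as needed.

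With the one-step lemma in hand, I push through the induction. Given $\sigma = t\sigma'$ feasible from $M$, apply the one-step lemma to $t$ and $\tau$: both $\tau(t\pminus\tau)$ and $t(\tau\pminus t)$ are feasible from $M$ and reach a common marking $M'$. Now $\sigma'$ is feasible from the marking reached by $t$ alone; I need to relate this to a sequence feasible from $M'$. The cleanest route is: let $M''$ be the marking with $M[t\rangle M''$, so $\sigma'$ is feasible from $M''$, and $M[t(\tau\pminus t)\rangle M'$, i.e.\ $M''[(\tau\pminus t)\rangle M'$. Apply the induction hypothesis to the two sequences $\sigma'$ and $(\tau\pminus t)$, both feasible from $M''$: then $\sigma'\big((\tau\pminus t)\pminus\sigma'\big)$ and $(\tau\pminus t)\big(\sigma'\pminus(\tau\pminus t)\big)$ are feasible from $M''$ and reach a common marking. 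Prefixing with $t$ and unwinding the residue identities $\tau\pminus(t\sigma') = (\tau\pminus t)\pminus\sigma'$ and $(t\sigma')\pminus\tau = t\cdot(\sigma'\pminus(\tau\pminus t))$ when $t\in\support(\tau)$ — with the symmetric bookkeeping when $t\notin\support(\tau)$, using the $\tau(t\pminus\tau)$ branch and the fact that feasibility of $\tau$ from $M$ together with the one-step lemma lets me continue from $M'$ — yields feasibility of $\sigma(\tau\pminus\sigma)$ and $\tau(\sigma\pminus\tau)$ from $M$ with a common target marking.

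The main obstacle I anticipate is the careful case analysis in gluing the induction hypothesis back together: the residue operator behaves differently according to whether $t$ lies in $\support(\tau)$, and one must track which marking each intermediate sequence is fired from so that the induction hypothesis is applied to two sequences feasible from the \emph{same} marking. A secondary subtlety is that persistence is a property of \emph{reachable} markings, so at every point where I invoke it I must make sure the marking in question is genuinely reachable from $M_0$ (it is, being reached by an already-established feasible prefix); and I should note at the outset that it suffices to prove the statement with $M_0$ replaced by an arbitrary reachable marking, which the induction structure silently requires. Once the residue identities are laid out explicitly, the ``same marking'' conclusions reduce in each case to the observation that a marking reachable by a given feasible sequence depends only on that sequence's Parikh vector, which is immediate from $M_0[\sigma\rangle M_0 + I\cdot\Parikh(\sigma)$.
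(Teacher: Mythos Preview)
The paper does not supply a proof of this theorem: it is stated with a citation to Keller's original work and used as a black box. So there is no ``paper's own proof'' to compare against. Your proposal is essentially the standard inductive proof of Keller's theorem and is correct. The one-step lemma (single transition $t$ against an arbitrary feasible $\tau$) together with the residue identities $\tau\pminus(t\sigma')=(\tau\pminus t)\pminus\sigma'$ and, in the two cases, $(t\sigma')\pminus\tau=\sigma'\pminus(\tau\pminus t)$ when $t\in\support(\tau)$ versus $(t\sigma')\pminus\tau=t\,(\sigma'\pminus\tau)$ when $t\notin\support(\tau)$, is exactly what is needed to push the induction through; your handling of the common intermediate marking $M'$ reached by both $\tau(t\pminus\tau)$ and $t(\tau\pminus t)$ is the right hinge. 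Two minor points worth tightening in a final write-up: the argument that $t\tau$ is feasible in the case $t\notin\support(\tau)$ itself hides a small induction on the length of $\tau$ (you only explicitly argue $\tau t$), and when you say ``persistence is a property of reachable markings'' you should state once that the induction hypothesis is applied at the reachable marking $M''$ (reached by a feasible prefix), which you do observe but could make more explicit.
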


Keller's theorem applies to CF systems, since they are structurally persistent (each place having at most one output).
We will prove a variant of this theorem for H$1$S-\WMGineq{} systems in the sequel, embodied by Theorem~\ref{OnePlaceCommonReachability}.

Notice that each CF net is HFC, hence for each live CF system $S$, $PRG(S)$ is directed~\cite{STECS}.

\section{Liveness and reachability in \texorpdfstring{H$1$S systems}{}}\label{SecLiveness}

In Subsection~\ref{LiveReachCF}, we provide a general result on liveness and reachability for all weighted Petri nets, 
and recall previous related results in the CF subclass.

In Subsection~\ref{SecH1SWMG}, 
we provide new results on liveness and reachability in the H$1$S-\WMGineq{} subclass,
including a new (non-)liveness characterization based on the state equation and inequalities describing the potentially reachable deadlocks.
This trims down to co-NP the complexity of liveness checking in this class.
Moreover, with a counter-example, we show that this characterization does not extend to the whole H$1$S class and
is thus tightly related to the H$1$S-\WMGineq{} structure.

\subsection{Liveness and reachability in Petri nets and the CF subclass}\label{LiveReachCF} 

We shall need the next condition based on Dickson's lemma~\cite{Dickson1919}.


\begin{lemma}[One direction of Proposition~\ref{LivenessOfCFnets} for weighted Petri nets]\label{UsingDicksonsLemma}
Let $(N,M_0)$ be a Petri net system with incidence matrix $I$. 
If it is live, then there exists a marking $M \in R((N,M_0))$
such that
some firing sequence $\sigma$ is feasible in $(N,M)$ with $\Parikh(\sigma) \ge \one$ and $I \cdot \Parikh(\sigma) \ge \zero$.
\end{lemma}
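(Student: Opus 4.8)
The plan is to use liveness to build a long feasible firing sequence and then extract from it a sub-sequence with the required Parikh vector and incidence properties, invoking Dickson's lemma to guarantee a suitable repetition of markings.

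First I would use the liveness of $(N,M_0)$ to construct, starting from $M_0$, an infinite feasible firing sequence $w = t_1 t_2 t_3 \cdots$ in which every transition occurs infinitely often. This is a standard consequence of liveness: enumerate $T = \{u_1,\ldots,u_m\}$ and repeatedly, from the current reachable marking, fire a sequence that enables and fires $u_1$, then one that enables and fires $u_2$, and so on cyclically; liveness guarantees that from every reachable marking such a continuation exists. Let $M_0 [w_1\rangle M^{(1)} [w_2\rangle M^{(2)} [w_3\rangle \cdots$ record the markings reached after each full cycle through all of $T$, so that each finite prefix $w_1 w_2 \cdots w_k$ contains every transition at least $k$ times.

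Next I would apply Dickson's lemma to the infinite sequence of markings $M^{(1)}, M^{(2)}, M^{(3)}, \ldots \in \N^{|P|}$: by well-quasi-ordering of $\N^{|P|}$ under the componentwise order, there exist indices $i < j$ with $M^{(i)} \le M^{(j)}$. Set $M = M^{(i)}$, which is reachable from $M_0$, hence $M \in R((N,M_0))$, and let $\sigma = w_{i+1} w_{i+2} \cdots w_j$ be the segment of the run leading from $M^{(i)}$ to $M^{(j)}$. Then $\sigma$ is feasible from $M$ by construction, and since $\sigma$ spans at least one full cycle through $T$ (indeed $j - i \ge 1$ full cycles) we have $\Parikh(\sigma) \ge \one$. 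Finally, from $M [\sigma\rangle M^{(j)}$ and the state equation we get $M^{(j)} = M + I \cdot \Parikh(\sigma)$, so $I \cdot \Parikh(\sigma) = M^{(j)} - M^{(i)} \ge \zero$ by the choice of $i,j$. This gives exactly the desired $M$ and $\sigma$.

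The main obstacle, such as it is, lies in the very first step: making precise that liveness suffices to build an infinite fair firing sequence without getting stuck. The key point is that liveness is a property of \emph{every} reachable marking, so at each stage of the cyclic construction the current marking is still reachable from $M_0$ and therefore still enables, after some further firing, the next transition to be scheduled; thus the construction never blocks and produces a genuinely infinite run in which each transition appears infinitely often. Everything after that — the appeal to Dickson's lemma and the algebraic manipulation via the state equation — is routine. I would also remark that this lemma is precisely the ``easy'' direction of the liveness characterization for CF nets (Proposition~\ref{LivenessOfCFnets}), the converse direction being the substantive one proved there; here only this direction is needed and it holds for arbitrary weighted Petri nets.
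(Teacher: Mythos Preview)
Your proposal is correct and follows essentially the same approach as the paper: build an infinite firing sequence segmented into blocks each containing every transition, apply Dickson's lemma to the sequence of markings at block boundaries to find $i<j$ with $M^{(i)}\le M^{(j)}$, and take $M=M^{(i)}$ with $\sigma$ the intervening segment. The paper's proof is slightly terser in constructing the blocks (it just asserts that liveness yields, from each reachable marking, a feasible sequence containing all transitions), but the structure and the key use of Dickson's lemma are identical.
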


\begin{proof}
By liveness, from each reachable marking some sequence is feasible that contains all transitions.
Thus, $(N,M_0)$ enables an infinite sequence of the form $M_0[\tau_0\rangle M_1[\tau_1\rangle \ldots M_i[\tau_i\rangle \ldots $
in which each $\tau_i$ is finite and contains all transitions.
Since all marking components take their values in the non-negative integers,
Dickson's lemma~\cite{Dickson1919} applies,
so that two distinct non-negative integers $i$ and $j$ exist such that $j > i$ and $M_j \ge M_i$.
We then have: $M=M_i$, $\sigma = \tau_i \ldots \tau_{j-1}$, $\Parikh(\sigma) \ge \one$ and $I \cdot \Parikh(\sigma) = M_j - M_i \ge \zero$.
\end{proof}

The other direction does not hold in all weighted Petri nets;
however, it does in the CF subclass, forming a liveness characterization, as recalled next
and 
given as Corollary~$4$ in~\cite{tcs97}.

\begin{proposition}[Liveness of CF systems \cite{tcs97}]\label{LivenessOfCFnets}
Let $(N,M_0)$ be a CF system with incidence matrix $I$. 
It is live iff there exists a marking $M \in R((N,M_0))$
such that
a firing sequence $\sigma$ is feasible in $(N,M)$ with $\Parikh(\sigma) \ge \one$ and $I \cdot \Parikh(\sigma) \ge~\zero$.
\end{proposition}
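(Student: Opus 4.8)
The forward implication ($\Rightarrow$) is exactly Lemma~\ref{UsingDicksonsLemma}, which holds for arbitrary weighted Petri nets; so the plan is to concentrate on the converse ($\Leftarrow$), where the choice-freeness of $N$ (equivalently, its structural persistence) will be essential.

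Assume $M \in R((N,M_0))$ and that a sequence $\sigma$ feasible from $M$ satisfies $\Parikh(\sigma) \ge \one$ and $I\cdot\Parikh(\sigma) \ge \zero$. First I would observe that, writing $M[\sigma\rangle M'$, we have $M' = M + I\cdot\Parikh(\sigma) \ge M$; since the token game is monotone (a sequence feasible at a marking remains feasible at any larger marking and reaches a larger marking), $\sigma$ is feasible from $M'$, and iterating shows that $\sigma^k$ is feasible from $M$ for every $k \ge 1$. Picking $\rho$ with $M_0[\rho\rangle M$, the infinite sequence $\eta = \rho\,\sigma\,\sigma\,\sigma\cdots$ is then feasible from $M_0$, and because $\Parikh(\sigma) \ge \one$ every transition occurs infinitely often along $\eta$.

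The heart of the argument, and the step where choice-freeness is used, is to transport this ``good'' infinite run to an arbitrary reachable marking by means of Keller's theorem. Let $M_1 \in R((N,M_0))$ be arbitrary, say $M_0[\tau\rangle M_1$ with $\tau$ finite. Since each transition appears infinitely often along $\eta$ while $\tau$ is a fixed finite word (and $T$ is finite), I can choose a finite prefix $\eta_n$ of $\eta$ with $\Parikh(\eta_n)(t) > \Parikh(\tau)(t)$ for every $t \in T$. Both $\tau$ and $\eta_n$ are feasible from $M_0$; as $N$ is choice-free it is structurally persistent, so Theorem~\ref{kellersTheorem} applies and yields that $\tau\,(\eta_n \pminus \tau)$ is feasible from $M_0$, i.e.\ $\eta_n \pminus \tau$ is feasible from $M_1$. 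By Definition~\ref{residues.def}, $\eta_n \pminus \tau$ retains each transition $t$ exactly $\Parikh(\eta_n)(t) - \Parikh(\tau)(t) \ge 1$ times, so its support is all of $T$. Hence from $M_1$ a firing sequence containing every transition is feasible; in particular every transition is enabled at some marking reachable from $M_1$. Since $M_1$ was an arbitrary reachable marking, $(N,M_0)$ is live.

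I expect the only genuine subtlety to be the bookkeeping with residues: one must be careful that $\pminus$ cancels \emph{leftmost} occurrences and that $\min\{\Parikh(\eta_n)(t),\Parikh(\tau)(t)\} = \Parikh(\tau)(t)$ thanks to the choice of $n$, so that exactly $\Parikh(\eta_n)(t)-\Parikh(\tau)(t)$ occurrences survive; everything else (monotonicity of the token game, feasibility of $\sigma^k$) is routine. Note that choice-freeness cannot be dropped here: in a general Petri net the existence of a reachable marking from which all transitions fire infinitely often does not entail liveness, since a separate ``dead'' branch may be reachable — it is precisely the confluence provided by Keller's theorem that rules this out in the CF case.
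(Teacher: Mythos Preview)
The paper does not prove this proposition: it is quoted verbatim as Corollary~4 of~\cite{tcs97} and used as a black box. So there is no ``paper's proof'' to compare against.

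That said, your argument is correct. The forward implication is indeed exactly Lemma~\ref{UsingDicksonsLemma}. For the converse, the chain of reasoning is sound: monotonicity of the token game gives feasibility of $\sigma^k$ from $M$ for every $k$, hence the infinite run $\eta=\rho\sigma\sigma\cdots$ from $M_0$ in which every transition occurs infinitely often; choice-freeness implies structural persistence, so Keller's Theorem~\ref{kellersTheorem} applies to the pair $(\tau,\eta_n)$ and yields $M_1[\eta_n\pminus\tau\rangle$; and your bookkeeping on the Parikh vector of the residue is right, since $\Parikh(\eta_n)(t)>\Parikh(\tau)(t)$ for every $t$ ensures $\support(\eta_n\pminus\tau)=T$. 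This gives, from every reachable $M_1$, a feasible sequence containing all transitions, which is liveness.

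Your closing remark is also on point: the step that fails without persistence is precisely the transport of the infinite ``all-transitions'' run to an arbitrary reachable marking, and Keller's confluence is what makes it go through in the CF case.
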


We shall also need the following property on fireable T-vectors in \WMGineq{} given as Corollary~$1$ in~\cite{DH18}.

\begin{proposition}[Fireable T-vectors in \WMGineq{} \cite{DH18}]\label{RealizableTvectors}
Let $N = (P,T,W)$ be a \WMGineq{} with incidence matrix $I$. 
Let $M_0$ and $M$ be markings over $P$ and $Y \in \N^T$ be a T-vector such that $M = M_0 + C \cdot Y \ge \zero$. 
Let $\sigma$ be a transition sequence such that $Y \le \Parikh(\sigma)$.
Then, if $M_0 [\sigma\rangle$, there is a firing sequence $M_0 [\sigma'\rangle M$ such that $\Parikh(\sigma') = Y$.
\end{proposition}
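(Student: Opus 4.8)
\emph{Plan.} I would argue by induction on $n \eqdef \sum_{t\in T} Y(t)$, the total number of firings prescribed by $Y$. If $n = 0$ then $Y=\zero$, so $M=M_0$ and the empty sequence $\sigma'=\emptyseq$ does the job. Assume $n>0$; then $\support(Y)\neq\es$, and $\support(Y)\subseteq\support(\sigma)$ since $Y\le\Parikh(\sigma)$. The crux is the following claim: \emph{some $t\in\support(Y)$ is enabled at $M_0$}. Granting it, fire $t$: $M_0[t\rangle M_0'$ with $M_0' = M_0 + I\cdot\one_t$. Since $t\in\support(\sigma)$ and a \WMGineq{} is persistent (being choice-free), Keller's theorem (Theorem~\ref{kellersTheorem}), applied with the one-letter sequence $t$, gives that $t\,(\sigma\pminus t)$ is feasible from $M_0$; hence $M_0'[\sigma\pminus t\rangle$, where $\Parikh(\sigma\pminus t)=\Parikh(\sigma)-\one_t\ge Y-\one_t$. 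Moreover $Y-\one_t\in\N^T$ (as $Y(t)\ge1$) and $M_0'+I\cdot(Y-\one_t)=M_0+I\cdot Y=M\ge\zero$. So the induction hypothesis applies to $(N,M_0')$, the T-vector $Y-\one_t$ and the witness $\sigma\pminus t$ (with $n-1$ firings), yielding $\sigma''$ with $M_0'[\sigma''\rangle M$ and $\Parikh(\sigma'')=Y-\one_t$. Then $\sigma'\eqdef t\,\sigma''$ satisfies $M_0[\sigma'\rangle M$ and $\Parikh(\sigma')=Y$.

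It remains to prove the claim, and this is exactly where the \WMGineq{} restriction (each place has at most one input \emph{and} at most one output) is used twice. Suppose every $t\in\support(Y)$ is disabled at $M_0$ and pick, for each such $t$, an input place $p_t$ with $M_0(p_t)<W(p_t,t)$. Projecting $M=M_0+I\cdot Y\ge\zero$ onto $p_t$ and using that $t$ is the \emph{only} output of $p_t$ while $p_t$ has at most one input transition, one obtains $M_0(p_t)+W(u,p_t)\,Y(u)\ge W(p_t,t)\,Y(t)\ge W(p_t,t)$, where $u$ denotes the input transition of $p_t$ when it exists; since $M_0(p_t)<W(p_t,t)$, this forces $u$ to exist with $Y(u)>0$, i.e.\ $u\in\support(Y)$ (the self-loop case $u=t$ is included). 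Setting $f(t)\eqdef u$ defines a map $f$ from the finite set $\support(Y)$ into itself, so $f$ has a cycle $t_0, t_1=f(t_0),\dots,t_\ell=f(t_{\ell-1})=t_0$; the places $p_{t_0},\dots,p_{t_{\ell-1}}$ and the transitions $t_0,\dots,t_{\ell-1}$ form a directed circuit $C$ of $N$, all of whose transitions lie in $\support(Y)\subseteq\support(\sigma)$.

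To derive a contradiction, fire $\sigma$ from $M_0$ and stop just before its first occurrence of a transition of $C$, reaching a marking $M^\ast$ via a prefix $\rho$ containing no transition of $C$; the next transition fired, some $t_k$, is enabled at $M^\ast$. But the only transitions incident to $p_{t_k}$ are its unique output $t_k$ and its unique input $f(t_k)=t_{k+1}$, both on $C$ hence absent from $\rho$; therefore $M^\ast(p_{t_k})=M_0(p_{t_k})$, and enabledness of $t_k$ at $M^\ast$ yields $M_0(p_{t_k})=M^\ast(p_{t_k})\ge W(p_{t_k},t_k)$, contradicting $M_0(p_{t_k})<W(p_{t_k},t_k)$. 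This proves the claim, and with it the proposition. The only real obstacle is this claim: the surrounding induction is routine once persistence/Keller is available, whereas the circuit extraction genuinely needs both halves of the \WMGineq{} hypothesis --- ``at most one output'' so that a prefix avoiding $C$ cannot touch a circuit place, and ``at most one input'' so that $f$ is well defined and the blocking relation closes into a circuit; the degenerate length-one cycle (a self-loop place) is handled uniformly by the same argument.
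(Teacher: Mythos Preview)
The paper does not prove Proposition~\ref{RealizableTvectors}: it is quoted as Corollary~1 of~\cite{DH18} and used as a black box. So there is no ``paper's own proof'' to compare against.

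That said, your argument is correct and is essentially the standard one for this result. The induction on $\sum_t Y(t)$ together with Keller's theorem is routine; the substantive step is indeed the claim that some $t\in\support(Y)$ is enabled at $M_0$, and your circuit-extraction proof of it is sound. A couple of minor remarks: (i) in the case where $p_t$ has no input transition, the inequality $M(p_t)=M_0(p_t)-W(p_t,t)\,Y(t)\ge 0$ already forces $M_0(p_t)\ge W(p_t,t)$, contradicting the choice of $p_t$, so $u$ must exist --- you implicitly use this when asserting $f$ is total; (ii) distinctness of the $t_i$ on the extracted cycle follows from taking the minimal cycle of the self-map $f$, which you may want to say explicitly. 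Your observation that both halves of the \WMGineq{} hypothesis are genuinely needed (``at most one output'' to freeze the circuit places along the prefix $\rho$, ``at most one input'' to make $f$ a function) is exactly the point, and matches the paper's later remark that the proposition fails for general CF nets.
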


\subsection{Using the state equation in \texorpdfstring{H$1$S-\WMGineq{}}{} systems}\label{SecH1SWMG}

In this subsection, we develop a result on liveness and reachability in a subclass of the H$1$S systems,
namely the H$1$S-\WMGineq{} systems, 
in which the deletion of the shared place (if any) yields a \WMGineq{}.
With the help of a counter-example, we show that this result does not extend to the entire H$1$S class.

Then, we focus on the H$1$S-\WMGineq{} subclass under the following restrictions:
we assume the Petri net system to be strongly connected, structurally bounded with an underlying strongly connected \WMGineq{}
(the latter being thus a WMG).
We derive from our result a new characterization of liveness expressed in terms of potentially reachable deadlocks.
We deduce an ILP defined by the state equation and an additional set of linear inequalities,
the whole inequality system having a size polynomial in the initial Petri net encoding.
Thus, we obtain a new liveness checking method whose complexity lies in co-NP,
reducing the complexity drastically (comparing to EXPSPACE-hardness).

Notice that the H$1$S-\WMGineq{} class is expressive enough to model our use-cases in Section~\ref{UseCase}.

\subsubsection{Properties of the \texorpdfstring{H$1$S-\WMGineq{}} subclass}\label{PropertiesH1SWMG}

We obtain next theorem, illustrated in Figure~\ref{FigOnePlaceCommonReachabilityTheorem},
which applies to the H$1$S-\WMGineq{} class. 
It can be seen as a variant of Keller's theorem.
Its proof is illustrated in Figure~\ref{FigOnePlaceCommonReachability}.
We then show it does not extend to the H$1$S class.

\begin{theorem}[Liveness and reachability in H$1$S-\WMGineq{}]\label{OnePlaceCommonReachability}
Consider a live H$1$S-\WMGineq{} $S=(N,M_0)$. 
For any T-vector $Y$ and marking $M$ forming a solution to the state equation $M_0 + I \cdot Y = M$,
there exist a marking $M'$ and a firing sequence $M_0[\sigma\rangle M'
$
such that 
$\Parikh(\sigma) \ge Y$
and
$M'$ is also reached by firing $\sigma \pminus Y$ from $M$.
Consequently, $PRG(S)$ is initially directed and $(N,M)$ is live, thus $S$ is strongly live.
%
%
\end{theorem}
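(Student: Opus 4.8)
The plan is to reduce the statement to a statement about the choice-free net $N^-$ obtained by deleting the shared place (if any), and then to lift a Keller-style confluence argument from $N^-$ back to $N$ by carefully tracking the single shared place. If $N$ has no shared place it is already a \WMGineq, and the claim then follows directly from Proposition~\ref{RealizableTvectors}: given $M_0 + I\cdot Y = M \ge \zero$, liveness lets us fire arbitrarily long sequences, in particular one $\sigma$ with $\Parikh(\sigma) \ge Y$ (monotonic liveness of choice-free systems, \cite{tcs97}), and the proposition then yields $M_0[\sigma'\rangle M$ with $\Parikh(\sigma') = Y$; taking $M' = M$ and checking $\sigma \pminus Y$ reaches $M'$ from $M$ finishes that case. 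So the whole difficulty is the genuine presence of one shared place $p_s$ with equal output weights, say weight $w$.

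First I would set up $N^- = N$ with $p_s$ removed; this is a \WMGineq, hence choice-free and persistent, so Keller's theorem (Theorem~\ref{kellersTheorem}) applies to it, and Proposition~\ref{LivenessOfCFnets}/\ref{RealizableTvectors} are available. The T-vector $Y$ solving the state equation of $N$ also solves the state equation of $N^-$ (dropping the $p_s$-component), so in $N^-$ there is a firing $M_0^-[\sigma_0\rangle M^-$ with $\Parikh(\sigma_0) = Y$. The key obstruction is that $\sigma_0$ need not be fireable in $N$: the shared place $p_s$ may run dry along the way. The idea to overcome this is to use liveness of $S$: fire from $M_0$ a long sequence $\rho$ in $N$ that "refills" $p_s$ sufficiently and contains all transitions many times, so that the Parikh image dominates $Y$ by a large margin; because every output weight of $p_s$ equals $w$ (homogeneity), the constraint $p_s$ imposes on fireability is a single scalar inequality $M(p_s) \ge w$ at each step, not a conjunction of differently-weighted constraints — this is exactly where homogeneity is essential. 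I would then argue, using Proposition~\ref{RealizableTvectors} on $N^-$ together with this scalar refill bound on $p_s$, that one can extract from $\rho$ (or reorder it) a fireable sequence $\sigma$ in $N$ with $\Parikh(\sigma) \ge Y$ leading to some $M'$, and such that the residue $\sigma \pminus Y$ is fireable from $M$ and also reaches $M'$. The residue computation reaches the same marking because the net effect $I\cdot(\Parikh(\sigma) - Y) = M' - M$ matches on every place, and fireability of the residue from $M$ follows from the same scalar bound on $p_s$ plus choice-freeness of the rest.

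The consequences are then formal: $M'$ is reachable from both $M_0$ and $M$, so $M' \in R(S) \cap R((N,M))$, which says $PRG(S)$ is initially directed (every $M \in PR(S)$ shares a reachable marking with $S$). By Lemma~\ref{LiveAndDirected}, a live system whose $PRG$ is initially directed is strongly live, so every $M \in PR(S)$ is live; in particular the given $M$ is live.

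\textbf{Main obstacle.} The hard part will be the extraction/reordering step: showing that from a sufficiently long live sequence in $N$ one obtains a fireable $\sigma$ in $N$ with $\Parikh(\sigma) \ge Y$ whose residue w.r.t.\ $Y$ is fireable from $M$. Keller's theorem handles this cleanly for persistent (choice-free) nets, but $N$ is not persistent at the choice transitions of $p_s$; the trick is that homogeneity collapses $p_s$'s influence to the single quantity $M(p_s)$, so one only needs to ensure this quantity never drops below $w$ along the rearranged sequence. I expect the proof to make this precise by an induction on the length of $Y$ (as in Definition~\ref{residues.def}), peeling off one transition at a time and invoking Proposition~\ref{RealizableTvectors} for the choice-free part while separately bookkeeping tokens in $p_s$.
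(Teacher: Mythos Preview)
Your overall skeleton---induction on the size of $Y$, handling the \WMGineq{} case via Proposition~\ref{RealizableTvectors}, and using homogeneity of the single shared place---matches the paper's proof. You also correctly anticipate that the induction peels off one transition of $\support(Y)$ at a time. However, the mechanism you sketch for the inductive step is not quite the right one, and the ``refill-then-extract'' idea as stated has a gap.

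Your plan is to fire a long sequence $\rho$ in $N$, then reorder it (using Proposition~\ref{RealizableTvectors} on $N^-$) while separately ensuring the scalar constraint $M(p_s) \ge w$ holds along the rearranged sequence. But reordering a sequence fireable in $N^-$ gives no control over $p_s$ in $N$; the scalar bookkeeping you propose does not by itself prevent $p_s$ from dipping below $w$ at some intermediate step. The paper avoids this difficulty entirely. In the inductive step, when no $t\in\support(Y)$ is enabled at $M_0$, the paper argues as follows: by Proposition~\ref{RealizableTvectors} applied to $N^-$, some $t_0\in\support(Y)$ has all its non-shared input places enabled at $M_0$, so the only obstruction is $p_s$; by homogeneity, \emph{no} output of $p_s$ is then enabled; hence by liveness one can fire a sequence $\sigma_t$ consisting \emph{only of non-choice transitions} (none of them outputs of $p_s$) and none in $\support(Y)$, until some $t\in\support(Y)$ becomes enabled. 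The crucial observation---which replaces your scalar bookkeeping---is that every input place $p_i$ of a transition in $\sigma_t$ is non-shared, hence has a unique output transition, and that output is not in $\support(Y)$; therefore $M(p_i)\ge M_0(p_i)$, and $\sigma_t$ is fireable from $M$ as well. So the point is not that $p_s$ stays full along $\sigma_t$, but that $\sigma_t$ never needs $p_s$ at all. Once you have this, the residue argument goes through exactly as you outline, and the consequences (initial directedness, strong liveness via Lemma~\ref{LiveAndDirected}) are as you state.
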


\begin{figure}[!h]

\centering


\hspace*{1cm}
\begin{tikzpicture}[scale=0.6,mypetristyle]
\node[ltsNode,label=below left:Live~$M_0$](n0)at(0,0){};
\node[ltsNode,label=below right:$M$~is~thus~live](n1)at(3,0){};
\node[ltsNode,label=above:$M'$](n2)at(1.5,2.5){};
\draw[-{>[scale=2.5,length=2,width=2]},bend left=0](n0)to node[auto,swap,above left,near start]{$\sigma$}(n2);
\draw[-{>[scale=2.5,length=2,width=2]},bend left=0](n1)to node[auto,swap,above right,near start]{$\sigma \pminus Y$}(n2);
\draw[-{>[scale=2.5,length=2,width=2]},dashed](n0)to node[auto,swap,below]{$Y$}(n1);
\end{tikzpicture}


\caption{
Illustration of the claim of Theorem~\ref{OnePlaceCommonReachability}: for any such T-vector $Y$, there exists $\sigma$ leading to some $M'$ such that $\Parikh(\sigma) \ge Y$
and $\sigma \pminus Y$ leads to $M'$ from $M$.
}
\label{FigOnePlaceCommonReachabilityTheorem}
\end{figure}
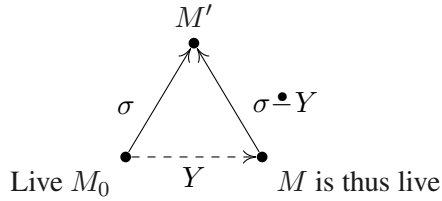

\begin{proof}
In case there is no shared place, $S$ is a live \WMGineq{}, hence one is able to build a sequence $\sigma_1$ feasible in $S$
such that $\Parikh(\sigma_1) \ge Y$.
By applying Proposition~\ref{RealizableTvectors}, we get a sequence $\sigma'$ such that $\Parikh(\sigma') = Y$ and $M_0[\sigma'\rangle M$.
We conclude by taking $\sigma = \sigma'$ and $M' = M$.
%
%
In the rest of the proof, let us assume the existence of one shared place $p$.

Consider a Parikh vector $Y$ such that $M_0 + I \cdot Y = M$,
where $I$ is the incidence matrix of the system.
We prove the first claim by induction on the sum $n$ of the components of~$Y$.

\noindent $-$ Base case: $n=0$, thus $M_0 = M$ is a common reachable marking,
with $\sigma = \epsilon$,
$\sigma \pminus Y = \epsilon$ and $\Parikh(\sigma) \ge Y = \zero$.

\noindent $-$ Inductive case: $n > 0$. Suppose the property to be true for $n-1$.
If some transition $t$ in $\support(Y)$ is enabled at $M_0$, leading to $M_0'$, which is also live,
we have
$M_0' + I \cdot (Y-\one_t) = M$.
Hence the induction hypothesis applies to $M_0'$, $M$ and $Y-\one_t$:
a sequence $\sigma'$ is feasible at $M_0'$, leading to a marking $M'$
such that $\sigma' \pminus (Y-\one_t)$ is feasible at $M$
and leads to $M'$, with $\Parikh(\sigma') \ge Y-\one_t$.
Denoting $\sigma = t \sigma'$, which is feasible at $M_0$,
the sequence $\sigma \pminus Y = \sigma' \pminus (Y-\one_t)$
is feasible at $M$ and leads to $M'$,
hence the first claim.

Now, suppose that no transition of $\support(Y)$ is feasible at $M_0$.
Since $S$ is live, there exist $t \in \support(Y)$ and a sequence $\sigma_t$ feasible from $M_0$ 
that leads to $M_t$ enabling $t$,
such that no transition in $\sigma_t$ belongs to $\support(Y)$.
We show next that such a $\sigma_t$ exists that contains only non-choice transitions (i.e.\ transitions that are not outputs of $p$),
from which we deduce that $\sigma_t$ is also feasible from $M$.

Since $S$ is live and by Lemma~\ref{UsingDicksonsLemma}, 
a sequence $\tau \tau'$ is feasible in $S$, leading to $M_{\tau'}$, $\tau$ leading to $M_\tau$,
such that $\Parikh(\tau') \ge \one$ and $I \cdot \Parikh(\tau') \ge \zero$.
Denote by $S_{\textrm{WMG}}$ the \WMGineq{} P-subsystem obtained by deleting $p$ in $S$.
Since $S_{\textrm{WMG}}$ is less constrained than $S$ with the same set of transitions,
every sequence feasible in $S$ is also feasible in $S_{\textrm{WMG}}$.
In particular, the sequence $\tau \tau'$ is feasible in $S_{\textrm{WMG}}$, which is consequently live by Proposition~\ref{LivenessOfCFnets}.

Now, let us apply Proposition~\ref{RealizableTvectors} to $S_{\textrm{WMG}}$:
it is live,
thus enables a sequence $\gamma$ such that $\Parikh(\gamma) \ge Y$,
hence
a sequence $\sigma_Y$ is feasible in $S_{\textrm{WMG}}$ such that $\Parikh(\sigma_Y) = Y$.
But we assumed that no transition of $\support(Y)$ is feasible in $S$.
Let us denote by $t_0$ the first transition in $\sigma_Y$.
We deduce that
$t_0$ has all its input places enabled at $M_0$ except $p$,
and that no output of $p$ is enabled by $M_0$ since $S$ is homogeneous.
Now, let us consider any (finite) feasible sequence of the form $\sigma_t t$ (where $t$ belongs to $\support(Y)$, as defined earlier)
such that $\sigma_t$ contains only non-choice transitions, none of which belongs to $\support(Y)$:
such a sequence exists since either $t$ equals $t_0$ (once $p$ is enabled, by liveness and homogeneity) 
or
$t$ is another transition enabled before enabling $p$.

We proved that $t \in \support(Y)$ and $\sigma_t$ exist
such that $\sigma_t$ is feasible in $S$,
contains only non-choice transitions
that are not in $\support(Y)$,
and leads to $M_t$ which enables $t$, leading to $M_1$.
Thus,
for each input place $p_i$ of each transition of $\sigma_t$, $M(p_i) \ge M_0(p_i)$,
from which we deduce that $\sigma_t$ is also feasible from $M$, 
leading to $M_t'$.

Now,
we apply the induction hypothesis to $(N,M_1)$ (which is live),
to $Y-\one_t$ and to $M_t'$:
there exist $\sigma'$ feasible from $M_1$ leading to $M'$
such that
$\sigma' \pminus (Y-\one_t)$ is feasible at $M_t'$ and leads to $M'$,
with 
$\Parikh(\sigma') \ge (Y-\one_t)$.
We deduce that $\sigma = \sigma_t t \sigma'$ (with $\Parikh(\sigma) \ge Y$) is feasible at $M_0$ and leads to $M'$,
while
$\sigma \pminus Y = (\sigma_t t \sigma') \pminus Y = \sigma_t (\sigma' \pminus (Y-\one_t))$
is feasible at $M$ and leads to $M'$, with $\Parikh(\sigma) \ge Y$.

We proved the inductive step. Hence the first claim.
Initial directedness is deduced immediately,
and
liveness of each potentially reachable marking $M$ comes from Lemma~\ref{LiveAndDirected}.
\end{proof}

\begin{figure}[!h]
\centering

\begin{minipage}{0.48\linewidth}
\centering
\begin{tikzpicture}[scale=1.3,mypetristyle]
%
\node[ltsNode,label=below:$M_0~~~~$](m0)at(-1,0){};
\node[ltsNode,label=above:$M$](m)at(-1,1.2){};
\node[ltsNode,label=below:$~~~M_0'$](m0')at(1,0){};
\node[ltsNode,label=above:$M'$](m')at(3.4,1.2){};
%
%
\draw[-{>[scale=2.5,length=2,width=2]},dashed](m0)to node[right, near start]{$Y$}(m);
\draw[-{>[scale=2.5,length=2,width=2]}](m0)to node[below]{$t$ {\footnotesize $\in \support(Y)$}}(m0');
\draw[-{>[scale=2.5,length=2,width=2]},dashed,bend right=0](m0')to node[above right, near start]{$Y-\one_t$}(m);
\draw[-{>[scale=2.5,length=2,width=2]},bend right=20](m0')to node[below right, near start]{$~\sigma' \ge_\Parikh (Y-\one_t)$}(m');
\draw[-{>[scale=2.5,length=2,width=2]},bend left=0](m)to node[above]{$\sigma' \pminus (Y-\one_t)$}(m');
\end{tikzpicture}

\end{minipage}
\hspace*{1mm}
\begin{minipage}{0.48\linewidth}
\centering
\begin{tikzpicture}[scale=1.3,mypetristyle]
%
\node[ltsNode,label=below:$M_0$](m0)at(0.2,0){};
\node[ltsNode,label=above:$M$](m)at(0.2,2){};
\node[ltsNode,label=below:$M_t$](mt)at(1.5,0){};
\node[ltsNode,label=left:$M_1$](m1)at(1.5,1){};
\node[ltsNode,label=above:$M_t'~$](mt')at(1.5,2){};
\node[ltsNode,label=above:$~~M'$](m')at(4.5,2){};
\node[](lSuppSigma)at(2,-1){where $\support(\sigma_t) \cap \support(Y) = \emptyset$};
%
%
\draw[-{>[scale=2.5,length=2,width=2]},dashed](m0)to node[right]{$Y$}(m);
\draw[-{>[scale=2.5,length=2,width=2]},dashed](m1)to node[left]{$Y-\one_t$}(mt');
\draw[-{>[scale=2.5,length=2,width=2]}](m0)to node[below]{$\sigma_t$}(mt);
\draw[-{>[scale=2.5,length=2,width=2]}](m)to node[above]{$\sigma_t$}(mt');
\draw[-{>[scale=2.5,length=2,width=2]}](mt)to node[right,near start]{$t$ {\footnotesize $\in \support(Y)$}}(m1); 
\draw[-{>[scale=2.5,length=2,width=2]},bend right=20](m1)to node[below right]{$\sigma' \ge_\Parikh (Y-\one_t)$}(m');
\draw[-{>[scale=2.5,length=2,width=2]},bend left=0](mt')to node[above]{$\sigma' \pminus (Y-\one_t)$}(m');
\end{tikzpicture}

\end{minipage}
%


\caption{
Illustration of the proof of Theorem~\ref{OnePlaceCommonReachability}. The notation $\sigma \ge_\Parikh Y$ denotes the sequence $\sigma$ with $\Parikh(\sigma) \ge Y$.
For two cases considered in the proof,
part of the potential reachability graph of the system is depicted.
On the left, we consider the simple case in which
a transition $t$ of $\support(Y)$ is enabled by $M_0$.
We have $\sigma' \pminus (Y-\one_t) = (t\sigma') \pminus Y = \sigma \pminus Y$. 
On the right,
we suppose that $M_0$ does not enable any transition in $\support(Y)$,
and 
we depict the inductive step.
}
\label{FigOnePlaceCommonReachability}
\end{figure}

\noindent {\bf Non-extensibility of Theorem~\ref{OnePlaceCommonReachability} to the H$1$S class.}
The proof of Theorem~\ref{OnePlaceCommonReachability} uses Proposition~\ref{RealizableTvectors} 
which is not true in CF systems (as detailed in~\cite{DH18}).
Actually, we show that Theorem~\ref{OnePlaceCommonReachability} cannot be extended to the entire H$1$S class:
we build a counter-example in Figure~\ref{CEH1SCF}.

\begin{figure}[!h]
\centering
\begin{tikzpicture}[scale=0.6,mypetristyle]

\node (p1) at (0,3) [place,tokens=2] {};
\node [anchor=east] at (p1.west) {$p_1$};

\node (p2) at (3,3) [place,tokens=2] {};
\node [anchor=south] at (p2.north) {$p_2$};

\node (p3) at (1.5,1.5) [place] {};
\node [anchor=east] at (p3.west) {$p_3$};

\node (p4) at (3,1.5) [place,tokens=1] {};
\node [anchor=north] at (p4.south) {$p_4$};

\node (p5) at (4.5,0) [place] {};
\node [anchor=west] at (p5.east) {$p_5$};

\node (t1) at (1.5,3) [transition,thick] {};
\node (t2) at (4.5,3) [transition,thick] {};
\node (t3) at (0,0) [transition,thick] {};

\node [anchor=south] at (t1.north) {$t_1$};
\node [anchor=west] at (t2.east) {$t_2$};
\node [anchor=east] at (t3.west) {$t_3$};

\draw [->,thick] (t3) to node [left] {} (p1);
\draw [->,thick] (t3) to node [left] {} (p4);
\draw [->,thick] (p4) to node [left] {} (t2.south west);
\draw [->,thick,bend right=20] (p2) to node [] {} (t2.west);
\draw [->,thick,bend right=20] (t2) to node [above] {$2$} (p2);
\draw [->,thick] (t2) to node [right] {} (p5);
\draw [->,thick] (p5) to node [right] {} (t3);
\draw [->,thick] (p2) to node [above] {} (t1);
\draw [->,thick,bend left=20] (p1) to node [above] {$2$} (t1);
\draw [->,thick,bend left=20] (t1) to node [above] {} (p1);
\draw [->,thick] (t1) to node [above] {} (p3);
\draw [->,thick] (p3) to node [above] {} (t3);
\end{tikzpicture}
\hspace*{1cm}
\begin{tikzpicture}[scale=0.6,mypetristyle]

\node (p1) at (0,3) [place] {};
\node [anchor=east] at (p1.west) {$p_1$};

\node (p2) at (3,3) [place] {};
\node [anchor=south] at (p2.north) {$p_2$};

\node (p3) at (1.5,1.5) [place,tokens=2] {};
\node [anchor=east] at (p3.west) {$p_3$};

\node (p4) at (3,1.5) [place,tokens=1] {};
\node [anchor=north] at (p4.south) {$p_4$};

\node (p5) at (4.5,0) [place] {};
\node [anchor=west] at (p5.east) {$p_5$};

\node (t1) at (1.5,3) [transition,thick] {};
\node (t2) at (4.5,3) [transition,thick] {};
\node (t3) at (0,0) [transition,thick] {};

\node [anchor=south] at (t1.north) {$t_1$};
\node [anchor=west] at (t2.east) {$t_2$};
\node [anchor=east] at (t3.west) {$t_3$};

\draw [->,thick] (t3) to node [left] {} (p1);
\draw [->,thick] (t3) to node [left] {} (p4);
\draw [->,thick] (p4) to node [left] {} (t2.south west);
\draw [->,thick,bend right=20] (p2) to node [] {} (t2.west);
\draw [->,thick,bend right=20] (t2) to node [above] {$2$} (p2);
\draw [->,thick] (t2) to node [right] {} (p5);
\draw [->,thick] (p5) to node [right] {} (t3);
\draw [->,thick] (p2) to node [above] {} (t1);
\draw [->,thick,bend left=20] (p1) to node [above] {$2$} (t1);
\draw [->,thick,bend left=20] (t1) to node [above] {} (p1);
\draw [->,thick] (t1) to node [above] {} (p3);
\draw [->,thick] (p3) to node [above] {} (t3);
\end{tikzpicture}
\caption{
The H$1$S system $(N,M_0)$ on the left is live and reversible, with $M_0 = (2,2,0,1,0)$.
Let us define $M = I \cdot Y + M_0$ where $I$ is the incidence matrix of $N$, $Y=(2,0,0)^T$ and $M=(0,0,2,1,0)$.
The H$1$S system $(N,M)$ is pictured on the right: it is deadlocked.
Intuitively, allowing a place to have several inputs yields for example the situation in place $p_1$:
the effect of $t_1$ on $p_1$ is $-1$, but it needs $2$ tokens in $p_1$ to fire,
an information non encoded in the incidence matrix nor in the state equation.
}
\label{CEH1SCF}
\end{figure}
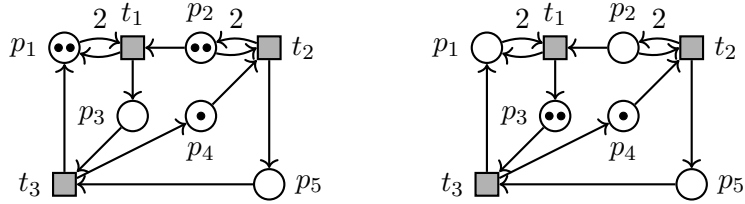

We deduce the following liveness characterization,
which is a variant of Proposition~\ref{LivenessOfHAC} in the H$1$S-\WMGineq{} subclass
exploiting the state equation instead of the reachable markings.

\begin{corollary}[Liveness characterization in H$1$S-\WMGineq{}]\label{LiveCharH1SWMG}
Let $S=(N,M_0)$ be an H$1$S-\WMGineq{} with incidence matrix $I$.
$S$ is live iff there is no solution $(M,Y)$ to its state equation $M = I \cdot Y + M_0$ such that some siphon of $N$ is deadlocked at $M$.
\end{corollary}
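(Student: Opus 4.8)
The plan is to prove both directions, using Theorem~\ref{OnePlaceCommonReachability} and Proposition~\ref{LivenessOfHAC} as the main tools, together with the fact that every H$1$S-\WMGineq{} is in particular an H$1$S hence an HAC system (so that Proposition~\ref{LivenessOfHAC} applies to it).

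For the forward direction, suppose $S=(N,M_0)$ is live. I want to show no state-equation solution $(M,Y)$ has a deadlocked siphon at $M$. Assume, for contradiction, that such a solution exists: $M = M_0 + I\cdot Y$ with $Y \in \N^{|T|}$, and some siphon $D$ of $N$ is deadlocked at $M$. Then $M \in PR(S)$ by definition. By Theorem~\ref{OnePlaceCommonReachability}, applied to the live H$1$S-\WMGineq{} system $S$ and this pair $(Y,M)$, the system $(N,M)$ is live (indeed $S$ is strongly live). But a live system cannot have a deadlocked siphon: if $D$ were deadlocked at $M$, then as recalled in the excerpt every transition with an input or output place in $D$ can never be fired again from $M$, so those transitions are dead in $(N,M)$, contradicting liveness of $(N,M)$ (note $D$ is non-empty and, since it is a siphon, it has at least one output transition — or, if one wants to be careful about the degenerate case, one invokes Proposition~\ref{LivenessOfHAC} for $(N,M)$, which is HAC: a deadlocked siphon contains a deadlocked minimal siphon, contradicting liveness of $(N,M)$). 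This yields the contradiction and proves the forward implication.

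For the converse, suppose no state-equation solution $(M,Y)$ has a deadlocked siphon at $M$; I want liveness of $S$. Since $N$ is connected and non-trivial and $S$ is HAC, Proposition~\ref{LivenessOfHAC} tells us $S$ is live iff no minimal siphon of $N$ is deadlocked at any \emph{reachable} marking. So take any reachable marking $M \in R(S)$ and any minimal siphon $D$; I must show $D$ is not deadlocked at $M$. But reachability implies potential reachability: $M \in R(S) \subseteq PR(S)$, so there is a T-vector $Y \in \N^{|T|}$ with $M = M_0 + I\cdot Y$. If $D$ were deadlocked at $M$, then $(M,Y)$ would be a state-equation solution with a deadlocked siphon at $M$, contradicting the hypothesis. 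Hence no minimal siphon is deadlocked at any reachable marking, and Proposition~\ref{LivenessOfHAC} gives liveness of $S$.

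The main obstacle is the forward direction, and specifically making precise why a deadlocked siphon obstructs liveness of $(N,M)$ even though $(N,M)$ need not be reachable from $M_0$: the clean route is to observe that Theorem~\ref{OnePlaceCommonReachability} delivers liveness of $(N,M)$ for \emph{every} potentially reachable $M$, and then apply the HAC liveness characterization (Proposition~\ref{LivenessOfHAC}) \emph{to the system $(N,M)$ itself} — which is again a connected, non-trivial HAC system — to conclude that $(N,M)$ has no deadlocked minimal siphon, hence no deadlocked siphon at all. The converse direction is essentially immediate once one recalls $R(S) \subseteq PR(S)$ and that a deadlocked siphon contains a deadlocked minimal siphon, so there is little risk there; the only subtlety worth a sentence is confirming the non-triviality/connectedness hypotheses of Proposition~\ref{LivenessOfHAC} carry over, which they do since $N$ is fixed throughout.
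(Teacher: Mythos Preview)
Your proof is correct and follows essentially the same approach as the paper: the forward direction uses Theorem~\ref{OnePlaceCommonReachability} to get liveness of $(N,M)$ for every potentially reachable $M$ (whence no deadlocked siphon there), and the converse uses $R(S)\subseteq PR(S)$ together with Proposition~\ref{LivenessOfHAC}. Your extra care about the connectedness/non-triviality hypothesis of Proposition~\ref{LivenessOfHAC} and about why a deadlocked siphon obstructs liveness is welcome but does not change the argument.
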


\begin{proof}
\noindent $(\Rightarrow)$ If $S$ is live, then for each solution $(M,Y)$, $(N,M)$ is live by Theorem~\ref{OnePlaceCommonReachability},
thus no siphon of $N$ is deadlocked at $M$.

\noindent $(\Leftarrow)$ Suppose there is no solution $(M,Y)$ to the state equation $M = I \cdot Y + M_0$ 
such that some siphon of $N$ is deadlocked at $M$.
In particular, no siphon is deadlocked at any marking reachable in $S=(N,M_0)$.
Since $S$ is an HAC system, it is live by Proposition~\ref{LivenessOfHAC}.
\end{proof}

We need the next property of strongly connected \WMGineq{}, which are WMG.
It is deduced from Theorem~4.10.3 in~\cite{WTS92}.

\begin{proposition}[Deadlocks, liveness and feasible sequences in WMG]\label{WMGnoInfiniteExec}
Let $S=(N,M_0)$ be a strongly connected \WMGineq{}. 
$S$ is deadlock-free iff it is live.
Moreover, if $S$ enables an infinite sequence $\sigma$, then $\sigma$ contains an infinite number of occurrences of each transition.
\end{proposition}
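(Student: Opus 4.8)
The statement has two parts: first, that deadlock-freeness is equivalent to liveness in a strongly connected WMG; second, that any infinite feasible sequence fires every transition infinitely often. The plan is to derive both from Theorem~4.10.3 of~\cite{WTS92} as cited, but also to sketch a self-contained argument so the reader sees the mechanism.

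For the first equivalence, one direction is trivial: a live system is obviously deadlock-free, since from every reachable marking some transition (indeed, eventually all of them) can be fired. For the converse, I would assume $S$ is deadlock-free and not live and derive a contradiction. If $S$ is not live, then some reachable marking $M_1$ admits a transition $t$ that is dead from $M_1$ on. Starting from $M_1$, since $S$ is deadlock-free, we can fire an infinite sequence $\sigma$; this is where the second part of the proposition feeds back in — but to avoid circularity I would instead invoke the structural fact about strongly connected WMGs directly. The key structural observation is that in a strongly connected WMG, the incidence matrix admits a positive T-semiflow and a positive P-semiflow (consistency and conservativeness), so the net is well-behaved: along any infinite execution from $M_1$, Dickson's lemma (as in Lemma~\ref{UsingDicksonsLemma}) gives a finite subsequence $\sigma'$ with $\Parikh(\sigma')\ge\one$... but wait, that already presupposes all transitions fire, which is exactly the second claim. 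So the honest structure is: (i) prove the second claim first using the strongly-connected structure, (ii) then use it to finish the first.

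For the second claim — that an infinite feasible $\sigma$ fires every transition infinitely often — I would argue as follows. Let $T_\infty\subseteq T$ be the set of transitions occurring infinitely often in $\sigma$, and suppose for contradiction that $T_\infty\neq T$. After some finite prefix, only transitions in $T_\infty$ are fired. Consider a place $p$ that is an output of some transition in $T_\infty$ and an input of some transition in $T\setminus T_\infty$; such a $p$ exists because the underlying graph is strongly connected (so there is an arc, hence a place, crossing from the $T_\infty$-side to the complement-side of any nontrivial partition) — here I use that in a WMG each place has exactly one input and one output transition. Then $p$ receives tokens infinitely often from its (unique) input transition in $T_\infty$ but its unique output transition lies in $T\setminus T_\infty$ and fires only finitely often, so $M(p)$ grows without bound. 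This by itself is not yet a contradiction (WMGs need not be bounded), so I instead pick $p$ the other way: a place whose unique input transition is in $T\setminus T_\infty$ and whose unique output transition $u$ is in $T_\infty$. Such a $p$ again exists by strong connectedness. After the finite prefix, $p$ receives no more tokens, yet $u$ fires infinitely often and consumes $W(p,u)>0$ tokens each time — impossible since $M(p)$ would go negative. This is the contradiction, so $T_\infty=T$.

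For the first claim's converse, now with the second claim in hand: if $S$ is deadlock-free but not live, take $M_1\in R(S)$ and a transition $t$ dead from $M_1$; fire an infinite sequence $\sigma$ from $M_1$ (possible by deadlock-freeness), and by the second claim $\sigma$ fires $t$ infinitely often — contradicting that $t$ is dead from $M_1$. Hence $S$ is live. The main obstacle is the second claim's selection of the right crossing place: one must be careful that in a strongly connected WMG every transition has at least one input place (otherwise it is always enabled and the partition argument needs adjustment) — but strong connectedness of the underlying net forces every transition to have both an input and an output place, so the argument goes through cleanly.
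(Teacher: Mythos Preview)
Your proposal is correct and matches the paper's approach for the second claim: the paper argues that if $t$ fires infinitely often then every transition in $\lbul(\lbul t)$ must fire infinitely often (else the unique input place of $t$ with that predecessor would be drained), and then propagates backward by strong connectedness---this is exactly the contrapositive of your crossing-place argument. For the first claim the paper simply defers to Theorem~4.10.3 of~\cite{WTS92}, whereas you supply a self-contained derivation from the second claim (deadlock-freeness yields an infinite run from any reachable $M_1$, which by the second claim fires the allegedly dead $t$); this is a nice addition and is sound, since the second claim depends only on the net structure and applies equally to $(N,M_1)$.
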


\begin{proof}
The first claim is Theorem~4.10.3 in~\cite{WTS92}.
Now, if some transition $t$ is fired an infinite number of times in some feasible sequence~$\sigma$, 
then each transition in $\lbul (\lbul t)$ is fired an infinite number of times,
and by strong connectedness each transition of $N$ is fired an infinite number of times in $\sigma$, hence the second claim.
\end{proof}

We derive next corollary.

\begin{corollary}[Checking liveness in a strongly connected H$1$S-\WMGineq{}]\label{CheckLivenessH1SWMG}
Let $S=(N,M_0)$ be a (non-trivial) strongly connected H$1$S-\WMGineq{} with incidence matrix $I$
such that the deletion of the shared place (if any) yields a strongly connected~\WMGineq{}.
Then, $S$ is live iff there is no solution $(M,Y)$ to its state equation $M = I \cdot Y + M_0$ such that $(N,M)$ is deadlocked.
\end{corollary}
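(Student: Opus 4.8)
The plan is to reduce the statement to Corollary~\ref{LiveCharH1SWMG}, which already characterizes liveness of an H$1$S-\WMGineq{} by the absence of a state-equation solution $(M,Y)$ with a deadlocked \emph{siphon} at $M$. Under the extra strong-connectedness hypotheses it then suffices to show that ``some siphon of $N$ is deadlocked at $M$'' may be replaced by the simpler ``$(N,M)$ is globally deadlocked''. The implication $(\Rightarrow)$ is the easy one: if $S$ is live, then by Theorem~\ref{OnePlaceCommonReachability} every $M\in PR(S)$ is live, and since $S$ is non-trivial a live marking enables at least one transition, so $(N,M)$ is not deadlocked; hence no solution $(M,Y)$ of the state equation has $(N,M)$ deadlocked.

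For $(\Leftarrow)$, I would assume that no solution $(M,Y)$ makes $(N,M)$ deadlocked and prove condition~(a): no solution $(M,Y)$ has a siphon $D$ of $N$ deadlocked at $M$; liveness then follows from Corollary~\ref{LiveCharH1SWMG}. So suppose, for contradiction, that such $(M,Y)$ and $D$ exist.

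The first ingredient is the standard fact that a deadlocked siphon stays deadlocked: for every $M'$ reachable from $M$ in $S$, $D$ is still deadlocked at $M'$, because any transition consuming from $D$ is disabled throughout, and $D$ being a siphon ($\lbul D\subseteq D\lbul$) a transition producing into $D$ also consumes from $D$; hence no firing from $M$ ever changes the token content of $D$, so every transition in $D\lbul$ is dead in $(N,M)$. Moreover $D\lbul\neq\emptyset$, since $D$ is non-empty and $N$ is strongly connected (so each place of $D$ has an output). Now, by K\"onig's lemma applied to the reachability tree of $(N,M)$, either some deadlock is reachable from $M$, or an infinite firing sequence $\sigma$ is feasible from $M$. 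If $\sigma$ is infinite, then since $S_{\textrm{WMG}}$ --- the strongly connected \WMGineq{} obtained by deleting the shared place --- is less constrained than $S$ with the same transitions, $\sigma$ is also feasible in $S_{\textrm{WMG}}$, and Proposition~\ref{WMGnoInfiniteExec} forces $\sigma$ to contain every transition infinitely often, contradicting the fact that $\sigma$ never fires any transition of $D\lbul\neq\emptyset$. Hence a deadlock $M'$ is reachable from $M$, say $M[\sigma\rangle M'$; but then $M' = M_0 + I\cdot(Y + \Parikh(\sigma))$, so $(M',\,Y+\Parikh(\sigma))$ is a solution of the state equation with $(N,M')$ deadlocked, contradicting the assumption. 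This establishes~(a), and Corollary~\ref{LiveCharH1SWMG} yields that $S$ is live.

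The step I expect to be the main obstacle is the infinite-run case: one must argue cleanly that a run from $M$ avoiding every deadlock exists, that deleting the shared place preserves feasibility of that run, and that Proposition~\ref{WMGnoInfiniteExec} then bars it because it omits the permanently dead transitions $D\lbul$. The supporting lemma that a deadlocked siphon remains deadlocked, together with the non-emptiness of $D\lbul$ under strong connectedness, are the small facts that make the argument go through; everything else is bookkeeping with the state equation.
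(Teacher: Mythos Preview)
Your proof is correct and takes a genuinely more economical route than the paper's. Both arguments reduce to Corollary~\ref{LiveCharH1SWMG} by showing that a state-equation solution $(M,Y)$ with a deadlocked siphon $D$ at $M$ yields a state-equation solution with a global deadlock; the difference lies in how the deadlock reachable from $M$ is produced.

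The paper case-splits on whether the shared place $p$ belongs to $D$. When $p\notin D$, the paper argues exactly as you do: the \WMGineq{} P-subsystem obtained by deleting $p$ still contains the deadlocked siphon $D$, hence by Proposition~\ref{WMGnoInfiniteExec} admits no infinite run, and therefore neither does the more constrained $(N,M)$. When $p\in D$, however, $D\setminus\{p\}$ need not be a siphon of the \WMGineq{} subsystem, so the paper builds an auxiliary strongly connected \WMGineq{} by splitting $p$ into several unit-weighted places, lifts $D$ to a siphon $D'$ there, applies Proposition~\ref{WMGnoInfiniteExec}, and transfers the resulting deadlock back to $(N,M)$.

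Your argument sidesteps this construction entirely. You never need $D$ (or a variant of it) to survive as a siphon in the \WMGineq{} subsystem: it suffices that every firing sequence from $M$ in $(N,M)$ omits the non-empty set $D\lbul$ of transitions. Any such sequence is feasible in $S_{\textrm{WMG}}$ (same transition set, fewer constraints), and Proposition~\ref{WMGnoInfiniteExec} then rules out an infinite one directly, with no case split and no auxiliary net. The K\"onig-lemma step you flag as the main obstacle is sound (the reachability tree is finitely branching); the paper uses the same implicit step in its second case. Your approach thus buys a uniform, shorter proof; the paper's explicit construction in the $p\in D$ case is not needed.
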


\begin{proof}
We show next that for each marking $M$, if some (non-empty) siphon $D$ is deadlocked at $(N,M)$, then $(N,M)$ is deadlockable.

The transition(s) with an input place or an output place in $D$ cannot be fired anymore 
(by definition of deadlocked siphons),
and
there is at least one such transition since $N$ is strongly connected with at least one place and one transition.

If $N$ is a (strongly connected) \WMGineq{}, Proposition~\ref{WMGnoInfiniteExec} applies and a deadlock is thus reachable in $(N,M)$. 
Otherwise, suppose in the following that $N$ has one shared place $p$.
We have two cases:

\noindent $-$ $p$ belongs to $D$: 
we construct a new system $(N',M')$ where $N'=(P',T,W')$, as follows.
Denote by $t_{1,p} \ldots t_{m,p}$ the outgoing transitions of $p$ 
and
by $t_{1,p}' \ldots t_{n,p}'$ the ingoing transitions of~$p$;
define $u = \min(m,n) \ge 1$ and $v = \max(m,n)\ge 1$,
delete $p$ (with its adjacent arcs) and add $v$ places $p_{1,p} \ldots p_{v,p}$ with the following unit-weighted arcs:
for each $i\in \{1,v\}$, we set $W'(x,p_{i,p})=1$ where $x = t_{i,p}'$ if $i \le n$, $x = t_{n,p}'$ otherwise;
for each $i\in \{1,v\}$, we set $W'(p_{i,p},x)=1$ where $x = t_{i,p}$ if $i \le m$, $x = t_{m,p}$ otherwise.
Let us define $NewPlaces = \{p_{1,p} \ldots p_{v,p}\}$.
For each place $p'$, we set $M'(p') = 0$ if $p' \in NewPlaces$ and $M'(p') = M(p')$ otherwise.
Hence, $P' = (P\setminus \{p\})\cup NewPlaces$.

Clearly, $(N',M')$ is a strongly connected \WMGineq{} with a deadlocked siphon $D' = (D \setminus \{p\}) \cup NewPlaces$. 
By Proposition~\ref{WMGnoInfiniteExec}, 
$(N',M')$ enables some sequence $\sigma_d$ that leads to some deadlock $M_d''$
and 
that does not contain any transition having an input or an output in $D'$.
Consequently, by construction, $\sigma_d$ is feasible in $(N,M)$ since it contains no occurrence of transitions 
having $p$ as an input or an output,
leading to the marking $M_d$ such that
for each place $p_D$ in $D$, $M_d(p_D) = M(p_D)$ and for each other place $p_{\overline{D}}$,
$M_d(p_{\overline{D}}) = M_d''(p_{\overline{D}})$.
We deduce that $M_d$ is a deadlock reachable in $(N,M)$.

\noindent $-$ $p$ does not belong to $D$:
thus, the strongly connected \WMGineq{} P-subsystem obtained from $(N,M)$
by deleting $p$ contains the deadlocked siphon $D$
and has no infinite feasible sequence by Proposition~\ref{WMGnoInfiniteExec}.
Consequently,
since $(N,M)$ is more constrained than this P-subsystem, it does not enable any infinite sequence either,
hence has some reachable deadlock.

Under the given assumptions,
we deduce that checking the existence of a solution $(M_d,Y)$ to the state equation $M_d = I \cdot Y + M_0$ 
such that $M_d$ is a deadlock
is equivalent to checking the existence of a solution $(M,Y)$ to $M = I \cdot Y + M_0$ 
such that some siphon of $N$ is deadlocked at~$M$.
Applying Corollary~\ref{LiveCharH1SWMG}, we deduce the claim.
\end{proof}

The class of nets studied in Corollary~\ref{CheckLivenessH1SWMG} is expressive enough to model our use-cases in Section~\ref{UseCase}.

\subsubsection{ILP of poly-size for checking liveness of structurally bounded \texorpdfstring{H$1$S-WMG}{}}\label{SubsecSmallILP}

In what follows, 
exploiting the work of~\cite{LHS1993}, we provide an ILP that encodes the non-liveness of any structurally bounded H$1$S-WMG system 
satisfying the conditions of Corollary~\ref{CheckLivenessH1SWMG}.
We consider the decision version (without objective function) of the classical ILP problem, 
i.e. the problem of deciding the existence of a solution to a given integer linear system.
More precisely, 
we show that the system is live if and only if the associated ILP is infeasible (i.e.\ has no solution).
The number of inequalities of this ILP, as well as 
the length of each of its inequalities and its number of variables,
are linear in the number of places, transitions 
and the number of bits in the largest binary-encoded number (among the weights and upper bounds on structural bounds, defined in what follows).
This trims down the complexity of liveness checking to co-NP, 
the problem description being given as the Petri net system or as the corresponding ILP.
Notice that structural boundedness can be checked in polynomial-time using its following characterization, 
extracted from~\cite{LAT98}:

\begin{proposition}[Characterization of (non-)structural boundedness (Corollary~16 in~\cite{LAT98})]
A net with incidence matrix $I$ is not structurally bounded iff there exists a vector $Y \gneq \zero$ of rational numbers 
such that $I \cdot Y \gneq \zero$.
\end{proposition}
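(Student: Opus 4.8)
This is an equivalence, and the plan is to treat the two implications separately: the ``if'' direction is a short monotone pumping argument, and the ``only if'' direction is obtained by quoting (or re-deriving) the classical self-covering property of unbounded systems. For the ``if'' direction, suppose a rational $Y \gneq \zero$ with $I\cdot Y \gneq \zero$ exists; multiplying by a common denominator I may assume $Y \in \N^{T}$, still with $Y \gneq \zero$ and $I\cdot Y \gneq \zero$. I would then pick any marking $M$ with $M(p) \ge \sum_{t\in T} Y(t)\,W(p,t)$ for every place $p$ and observe that every linearisation $\sigma$ of the multiset of transitions encoded by $Y$ (so that $\Parikh(\sigma)=Y$) is fireable from $M$: before firing any occurrence of a transition $t$, the count of tokens in $p$ is at least $M(p)$ minus the tokens already consumed from $p$, and that quantity plus $W(p,t)$ never exceeds $\sum_{t'} Y(t')\,W(p,t') \le M(p)$. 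Firing $\sigma$ yields $M\,[\sigma\rangle\,M + I\cdot Y$ with $M + I\cdot Y \ge M$ and $M + I\cdot Y \neq M$ (because $I\cdot Y \gneq \zero$); since the marking bound above is preserved, $\sigma$ is fireable again, and iterating produces the infinite chain $M, M+I\cdot Y, M+2\,I\cdot Y, \dots$ of reachable markings, unbounded in at least one place. Hence $(N,M)$ is unbounded and $N$ is not structurally bounded.

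For the ``only if'' direction, suppose $N$ is not structurally bounded, so some marking $M_0$ makes $(N,M_0)$ unbounded. Here I would invoke the classical fact --- a byproduct of the Karp--Miller construction, see e.g.~\cite{LAT98,DesEsp} --- that an unbounded system admits a \emph{self-covering} run: a marking $M$ reachable from $M_0$ and a nonempty sequence $\tau$ with $M\,[\tau\rangle\,M'$ and $M' \gneq M$. Taking $Y = \Parikh(\tau)$ then finishes it at once: $Y$ is an integer (hence rational) $T$-vector, $Y \gneq \zero$ since $\tau \neq \emptyseq$, and $I\cdot Y = M' - M \gneq \zero$ since $M' \gneq M$. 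If one prefers to avoid quoting Karp--Miller, the self-covering lemma can be re-derived from K\"onig's lemma applied to the reachability tree (which is infinite exactly when the system is unbounded) together with Dickson's lemma~\cite{Dickson1919}.

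I expect the only genuine obstacle to be the step in the ``only if'' direction of turning unboundedness into a \emph{strictly} covering pair of markings rather than a merely non-decreasing subsequence; the clean way around it is to exploit that unboundedness singles out a concrete place whose value diverges along some branch, which forces the strict inequality. Everything else --- the scaling of $Y$ to an integer vector, the fireability bound, and the Parikh-vector bookkeeping --- is routine, and routing the converse through the self-covering lemma has the pleasant side effect of yielding an integral certificate, so that the passage to the rational $Y$ required by the statement is immediate.
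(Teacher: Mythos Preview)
The paper does not prove this proposition at all: it is stated as a citation of Corollary~16 in~\cite{LAT98} and used as a black box to justify that structural boundedness is checkable in polynomial time. So there is no ``paper's own proof'' to compare against; what you have written is a self-contained argument that the paper does not attempt.

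Your argument is essentially correct. The ``if'' direction is clean: clearing denominators, choosing $M$ large enough that any linearisation of $Y$ is fireable, and iterating is exactly the standard pumping. The ``only if'' direction via the self-covering property of unbounded systems (as delivered by the Karp--Miller construction) is also the standard route and is sound.

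One small slip worth flagging: in your alternative derivation you write that the reachability tree ``is infinite exactly when the system is unbounded''. That equivalence is false --- any bounded system admitting an infinite firing sequence (for instance a single place with a self-loop transition) has an infinite reachability tree. What \emph{is} true is that an unbounded system has infinitely many distinct reachable markings, hence an infinite reachability tree, and along any infinite branch Dickson's lemma yields $i<j$ with $M_i \le M_j$; getting the \emph{strict} inequality then needs the extra observation you gesture at (that some coordinate diverges), or simply the Karp--Miller argument you already cite as primary. Since your main route goes through Karp--Miller, this does not affect the correctness of the proof, but the parenthetical should be dropped or corrected.
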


By Corollary~\ref{CheckLivenessH1SWMG}, to check liveness, we only need to check that no potentially reachable marking is a deadlock,
under the given assumptions.

So as to design the ILP, we extract several notions and results from~\cite{LHS1993} and adapt them to the H$1$S class.
Notably, we propose a variant of a transformation given in~\cite{LHS1993},
transforming the given H$1$S system into another H$1$S system on which the ILP is defined.
We will show that the system obtained has a size polynomial in the size of the original one
and preserves several of its properties, including deadlockability and liveness.
Hence, checking liveness of the new system provides the answer for the original system, the problem remaining in co-NP.\\

Consider any weighted system $S=(N,M_0)$. The following expression describes the fact that $M$ 
is a deadlock potentially reachable from $M_0$:
\begin{equation}\label{SystemEq1} 
(M=M_0+I\cdot Y) \land \left(\bigwedge\limits_{t\in T} \left(\bigvee\limits_{p\in \lbul t} M(p) < W(p,t)\right)\right)
\end{equation}
 
Due to the disjunction, condition~\eqref{SystemEq1} is a set of $\Pi_{t\in T} |\lbul t|$ linear systems.
In the case of HFC nets, this number can be reduced to a single system of linear inequalities,
the number of which is linear in the number of places and transitions,
each inequality length being also linear in the number of places and transitions as well as the encoding size of numbers~\cite{LHS1993}.
To achieve it, the authors of~\cite{LHS1993} provide two system transformations that preserve deadlockability
and simplify the expression of condition~\eqref{SystemEq1},
exploiting the notion of \emph{structural bound}, defined as follows.

The \emph{structural bound} of a place $p$ in any weighted system $S$ is defined as follows:
$$SB(p,S) \eqdef \max\{M(p) \, | \, M \in PR(S)\}.$$

Notice that in each structurally bounded system, each place has a structural bound.
Let us recall the next sufficient condition of linearity for the non-fireability condition of a transition.

\begin{proposition}[Sufficient condition of linearity for non-fireability: Theorem~5.4 in~\cite{LHS1993}]\label{PropTh5.4}
Consider any system $S$ in which each place has a structural bound.
Let $t$ be a transition such that $\lbul t = \pi \cup p'$ and $\forall p \in \pi$: $SB(p,S) \le W(p,t)$.
The non-enabledness of $t$ at some marking $M \in PR(S)$ can be written as the following integer linear inequality:
$$SB(p',S) \sum_{p\in\pi} M(p) + M(p') < SB(p',S) \sum_{p \in \pi} W(p,t)+W(p',t).$$
\end{proposition}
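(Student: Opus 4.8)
The plan is to verify that the inequality correctly expresses the non-enabledness of $t$ for markings inside the potential reachability set, using the hypothesis that every place in $\pi$ has a structural bound not exceeding its input weight to $t$. First I would recall that $t$ is not enabled at $M$ exactly when $M(p) < W(p,t)$ for some $p \in \lbul t = \pi \cup \{p'\}$. The key observation is that, thanks to the assumption $SB(p,S) \le W(p,t)$ for every $p \in \pi$, each such $p$ satisfies $M(p) \le W(p,t)$ for all $M \in PR(S)$; hence the only way $t$ can be \emph{enabled} at a potentially reachable $M$ is to have $M(p) = W(p,t)$ for every $p \in \pi$ \emph{and} $M(p') \ge W(p',t)$. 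Conversely, $t$ is non-enabled iff either $M(p') < W(p',t)$, or $M(p) < W(p,t)$ (equivalently $M(p) \le W(p,t) - 1$) for at least one $p \in \pi$.

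The heart of the argument is then to show that this disjunctive condition is equivalent, on $PR(S)$, to the single displayed linear inequality. I would argue both directions. For the forward direction, suppose $t$ is non-enabled at $M \in PR(S)$. If some $p_0 \in \pi$ has $M(p_0) \le W(p_0,t) - 1$, then since all other $p \in \pi$ satisfy $M(p) \le W(p,t)$ and $M(p') \le SB(p',S)$, summing with the weight $SB(p',S)$ on the $\pi$-part yields
$$
SB(p',S)\sum_{p\in\pi} M(p) + M(p') \le SB(p',S)\Bigl(\sum_{p\in\pi} W(p,t) - 1\Bigr) + SB(p',S) = SB(p',S)\sum_{p\in\pi} W(p,t),
$$
which is strictly less than $SB(p',S)\sum_{p\in\pi} W(p,t) + W(p',t)$ since $W(p',t) \ge 1$. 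If instead $M(p) = W(p,t)$ for all $p \in \pi$ but $M(p') < W(p',t)$, then $SB(p',S)\sum_{p\in\pi}M(p) + M(p') = SB(p',S)\sum_{p\in\pi}W(p,t) + M(p') < SB(p',S)\sum_{p\in\pi}W(p,t) + W(p',t)$, again as desired. For the converse, assume the displayed inequality holds at $M \in PR(S)$ but, for contradiction, that $t$ \emph{is} enabled; then $M(p) \ge W(p,t)$ for all $p \in \pi$ and $M(p') \ge W(p',t)$, and substituting these lower bounds into the left-hand side gives a value at least $SB(p',S)\sum_{p\in\pi}W(p,t) + W(p',t)$, contradicting strict inequality.

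I expect the main obstacle to be bookkeeping rather than conceptual: making sure the coefficient $SB(p',S)$ is large enough that a deficit of even one token on \emph{any single} $p \in \pi$ dominates the at-most-$SB(p',S)$ surplus contributed by $M(p')$, while simultaneously a deficit on $p'$ alone (with all $\pi$-places exactly at weight) still drops the left-hand side below the threshold. Both cases hinge on the same inequality $W(p',t) \ge 1$ and on $M(p') \le SB(p',S)$, so the delicate point is simply confirming that no other combination of token counts in $PR(S)$ can satisfy the linear inequality while leaving $t$ enabled — which is exactly what the substitution in the converse direction rules out. The restriction $SB(p,S) \le W(p,t)$ for $p \in \pi$ is used crucially to guarantee $M(p) \le W(p,t)$ on $PR(S)$, so that the single "slack" place is $p'$; without it the linearization would fail, and this is where one must be careful to invoke the hypothesis.
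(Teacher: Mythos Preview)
The paper does not supply its own proof of this proposition: it is explicitly recalled as Theorem~5.4 of~\cite{LHS1993} and stated without argument. Your proposal is a correct self-contained verification of the equivalence, and the reasoning you give---splitting the forward direction according to whether the deficit lies in some $p_0\in\pi$ or in $p'$, and handling the converse by direct substitution---is exactly the intended elementary argument. Since there is no in-paper proof to compare against, there is nothing further to contrast; your write-up stands on its own.
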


When this sufficient condition applies, condition~\eqref{SystemEq1} 
can be rewritten as an expression of length linear in $|P|\cdot |T|\cdot m$, 
where $m$ is the length of the binary encoding of the largest value among the structural bounds and the weights.

\noindent {\bf Using upper bounds encoded with a polynomial number of bits.}
For our purpose, it is sufficient to replace each structural bound with an upper bound whose binary encoding  
has a number of bits polynomial in the input length.
Such upper bounds can be computed in weakly polynomial-time using linear programming over the rational numbers~\cite{LAT98},
hence we suppose in the following that such upper bounds belong to the input of the problem.\\

\noindent {\bf Transformation $\Theta$ for weighted systems.} 
Taking inspiration from~\cite{LHS1993}, we propose a transformation $\Theta$ applying to any weighted system $S$, 
yielding another system $S^\Theta$ with polynomial increase in size 
while preserving deadlockability, structural boundedness and other properties.
We will apply it to $1$S systems, i.e.\ H$1$S without the homogeneity constraint.
In particular, if $S$ is a structurally bounded $1$S system, then $S^\Theta$ fulfills all conditions of Proposition~\ref{PropTh5.4}.
This allows to express deadlocks of $PR(S^\Theta)$ 
with an integer linear inequality system of size linear in $|P^\Theta|\cdot |T^\Theta|\cdot m$, 
with $N^\Theta = (P^\Theta,T^\Theta,W^\Theta)$ where $T^\Theta$ has size at most $|T|+|P|$.
Hence the encoding size of the ILP is linear in $|P|\cdot (|T|+|P|) \cdot m$.\\ 
This transformation is described by Algorithm~\ref{AlgoTheta}, which clearly terminates.
It consists in applying transformation $\Theta_2$ of~\cite{LHS1993} to each pair $(p,t)$ such that $p\lbul = \{t\}$
and $t$ has at least two inputs;
it is illustrated in Figures~\ref{FigAlgoTheta},~\ref{FigAlgoThetaH1S} and~\ref{FigAlgoThetaSwimming}.
Properties of the transformation are then stated in Theorem~\ref{ThTransfoDeadlock}.

\begin{algorithm}[!h]\label{AlgoTheta}
   \KwData{A system $S=(N,M_0)$.}
   \KwResult{The transformed system $S^\Theta=(N^\Theta,M_0^\Theta)$ where $N^\Theta=(P^\Theta,T^\Theta,W^\Theta)$.}
	$(N^\Theta,M_0^\Theta) := (N,M_0)$\;
    \ForAll{$(p,t)$ such that $p\lbul = \{t\}$ and $t$ has at least two inputs}{
		Add to $P^\Theta$ two places $p_a^{(p,t)}$ and $p_b^{(p,t)}$ with 
		$M_0^\Theta(p_a^{(p,t)}) := 0$ and  $M_0^\Theta(p_b^{(p,t)}) := 1$\;
		Add to $T^\Theta$ the transition $t_p^{(p,t)}$\;
		Add to $W^\Theta$ the following arcs:
		$W^\Theta(p,t_p^{(p,t)}) := W^\Theta(p,t)$\;
		$W^\Theta(t_p^{(p,t)},p_a^{(p,t)}) := W^\Theta(p_a^{(p,t)},t) := W^\Theta(t,p_b^{(p,t)}) := W^\Theta(p_b^{(p,t)}, t_p^{(p,t)}) := 1$\;
		Remove from $W^\Theta$ the arc $(p,t)$\;
    }
    \Return{$S^\Theta=(N^\Theta,M_0^\Theta)$}

    \caption{Transformation of the given system into another system.}
\end{algorithm}

\begin{figure}[!h]
 
\begin{minipage}{0.4\linewidth}

\centering

\begin{tikzpicture}[mypetristyle,scale=1]

\node (p) at (0,0) [place,thick,inner sep=1pt] {\scriptsize $m$};
\node [anchor=north] at (p.south) {$p$};

\node (p') at (1.5,1) [place,thick,inner sep=1pt] {\scriptsize $m'$};
\node [anchor=west] at (p'.east) {$p'$};

\node (t) at (0,1) [transition,thick] {};
\node [anchor=south] at (t.north) {$t$};

\node (t') at (1.5,2) [transition,thick] {};
\node [anchor=east] at (t'.west) {$t'$};

\draw [->,thick] (p) to node [left] {$w$} (t);
\draw [->,thick] (p') to node [above] {$w'$} (t);
\draw [->,thick] (p') to node [right] {$w''$} (t');

\end{tikzpicture}

\end{minipage}
$\xrightarrow{\textrm{Transformation }\Theta}{}{}$
\begin{minipage}{0.4\linewidth}

\centering

\begin{tikzpicture}[mypetristyle,scale=1]

\node (p) at (0,0) [place,thick,inner sep=1pt] {\scriptsize $m$};
\node (p') at (1.5,2.5) [place,thick,inner sep=1pt] {\scriptsize $m'$};
\node (pa) at (1,1.75) [place,thick,inner sep=1pt] {};
\node (pb) at (-1,1.75) [place,thick,tokens=1] {};

\node [anchor=west] at (p.east) {$p$};
\node [anchor=west] at (p'.east) {$p'$};
\node [anchor=east] at (pb.west) {$p_b^{(p,t)}$};
\node [anchor=west] at (pa.east) {$p_a^{(p,t)}$};

\node (tp) at (0,1) [transition,thick] {};
\node (t) at (0,2.5) [transition,thick] {};
\node (t') at (1.5,3.5) [transition,thick] {};

\node [anchor=east] at (t.west) {$t$};
\node [anchor=east] at (t'.west) {$t'$};
\node [anchor=west] at (tp.south east) {$t_p^{(p,t)}$};

\draw [->,thick] (p) to node [left] {$w$} (tp);
\draw [->,thick] (p') to node [above] {$w'$} (t);
\draw [->,thick] (p') to node [right] {$w''$} (t');
\draw [->,thick] (tp) to node [left] {} (pa);
\draw [->,thick] (pa) to node [left] {} (t);
\draw [->,thick] (t) to node [left] {} (pb);
\draw [->,thick] (pb) to node [left] {} (tp);

\end{tikzpicture}

\end{minipage}

\caption{Application of transformation $\Theta$ to the system on the left, yielding the system on the right. 
On the left, we assumed that $p\lbul = \{t\}$ (thus $p$ is not shared) and $t$ has at least two inputs (exactly two here); 
also $p$ and $p'$ are allowed to have several inputs,
$t$ and $t'$ are allowed to have several outputs,
and the arcs from $p'$ to $t$ and~$t'$ remain unchanged since $p'$ is shared.}

\label{FigAlgoTheta}

\end{figure}
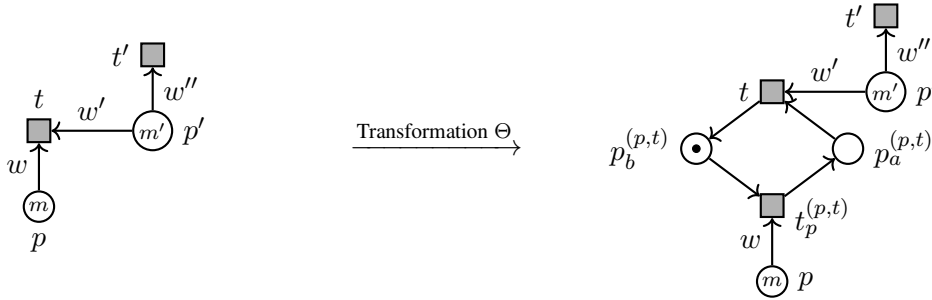

\begin{figure}[!h]
 
\begin{minipage}{0.31\linewidth}

\centering

\begin{tikzpicture}[mypetristyle,scale=1.2]

\node (p) at (1,1) [place,thick,inner sep=1pt] {\scriptsize $m$};

\node [anchor=north east] at (p.south west) {$p$};

\node (p1) at (0,0) [place,thick,inner sep=1pt] {\scriptsize $m_1$};
\node (p2) at (0,2) [place,thick,inner sep=1pt] {\scriptsize $m_2$};
\node (p3) at (2,2) [place,thick,inner sep=1pt] {\scriptsize $m_3$};
\node (p4) at (2,0) [place,thick,inner sep=1pt] {\scriptsize $m_4$};

\node [anchor=north east] at (p1.south west) {$p_1$};
\node [anchor=south east] at (p2.north west) {$p_2$};
\node [anchor=south west] at (p3.north east) {$p_3$};
\node [anchor=north west] at (p4.south east) {$p_4$};

\node (t1) at (0,1) [transition,thick] {};
\node (t2) at (1,2) [transition,thick] {};
\node (t3) at (2,1) [transition,thick] {};
\node (t4) at (1,0) [transition,thick] {};

\node [anchor=east] at (t1.west) {$t_1$};
\node [anchor=south] at (t2.north) {$t_2$};
\node [anchor=west] at (t3.east) {$t_3$};
\node [anchor=north] at (t4.south) {$t_4$};

\draw [->,thick] (p1) to node [left] {$2$} (t1);
\draw [->,thick] (t1) to node [left] {} (p2);
\draw [->,thick] (t1) to node [left] {} (p);

\draw [->,thick] (p2) to node [left] {} (t2);
\draw [->,thick] (p) to node [right] {$2$} (t2);
\draw [->,thick] (t2) to node [above] {$3$} (p3);

\draw [->,thick] (p3) to node [right] {$3$} (t3);
\draw [->,thick] (t3) to node [right] {} (p);
\draw [->,thick] (t3) to node [right] {$2$} (p4);

\draw [->,thick] (p) to node [right] {$5$} (t4);
\draw [->,thick] (p4) to node [below] {$2$} (t4);
\draw [->,thick] (t4) to node [below] {$4$} (p1);

\end{tikzpicture}

\end{minipage}
$\xrightarrow{\textrm{\scriptsize Transformation }\Theta}{}{}$
\begin{minipage}{0.55\linewidth}

\centering

\begin{tikzpicture}[mypetristyle,scale=1.5]

\node (p) at (1.5,1.5) [place,thick,inner sep=1pt] {\scriptsize $m$};

\node [anchor=north] at (p.south) {$p$};

\node (p1) at (0,0) [place,thick,inner sep=1pt] {\scriptsize $m_1$};
\node (p2) at (0,3) [place,thick,inner sep=1pt] {\scriptsize $m_2$};
\node (p3) at (3,3) [place,thick,inner sep=1pt] {\scriptsize $m_3$};
\node (p4) at (3,0) [place,thick,inner sep=1pt] {\scriptsize $m_4$};

\node [anchor=north east] at (p1.south west) {$p_1$};
\node [anchor=south east] at (p2.north west) {$p_2$};
\node [anchor=south west] at (p3.north east) {$p_3$};
\node [anchor=north west] at (p4.south east) {$p_4$};



\node (pa2) at (1.5,2.75) [place,thick,tokens=0] {};
\node (pb2) at (1.5,3.25) [place,thick,tokens=1] {};

\node [anchor=north,inner sep=1pt] at (pa2.south) {\scriptsize $p_a^{(p_2,t_2)}$};
\node [anchor=south,inner sep=1pt] at (pb2.north) {\scriptsize $p_b^{(p_2,t_2)}$};

%

\node (pa4) at (1.5,-0.25) [place,thick,tokens=0] {};
\node (pb4) at (1.5,0.25) [place,thick,tokens=1] {};

\node [anchor=north,inner sep=1pt] at (pa4.south) {\scriptsize $p_a^{(p_4,t_4)}$};
\node [anchor=south,inner sep=1pt] at (pb4.north) {\scriptsize $p_b^{(p_4,t_4)}$};

\node (t1) at (0,1.5) [transition,thick] {};
\node (t2) at (2,3) [transition,thick] {};
\node (t3) at (3,1.5) [transition,thick] {};
\node (t4) at (1,0) [transition,thick] {};

\node [anchor=east] at (t1.west) {$t_1$};
\node [anchor=south] at (t2.north) {$t_2$};
\node [anchor=west] at (t3.east) {$t_3$};
\node [anchor=north] at (t4.south) {$t_4$};

\node (tp2) at (1,3) [transition,thick] {};
\node (tp4) at (2,0) [transition,thick] {};

\node [anchor=north] at (tp2.south west) {$t_{p_2}^{(p_2,t_2)}~~~$};
\node [anchor=south] at (tp4.north east) {$~~~~t_{p_4}^{(p_4,t_4)}$};

\draw [->,thick] (p1) to node [left] {$2$} (t1);
\draw [->,thick] (t1) to node [left] {} (p2);
\draw [->,thick,bend left=30] (t1) to node [left] {} (p);

\draw [->,thick] (p2) to node [left] {} (tp2);
\draw [->,thick,bend right=30] (p) to node [right] {$2$} (t2);
\draw [->,thick] (t2) to node [above] {$3$} (p3);

\draw [->,thick] (p3) to node [right] {$3$} (t3);
\draw [->,thick,bend left=30] (t3) to node [right] {} (p);
\draw [->,thick] (t3) to node [right] {$2$} (p4);

\draw [->,thick,bend right=30] (p) to node [left] {$5$} (t4);
\draw [->,thick] (p4) to node [below, near start] {$2$} (tp4);
\draw [->,thick] (t4) to node [below] {$4$} (p1);


\draw [->,thick] (t2) to node [below] {} (pb2);
\draw [->,thick] (pb2) to node [below] {} (tp2);
\draw [->,thick] (tp2) to node [below] {} (pa2);
\draw [->,thick] (pa2) to node [below] {} (t2);


\draw [->,thick] (t4) to node [below] {} (pa4);
\draw [->,thick] (pa4) to node [below] {} (tp4);
\draw [->,thick] (tp4) to node [below] {} (pb4);
\draw [->,thick] (pb4) to node [below] {} (t4);

\end{tikzpicture}

\end{minipage}

\caption{
Application of transformation $\Theta$ to the $1$S system on the left, yielding the $1$S system on the right. 
One system is deadlockable iff the other one is.
}

\label{FigAlgoThetaH1S}

\end{figure}
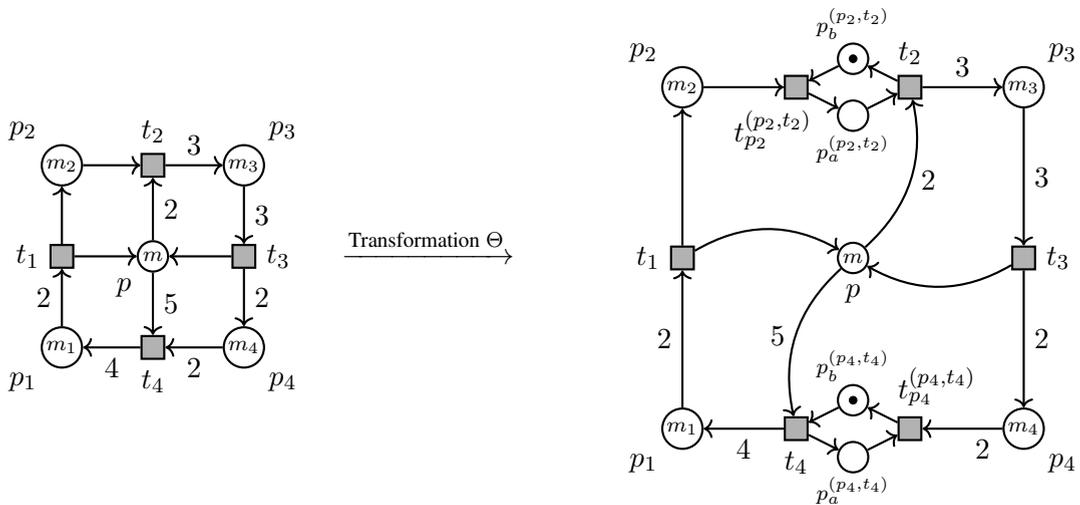

\begin{figure}[!h]
\centering
\begin{tikzpicture}[scale=0.7,mypetristyle]

\scriptsize

\node (p1) at (2.5,1.5) [place,thick,tokens=2] {};
\node [anchor=west] at (p1.east) {$p_1$};

\node (p2) at (4,3) [place,tokens=0] {};
\node [anchor=south] at (p2.north) {$p_2$};

\node (p3) at (7,3) [place] {};
\node [anchor=south] at (p3.north) {$p_3$};

\node (p4) at (8.5,1.5) [place,tokens=0] {};
\node [anchor=east,inner sep=1pt] at (p4.west) {$p_4$};

\node (p5) at (7,0) [place] {};
\node [anchor=north] at (p5.south) {$p_5$};

\node (p6) at (4,0) [place] {};
\node [anchor=north] at (p6.south) {$p_6$};

\node (p7) at (4,1.5) [place,tokens=3] {};
\node [anchor=north] at (p7.south) {$p_7$};

\node (p8) at (7,1.5) [place,tokens=4] {};
\node [anchor=east] at (p8.west) {$p_8$};

\node (t1) at (2.5,3) [transition,thick] {};
\node (t2) at (5.5,3) [transition,thick] {};
\node (t3) at (8.5,3) [transition,thick] {};
\node (t4) at (8.5,0) [transition,thick] {};
\node (t5) at (5.5,0) [transition,thick] {};
\node (t6) at (2.5,0) [transition,thick] {};

\node [anchor=south] at (t1.north) {$t_1$};
\node [anchor=south] at (t2.north) {$t_2$};
\node [anchor=south] at (t3.north) {$t_3$};
\node [anchor=north] at (t4.south) {$t_4$};
\node [anchor=north] at (t5.south) {$t_5$};
\node [anchor=north] at (t6.south) {$t_6$};

\draw [->,thick] (t6) to node [] {} (p6);
\draw [->,thick] (p6) to node [] {} (t5);
\draw [->,thick] (t5) to node [] {} (p5);
\draw [->,thick] (p5) to node [] {} (t4);
\draw [->,thick] (t4) to node [] {} (p4);
\draw [->,thick] (p4) to node [] {} (t3);
\draw [->,thick] (t3) to node [] {} (p3);
\draw [->,thick] (p7) to node [] {} (t6);
\draw [->,thick] (p3) to node [] {} (t2);
\draw [->,thick] (t2) to node [] {} (p2);
\draw [->,thick] (p2) to node [] {} (t1);
\draw [->,thick] (t1) to node [] {} (p7);
\draw [->,thick] (p7) to node [] {} (t3.south west);
\draw [->,thick] (t4.north west) to node [] {} (p7);
\draw [->,thick] (t2) to node [] {} (p8);
\draw [->,thick] (p8) to node [] {} (t5);
\draw [->,thick] (t1) to node [] {} (p1);
\draw [->,thick] (p1) to node [] {} (t6);
\end{tikzpicture}\\
\vspace*{5mm}
\begin{tikzpicture}[scale=1,mypetristyle]
\hspace*{1.47cm}\draw [->,thick] (0,1) to node [right] {Transformation $\Theta$} (0,0);
\end{tikzpicture}\\
\vspace*{5mm}
\begin{tikzpicture}[scale=0.7,mypetristyle]

\scriptsize

\node (p1) at (2.5,4.5) [place,thick,tokens=2] {};
\node [anchor=east] at (p1.west) {$p_1$};

\node (p2) at (6,6) [place,tokens=0] {};
\node [anchor=south] at (p2.north) {$p_2$};

\node (p3) at (13.25,6) [place] {};
\node [anchor=south] at (p3.north) {$p_3$};

\node (p4) at (17,1.5) [place,tokens=0] {};
\node [anchor=east,inner sep=1pt] at (p4.west) {$p_4$};

\node (p5) at (13.25,0) [place] {};
\node [anchor=north] at (p5.south) {$p_5$};

\node (p6) at (4.5,0) [place] {};
\node [anchor=north] at (p6.south) {$p_6$};

\node (p7) at (7,3) [place,tokens=3] {};
\node [anchor=north] at (p7.south) {$p_7$};



\node (p8) at (12,3) [place,tokens=4] {};
\node [anchor=east] at (p8.west) {$p_8$};

\node (tp8) at (12,1.8) [transition,thick] {};
\node [anchor=west] at (tp8.east) {$t_{p_8}$};

\node (t1) at (2.5,6) [transition,thick] {};
\node (tp1) at (2.5,3) [transition,thick] {};
\node (t2) at (9.5,6) [transition,thick] {};
\node (t3) at (17,6) [transition,thick] {};
\node (t4) at (17,0) [transition,thick] {};
\node (tp4) at (17,3) [transition,thick] {};
\node (t5) at (9.5,0) [transition,thick] {};

\node (t6) at (2.5,0) [transition,thick] {};
\node (tp6) at (6.5,0) [transition,thick] {};

\node (pa1) at (1.5,1.5) [place,thick,tokens=0] {};
\node [anchor=east] at (pa1.west) {$p_a^{(p_1,t_6)}$};
\node (pb1) at (2.5,1.5) [place,thick,tokens=1] {};
\node [anchor=south west,inner sep=1pt] at (pb1.north) {$p_b^{(p_1,t_6)}$};


\node (pa4) at (17,4.5) [place,thick,tokens=0] {};
\node [anchor=east,inner sep=1pt] at (pa4.west) {$p_a^{(p_4,t_3)}$};
\node (pb4) at (18,4.5) [place,thick,tokens=1] {};
\node [anchor=west,inner sep=1pt] at (pb4.east) {$p_b^{(p_4,t_3)}$};


\node (pa6) at (8,0) [place,thick,tokens=0] {};
\node [anchor=north,inner sep=2pt] at (pa6.south east) {$p_a^{(p_6,t_5)}$};
\node (pb6) at (8,-1.5) [place,thick,tokens=1] {};
\node [anchor=north,inner sep=1pt] at (pb6.south) {$p_b^{(p_6,t_5)}$};

%

\node (pa8) at (10,1.8) [place,thick,tokens=0] {};
\node [anchor=east,inner sep=1pt] at (pa8.west) {$p_a^{(p_8,t_5)}$};
\node (pb8) at (10.75,0.85) [place,thick,tokens=1] {};
\node [anchor=west,inner sep=1pt] at (pb8.east) {$p_b^{(p_8,t_5)}$};


\node [anchor=south] at (t1.north) {$t_1$};
\node [anchor=east] at (tp1.west) {$t_{p_1}$};
\node [anchor=south] at (t2.north) {$t_2$};
\node [anchor=south] at (t3.north) {$t_3$};
\node [anchor=north] at (t4.south) {$t_4$};
\node [anchor=west] at (tp4.east) {$t_{p_4}$};
\node [anchor=north] at (t5.south) {$t_5$};
\node [anchor=north] at (t6.south) {$t_6$};
\node [anchor=north] at (tp6.south) {$t_{p_6}$};

\draw [->,thick] (tp1) to node [] {} (pa1);
\draw [->,thick] (pb1) to node [] {} (tp1);
\draw [->,thick] (t6) to node [] {} (pb1);
\draw [->,thick] (pa1) to node [] {} (t6);



\draw [->,thick] (tp4) to node [] {} (pa4);
\draw [->,thick] (pb4) to node [] {} (tp4);
\draw [->,thick] (t3) to node [] {} (pb4);
\draw [->,thick] (pa4) to node [] {} (t3);


\draw [->,thick] (tp6) to node [] {} (pa6);
\draw [->,thick] (pb6) to node [] {} (tp6);
\draw [->,thick] (t5) to node [] {} (pb6);
\draw [->,thick] (pa6) to node [] {} (t5);




\draw [->,thick] (p4) to node [] {} (tp4);
\draw [->,thick] (t6) to node [] {} (p6);

\draw [->,thick] (t5) to node [] {} (pb8);
\draw [->,thick] (pb8) to node [] {} (tp8);
\draw [->,thick] (tp8) to node [] {} (pa8);
\draw [->,thick] (pa8) to node [] {} (t5);

\draw [->,thick] (p2) to node [] {} (t1);

\draw [->,thick] (p6) to node [] {} (tp6);
\draw [->,thick] (t5) to node [] {} (p5);
\draw [->,thick] (p5) to node [] {} (t4);
\draw [->,thick] (t4) to node [] {} (p4);
%
\draw [->,thick] (p7) to node [] {} (t6);
\draw [->,thick] (p7.north east) to node [] {} (t3.south west);
\draw [->,thick] (p8) to node [] {} (tp8);

\draw [->,thick] (p3) to node [] {} (t2);
\draw [->,thick] (t2) to node [] {} (p2);
\draw [->,thick] (p2) to node [] {} (t1);
\draw [->,thick,bend right=15] (t1) to node [] {} (p7);
%
\draw [->,thick] (t3) to node [] {} (p3);
\draw [->,thick,bend right=15] (t4.north west) to node [] {} (p7);
\draw [->,thick] (t2) to node [] {} (p8);
%
\draw [->,thick] (t1) to node [] {} (p1);
\draw [->,thick] (p1) to node [] {} (tp1);

\end{tikzpicture}
\caption{
Below, the H$1$S system obtained by applying transformation $\Theta$ to the H$1$S system above.
}
\label{FigAlgoThetaSwimming}

\end{figure}
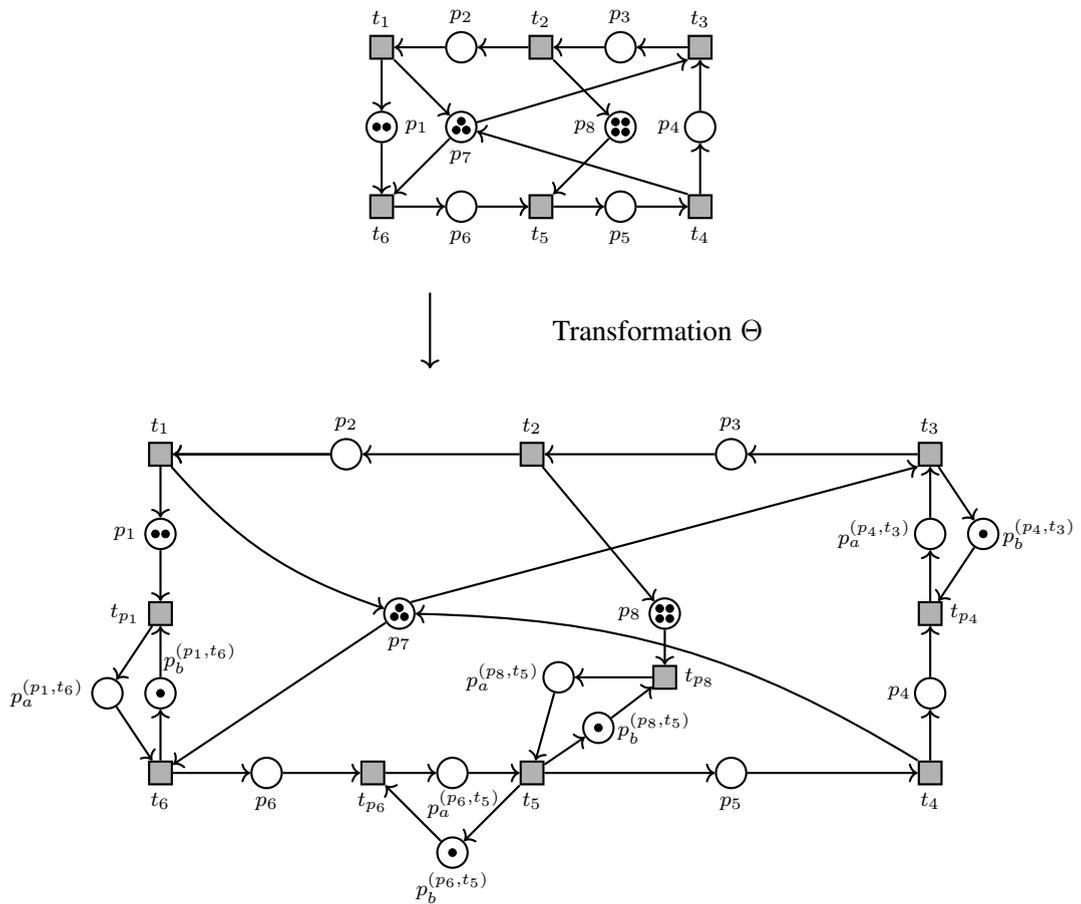

\clearpage

\noindent {\bf ILP for Figure~\ref{FigAlgoThetaSwimming}.}
Let us consider the H$1$S system $S^\Theta = (N^\Theta,M_0^\Theta)$ obtained in the figure.
The associated ILP is described by the state equation $M = M_0 + I \cdot Y$, where $M$ and $Y$ describe the variables,
in conjunction with the following linear inequalities enforcing non-fireability of~$t_{p_i}$'s: 
%

For $t_{p_1}$: ~~$SB(p_1,S) \cdot M(p_b^{(p_1,t_6)}) + M(p_1) < SB(p_1,S) \cdot W(p_b^{(p_1,t_6)},t_{p_1}) + W(p_1,t_{p_1})$\\
%
%
%
For $t_{p_4}$: ~~$SB(p_4,S) \cdot M(p_b^{(p_4,t_3)}) + M(p_4) < SB(p_4,S) \cdot W(p_b^{(p_4,t_3)},t_{p_4}) + W(p_4,t_{p_4})$\\
%
%
For $t_{p_6}$: ~~$SB(p_6,S) \cdot M(p_b^{(p_6,t_5)}) + M(p_6) < SB(p_6,S) \cdot W(p_b^{(p_6,t_5)},t_{p_6}) + W(p_6,t_{p_6})$\\
For $t_{p_8}$: ~~$SB(p_8,S) \cdot M(p_b^{(p_8,t_5)}) + M(p_8) < SB(p_8,S) \cdot W(p_b^{(p_8,t_5)},t_{p_8}) + W(p_8,t_{p_8})$

Transitions $t_1$, $t_2$ and $t_4$ have only one input place each, which must contain zero token:

For $t_1$: ~~$M(p_2) = 0$\\
For $t_2$: ~~$M(p_3) = 0$\\
For $t_4$: ~~$M(p_5) = 0$

Transition $t_5$ has only input places with structural bound at most $1$, hence we get the next simple non-fireability condition:

%
For $t_5$: ~~$M(p_a^{(p_6,t_5)}) + M(p_a^{(p_8,t_5)}) < 2$

The remaining transitions are outputs of the shared place $p_7$, which is their only input whose structural bound might exceed~$1$:


For $t_3$: ~~$SB(p_7,S) \cdot M(p_a^{(p_4,t_3)}) + M(p_7) < SB(p_7,S) \cdot W(p_a^{(p_4,t_3)},t_3) + W(p_7,t_3)$\\
For $t_6$: ~~$SB(p_7,S) \cdot M(p_a^{(p_1,t_6)}) + M(p_7) < SB(p_7,S) \cdot W(p_a^{(p_1,t_6)},t_6) + W(p_7,t_6)$.\\

Recall that each structural bound $SB(p_i,S)$, where $p_i$ is any place of $S$, can be replaced by an upper bound of polynomial length,
given as an input of the problem.\\

\noindent {\bf Notation.}
We denote by $NewPlaces(S,\Theta)$, respectively $NewTrans(S,\Theta)$, the places, respectively transitions, 
of $S^\Theta$ added to $S$ by transformation $\Theta$, i.e.\ those added in Algorithm~\ref{AlgoTheta}.
We denote by $NewTransPre(S,\Theta,t)$ the set $NewTrans \cap {\lbul(\lbul t)}$
and
by $Seq(A)$ the firing sequence containing one occurrence of each transition of $A$ in increasing label order
(that is, any natural ordering of transition names).
For instance, if $A=\{t_3,t_5,t_8\}$ then $Seq(A) = t_3\,t_5\,t_8$. If $A$ is empty, then $Seq(A) = \epsilon$.

\begin{definition}[Expanded sequences of $S$]
For each feasible sequence~$\alpha$ of~$S$, let us define inductively the feasible sequence $\theta(\alpha)$ of $S^\Theta$ as follows:\\
$-$ if $\alpha = \epsilon$ then $\theta(\alpha) = \epsilon$;\\
$-$ otherwise $\alpha$ is of the form $t \, \alpha'$,
and 
$\theta(\alpha) = Seq(NewTransPre(S,\Theta,t)) \, t \, \theta(\alpha')$.\\
Intuitively, $\theta$ inserts before each occurrence of any transition $t$ in $\alpha$ 
the firing sequence $Seq(A)$ containing one occurrence of each new transition whose firing enables some input place of $t$.
We say that $\theta(\alpha)$ is an \emph{expanded sequence}. 
\end{definition}

\begin{definition}[Reduced sequences of $S^\Theta$]
Let $\beta$ be any feasible sequence of $S^\Theta$.
We denote by $\hat\theta(\beta)$ the sequence obtained by removing--when it exists--the occurrence 
of each $t_{p_i} \in NewTrans(S,\Theta)$ from $\beta$ that fulfills the following condition:
denoting $\{t\} = (t_{p_i}\lbul)\lbul$, no occurrence of $t$ appears after this occurrence of $t_{p_i}$.
For instance, 
let 
$\beta = t_{p_2} t_{p_3} t_{p_1} t_1 t_{p_1} t_2 t_{p_3} t_{p_2}$,
where
$\{t_1\} = (t_{p_1}\lbul)\lbul$
and
$\{t_2\} = (t_{p_2}\lbul)\lbul = (t_{p_3}\lbul)\lbul$.
The reduced version of $\beta$ is $\hat\theta(\beta) = t_{p_2} t_{p_3} t_{p_1} t_1 t_2$.
We say that $\hat\theta(\beta)$ is a \emph{reduced sequence}. 
\end{definition}

\begin{theorem}[Properties of transformation $\Theta$]\label{ThTransfoDeadlock}
Let $S$ be any weighted system and $S^\Theta$ be the system obtained from $S$ 
through transformation $\Theta$, i.e.\ via Algorithm~\ref{AlgoTheta}.
Then:
\begin{enumerate}
\item If $S$ is a $1$S system then $S^\Theta$ is a $1$S system. 
If $S$ is homogeneous, then $S^\Theta$ is homogeneous. 
If $S$ is a H$1$S-\WMGineq{}, then $S^\Theta$ is a H$1$S-\WMGineq{}.
If $S$ is a strongly connected H$1$S-\WMGineq{} whose shared place deletion (if any) yields a strongly connected~\WMGineq{},
then $S^\Theta$ also has these properties.
\item For each sequence $\alpha$ feasible in $S$, the expanded sequence $\theta(\alpha)$ is feasible in $S^\Theta$.

\item For each sequence $\beta'$ feasible in $S^\Theta$, the reduced sequence $\beta=\hat\theta(\beta')$ is also feasible in $S^\Theta$
and
there is a sequence $\alpha$ feasible in $S$
such that 
$\theta(\alpha)$ is feasible in $S^\Theta$ and has the same Parikh vector as $\beta$.

\item If each place of $S$ has a structural bound, then each place of $S^\Theta$ has a structural bound.

\item $S^\Theta$ is deadlockable iff $S$ is deadlockable.

\item $S^\Theta$ is live iff $S$ is live.

\end{enumerate}

\end{theorem}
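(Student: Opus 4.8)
The engine behind Claims~2, 3, 5 and 6 is a single \emph{collapse} relation $\sim$ between the markings of $S$ and those of $S^\Theta$, defined by: $M \sim M^\Theta$ iff for every original place $p$ we have $M(p) = M^\Theta(p) + \sum W(p,t)\, M^\Theta(p_a^{(p,t)})$, the sum ranging over the transitions $t$ such that $(p,t)$ was transformed by Algorithm~\ref{AlgoTheta} (at most one such $t$ per $p$, since a transformed $p$ satisfies $p^\dt = \{t\}$). The plan is to first record the two invariants of the gadget created for a transformed pair $(p,t)$: the vector supported on $\{p_a^{(p,t)},p_b^{(p,t)}\}$ with both entries equal to $1$ is a P-semiflow, so $M(p_a^{(p,t)}) + M(p_b^{(p,t)}) = 1$ at every potentially reachable marking --- hence both new places are $1$-bounded, and in every feasible sequence the occurrences of $t_p^{(p,t)}$ and of $t$ strictly alternate, beginning with $t_p^{(p,t)}$; moreover, firing $t_p^{(p,t)}$ removes tokens only from the original place $p$ and produces tokens only into the new place $p_a^{(p,t)}$, so it never enables an original transition through an original place. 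With these in hand the step lemmas are routine: $M_0 \sim M_0^\Theta$; firing a new transition preserves $\sim$ on the $S$-side; if $M \sim M^\Theta$ and $M^\Theta[t\rangle M'^\Theta$ with $t$ original, then the $\{p_a,p_b\}$-invariant forces $M^\Theta(p_a^{(p,t)}) = 1$ for every transformed input $p$ of $t$, so the collapse formula gives $M[t\rangle M'$ in $S$ with $M' \sim M'^\Theta$; and if $M\sim M^\Theta$ with all $p_a$-places empty and $M[t\rangle M'$ in $S$, then $Seq(NewTransPre(S,\Theta,t))\,t$ is feasible from $M^\Theta$ and reaches some $M'^\Theta \sim M'$ with the $p_a$-places again empty. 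Claim~2 is then the iteration of this last point along $\alpha$, carrying the invariant ``all $p_a$-places empty between consecutive original steps''.

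\textbf{Claims~1 and 4.} These are essentially direct inspections of Algorithm~\ref{AlgoTheta}. For Claim~1: the only place whose out-arcs change is $p$, which keeps exactly one output; each new place has one input and one output; no place becomes shared (a shared place cannot be a transformed $p$); output weights of each place are preserved; deleting the shared place leaves all new places with one input and one output, so the \WMGineq{} property survives; and the gadget merely subdivides the arc $(p,t)$ with an extra cycle, so strong connectedness of $N^\Theta$ and of its shared-place deletion follows from that of $N$. For Claim~4, the $1$-bounded new places are handled by the P-semiflow above, and for the original places I would use the incidence-matrix characterization of (non-)structural boundedness recalled just before the theorem: a rational witness $Y^\Theta \gneq \zero$ with $I^\Theta\cdot Y^\Theta \gneq \zero$ must satisfy $Y^\Theta(t_p^{(p,t)}) = Y^\Theta(t)$ (read off the $p_a$- and $p_b$-rows), so its restriction to original transitions witnesses non-structural-boundedness of $N$, and conversely a witness $Y$ for $N$ extended by $Y(t_p^{(p,t)}) := Y(t)$ witnesses it for $N^\Theta$; thus $N$ and $N^\Theta$ are structurally bounded together, which is exactly ``every place has a structural bound''.

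\textbf{Claims~3 and 5.} By the alternation invariant, every feasible sequence $\beta'$ of $S^\Theta$ splits into blocks, each of the form $Seq(NewTransPre(S,\Theta,t))\,t$, possibly followed by one last block of new transitions not followed by their matching original transition; $\hat\theta$ deletes exactly that last block, and removing those occurrences only restores tokens to $p$ and to the $p_b^{(p,t)}$ (which, by alternation, are not needed again), so $\hat\theta(\beta')$ stays feasible; projecting it onto the original transitions gives a sequence $\alpha$, feasible in $S$ by iterating the step lemma for original transitions, and $\Parikh(\theta(\alpha)) = \Parikh(\hat\theta(\beta'))$ because in a reduced sequence every $t_p^{(p,t)}$ is matched by a later $t$, forcing $\Parikh(\cdot)(t_p^{(p,t)}) = \Parikh(\cdot)(t)$ for both sequences --- this proves Claim~3. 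Claim~5 follows from the simulation: projecting a run ending at a reachable deadlock of $S^\Theta$ reaches its collapse in $S$, which is a deadlock --- an original transition enabled there would, via the collapse formula and the $\{p_a,p_b\}$-invariant, make either that transition or one of its $t_p^{(p,t)}$ enabled at the $S^\Theta$-deadlock; conversely, the clean marking that $\theta(\alpha)$ reaches from a reachable deadlock of $S$ need not be a deadlock of $S^\Theta$, but firing every currently enabled new transition from it (each fires at most once, and draining original places cannot enable anything new) reaches a genuine deadlock.

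\textbf{Claim~6 and the main obstacle.} One direction is short: if $S^\Theta$ is live, then for a reachable $M$ of $S$ and a transition $t$, take the clean $M^\Theta$ reachable in $S^\Theta$ (Claim~2), reach from it a marking enabling $t$, and project; a marking enabling $t$ in $S^\Theta$ collapses to one enabling $t$ in $S$ (an enabled $t$ forces $M^\Theta(p_a^{(p,t)})=1$, so the collapse has $\ge W(p,t)$ tokens in $p$), hence $S$ is live. The converse --- the delicate step --- needs a \emph{context-aware} version of the expansion $\theta$: given a reachable $M^\Theta$ (with possibly ``pending'' gadget tokens) and a target transition $u$, I would pick the reachable $M$ of $S$ with $M \sim M^\Theta$ (so $(N,M)$ is live), take in $S$ a sequence reaching a marking enabling $u$ --- and, when $u = t_p^{(p,t)}$ is a new transition, first fire the attached original $t$ once to reset the gadget --- then replay this plan from $M^\Theta$, firing $t_p^{(p,t)}$ before an original occurrence of $t$ only when $p_a^{(p,t)}$ is currently empty. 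The at-most-one-token gadget invariant guarantees the replay stays enabled and keeps $\sim$, so $u$ eventually gets enabled. The main obstacle is making exactly this point rigorous: one must argue that the pending gadget tokens of an arbitrary reachable $M^\Theta$ never permanently block the replay, which rests on the two gadget invariants (monotone draining of original places by new transitions; a single token cycling through each gadget) together with the fact that from any $\sim$-related live $S$-marking one can always first fire the original transition feeding a given gadget and thereby clear it. Everything else is bookkeeping on Algorithm~\ref{AlgoTheta} and on the definitions of expanded and reduced sequences.
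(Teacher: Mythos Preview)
Your proposal is correct, and the overall architecture via the collapse relation $M \sim M^\Theta$ is a genuinely different (and more systematic) organisation than the paper's. The paper proves each item by direct sequence manipulation: Claim~3 by induction on the length of the reduced sequence $\beta$, Claim~4 by taking an unbounded family of state-equation solutions in $S^\Theta$, tweaking the components on the new transitions so that the $p_a$-places become empty, and projecting down to a witness of unboundedness in $S$, and Claim~6 with a one-line ``directly deduced from the above''. Your route packages the step lemmas once (new transitions preserve $\sim$; original transitions commute with $\sim$ given the $\{p_a,p_b\}$ invariant) and then reads off Claims~2, 3, 5, 6 from that simulation. For Claim~4 you go through the incidence-matrix characterisation of structural boundedness rather than the state equation --- this works because, as you implicitly use, ``every place of $S$ has a finite structural bound'' is equivalent to $PR(S)$ being bounded, which in turn (since a rational $Y\gneq\zero$ with $I\cdot Y\gneq\zero$ would make $PR(S)$ unbounded for \emph{every} $M_0$) is equivalent to structural boundedness of the underlying net; so reducing the question to the net level is legitimate here.

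Two small points where your sketch is looser than it should be. First, in Claim~3 the sentence ``$\beta'$ splits into blocks, each of the form $Seq(NewTransPre(S,\Theta,t))\,t$'' is not literally true: new transitions belonging to different gadgets can interleave freely. What is true --- and what you actually use --- is that within each gadget the occurrences of $t_p^{(p,t)}$ and $t$ strictly alternate, so that after $\hat\theta$ every remaining new-transition occurrence is matched by a later occurrence of its $t$; together with your projection step lemma this is enough. Second, for the harder direction of Claim~6 you correctly identify that one must replay from an arbitrary reachable $M^\Theta$ (with possibly pending gadget tokens), skipping the firing of $t_p^{(p,t)}$ whenever $p_a^{(p,t)}$ is already full; your justification via the two gadget invariants is the right one, and it would be worth stating explicitly that this modified replay still reaches a marking $\sim$-related to the target (since the only deviation is omitting a $t_p$ whose effect on the collapse is nil). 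The paper does not spell this out either, so on Claim~6 your argument is in fact more complete than the paper's.
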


\begin{proof}
\begin{enumerate}
\item The transformation modifies and adds only non-shared places.
The other claims are deduced immediately from the construction.

\item It is clear, by construction. 

\item Reducing a sequence $\beta'$ feasible in $S^\Theta$ deletes at most one occurrence of each new transition:
since such transitions are non-choice and since their deleted occurrences appear rightmost in $\beta'$, 
the tokens produced by the latter ones are not used by other transitions in $\beta'$.
Thus, these occurrences can always be postponed to the end of the sequence,
implying that the reduced sequence $\beta=\hat\theta(\beta')$ is also feasible in $S^\Theta$.\\
Now, let us prove by induction on the length $n$ of any reduced sequence $\beta$ feasible in $S^\Theta$
that 
a sequence $\alpha$ is feasible in $S$ 
such that 
$\theta(\alpha)$ is feasible in $S^\Theta$ and has the same Parikh vector as $\beta$.

Base case: $n=0$, trivial: $\beta=\alpha=\epsilon$.

Inductive case: $n>0$. We suppose the claim to be true for every smaller reduced sequence.
Denote $\beta = \tau t$.
Since $\beta$ is reduced, transition $t$ is not one of the new transitions $t_{p_i}$ and thus also exists in $S$.
Denote by $\tau'$ the reduced version of $\tau$: each deleted occurrence of any $t_{p_i}$ (at most one is deleted for each $t_{p_i}$) 
fulfills $(t_{p_i}\lbul)\lbul = \{t\}$;
denote by $\rho$ the sequence of these deleted occurrences, in their increasing index order.
Applying the inductive hypothesis, a sequence $\gamma$ is feasible in $S$ 
such that 
$\theta(\gamma)$ is feasible in $S^\Theta$ and has the same Parikh vector as $\tau'$.
%
%
Clearly, the sequence $\alpha = \gamma t$ is feasible in $S$ 
such that
$\theta(\alpha) = \theta(\gamma) \rho t$ is feasible in $S^\Theta$ and has the same Parikh vector as $\beta$
(these interleavings are allowed since each $t_{p_i}$ is non-choice, so that $\rho$ remains feasible after $\theta(\gamma)$).
Hence the inductive step.

We proved the base case and the inductive case, hence the claim is true for every $n$.

\item Suppose that each place of $S$ has a structural bound and that some place $p_\infty$ of $S^\Theta$ has no structural bound.
It means that for each positive integer $k$, there is a solution $(M_k,Y_k)$ to the state equation of $S^\Theta$ 
such that $M_k(p_\infty) > k$.
Since each of the new places of $S^\Theta$ has structural bound at most $1$ by construction, 
$p_\infty$ is a place belonging to $S$.

By construction of $S^\Theta$, the only places having a new transition as input are the new places.
Thus, for each place $p'$ of $S$,
each input transition $t'$ of $p'$ in $S^\Theta$ is also an input of $p'$ in $S$.
Consequently, 
there exists a solution $(M_k',Y_k')$ to the state equation of $S^\Theta$ such that, for each place $p'$ of $S$, we have both following properties:\\
$-$ for each input transition $t'$ of $p'$ in $S^\Theta$, $Y_k'(t') = Y_k(t')$;\\
$-$ for each output transition $t'$ of $p'$ in $S^\Theta$, 
if $t'$ is not new then $Y_k'(t') = Y_k(t')$, otherwise $t'$ is new, hence the only output of $p'$;
in this second case, denote by $p$ the only output of $t'$:
if $M_k(p) = 0$ then $Y_k'(t') = Y_k(t')$, otherwise $M_k(p) = 1$ with $Y_k'(t') = Y_k(t') - 1$.\\
By construction of $S^\Theta$,  
it is clear that $(M_k',Y_k')$ is also a solution to the state equation of $S^\Theta$
such that, for each place $p$ of $S$, $M'_k(p) \ge M_k(p)$.
Moreover, the projection of $(M_k', Y_k')$ to the places and transitions of $S$, denoted by $(M_k'', Y_k'')$,
is a solution to the state equation of $S$,
with the property that $M_k''(p_\infty) = M_k'(p_\infty) \ge M_k(p_\infty) > k$.
This means that $p$ has no structural bound in $S$, a contradiction.
We deduce that each place of~$S^\Theta$ has a structural bound.
\item Clearly, if some sequence $\alpha$ feasible in $S$ leads to a deadlock,
from the above we deduce that $\theta(\alpha)\tau$ is feasible in $S^\Theta$, leading to a deadlock,
where $\tau$ is a possibly empty sequence of new transitions, 
each of which appears at most once in $\tau$.

For the converse, suppose that some sequence $\beta$ feasible in $S^\Theta=(N^\Theta,M_0^\Theta)$ leads to a deadlock. 
Define $\beta'$ as the reduced version of $\beta$.
Applying the claim $3$ proved above, some sequence $\alpha$ is feasible in $S$ such that $\theta(\alpha)$ is feasible in $S^\Theta$
and fulfills $\Parikh(\theta(\alpha)) = \Parikh(\beta')$.
Thus, the projection on the places of $S$ of the marking reached in $S^\Theta$ by firing $\beta'$ 
is a deadlock reached in $S$ by firing $\alpha$. We get the claim.
%
%
\item The liveness equivalence is directly deduced from the above (recall that no shared place is an input of any new transition $t_{p_i}$).
\end{enumerate}
\end{proof}

\noindent {\bf Exploiting Proposition~\ref{PropTh5.4}.}
Consider any $1$S (i.e.\ with at most one shared place) system $S$ in which each place has a structural bound;
let $S^\Theta$ be the system obtained from $S$ through transformation $\Theta$.
By~Theorem~\ref{ThTransfoDeadlock}, $S^\Theta$ is $1$S and each of its places has a structural bound;
besides,
checking the liveness of $S^\Theta$ is equivalent to checking the liveness of $S$.

Denote by $p_s$ the shared place of $S^\Theta$, if any.
For each transition $t$ in $p_s\lbul$, the only input place of~$t$ whose structural bound 
might be (strictly) greater than its output weights is $p_s$ by construction.
For each other transition $t'$, for each input place $p'$ of $t'$, the only output of $p'$ is $t'$;
besides, each such $t'$ has at most one input place whose structural bound 
might be (strictly) greater than its single output weight:
this can be the case only if this place already exists in $S$,
the new places having structural bound at most $1$.

Thus, 
for each transition $t$ of the system $S^\Theta$ obtained, 
for any marking $M \in PR(S^\Theta)$,
Proposition~\ref{PropTh5.4} provides a single integer linear inequality expressing the non-enabledness of $t$ at $M$.
The conjunction of these inequalities with the constraints of the state equation 
provides the system describing the (set of) deadlocks potentially reachable from $S^\Theta$.
This allows to check the liveness of the H$1$S-\WMGineq{} that fulfill the conditions of Corollary~\ref{CheckLivenessH1SWMG},
the latter conditions being preserved by the transformation (by~Theorem~\ref{ThTransfoDeadlock}.1).

Denoting $S^\Theta=(N^\Theta,M_0^\Theta)$, where $N^\Theta=(P^\Theta,T^\Theta,W^\Theta)$,
the size of this system is clearly linear in $|P^\Theta|\cdot |T^\Theta|\cdot m$, 
where 
$m$ is the length of the largest binary-encoded integer among the arc weights and the given upper bounds on structural bounds.

By definition of the transformation $\Theta$, 
the size of this ILP 
is linear in $|P|\cdot (|P|+|T|)\cdot m$, 
where 
$P$ and $T$ are the places and transitions of $S$.

\section{Reversibility of live \texorpdfstring{H$1$S}{} systems}\label{SecReversibility}

In weighted Petri nets, the reversibility checking problem is PSPACE-hard~\cite{esparza1996decidability}.
However, the notion of a \emph{T-sequence}, recalled next,
is exploited in several studies to reduce this complexity in some subclasses.

\begin{definition}[T-sequence \cite{Hujsa2015,HDM2016}]
Consider a system $S$ whose set of transitions is $T$ and denote by $I$ its incidence matrix.
A firing sequence $\sigma$ of $S$ is a T-sequence 
if it contains all transitions of $T$ (i.e.\ $\support(\sigma) = T$) and $I \cdot \Parikh(\sigma) = 0$ 
(i.e.\ $\Parikh(\sigma)$ is a consistency vector).
\end{definition}

\noindent {\bf A known necessary condition.}
In all weighted Petri nets, the existence of a feasible T-sequence is a known necessary condition 
of liveness and reversibility, taken together~\cite{Hujsa2015}.
Under the liveness assumption, the existence of a T-sequence characterizes reversibility in HFC systems,
allowing to stop the exploration of the reachability graph when such a sequence is found~\cite{Hujsa2015}.
%
%
Based on this result, a wide-ranging linear-time sufficient condition of liveness and reversibility 
in well-formed HFC nets is given by Theorem~$6.6$ in~\cite{HDM2016}.
%
%
Polynomial-time sufficient conditions of liveness and reversibility also exist 
for well-formed join-free (JF) nets (i.e.\ without synchronizations)~\cite{March09,HDM2016,HD2018}.\\

\noindent {\bf Toward a necessary and sufficient condition.}
Under the liveness assumption, in the H$1$S class, 
we show that the existence of a feasible T-sequence implies reversibility,
extending the result shown for the HFC class in~\cite{Hujsa2015},
from which we extract and adapt the proof scheme.
To achieve~it, we proceed as follows.

In Subsection~\ref{SubSecDefRev}, we formalize preliminary concepts related to sequences,
which we then use in Subsection~\ref{SubSecPrevRevChar}, 
where we recall the general idea behind the proof 
given in~\cite{Hujsa2015} for the live HFC class
and highlight the main difference with the H$1$S case.
In Subsection~\ref{SubsecAddNotions}, we present further notions related to transition firings,
which we exploit in our adapted proof in Subsection~\ref{SubsecSuffCondRliveJF}.
Finally, we construct in Subsection~\ref{NonExtensibility2places} a live H$2$S system
(with two shared places) for which the characterization does not work.
This shows that the characterization is tightly related to the H$1$S class.

\subsection{Definitions and notations}\label{SubSecDefRev}

\noindent {\bf Subsequences and projections.}
The sequence $\sigma'$ is a {\em subsequence} of the sequence $\sigma$ 
if $\sigma'$ is obtained from $\sigma$ by removing some occurrences of its transitions.
The {\em projection of $\sigma$ on the set $T' \subseteq T$ of transitions} 
is the subsequence of $\sigma$ with maximal length whose transitions belong to $T'$, denoted by $\projection{\sigma}{T'}$.
For example, the projection of 
the sequence $\sigma = t_1 \, t_2 \, t_3 \, t_2$
on the set $\{t_1 ,\, t_2\}$ is the sequence  $t_1 \, t_2 \, t_2$.\\

\noindent {\bf Infinite concatenation:} 
For a sequence $\sigma$, we denote by $\mathbf{\sigma^\infty}$ its infinite concatenation.\\

\noindent {\bf Local ordering:} 
Let $T'$ be a subset of transitions. The \emph{local ordering of} $T'$ \emph{induced by $\sigma$} 
is the sequence $\projection{\sigma^\infty}{T'}$ (obtained by projection).

In the system in the left of Figure \ref{HFCvsJF}, consider the set $p_2\lbul  = \{t_2,t_3\}$ and the feasible T-sequence 
$\sigma_r = t_2 \, t_1 \, t_3 \, t_4 \, t_5 \, t_2$.
The local ordering of $p_2\lbul$ induced by $\sigma_r$ 
is defined by $\sigma_2 = (t_2 \, t_3 \, t_2)^\infty$, 
which is the projection of $\sigma_r^\infty$ on the post-set $p_2\lbul$.
Note that this is not an order in the usual sense, 
but it specifies which sequence of transitions must be considered.\par\vspace{\baselineskip}

\subsection{Reversibility: from HFC to \texorpdfstring{H$1$S}{} systems}\label{SubSecPrevRevChar}

\noindent {\bf A previous method for reaching the initial marking in live HFC systems.}  
%
The proof of the characterization in~\cite{Hujsa2015} in the HFC case uses the following idea,
assuming the existence of a feasible T-sequence $\sigma_r$:
for each set $E$ of conflicting transitions (i.e.\ sharing an input place),
the projection of $\sigma_r$ on $E$
induces a local ordering that solves the associated conflicts;
then,
for any firing sequence $\sigma$ feasible at the initial marking and leading to a marking $M$,
one can reach the initial marking from $M$ by firing transitions
according to these orderings.\\
These proofs use the assumption of liveness and the structure of HFC nets:
by liveness,
one can always fire a transition;
by homogeneity 
and by the free-choice structure,
two conflicting transitions are either both enabled or both disabled by a marking.
These assumptions
ensure that any conflict resolution policy is achievable,
in particular a policy leading to the initial marking.\\
To illustrate, an HJF system, i.e.\ an HFC system without synchronizations, is pictured on the left of Figure~\ref{HFCvsJF}.
Denoting by $E$ the set $\{t_2,t_3\}$ of conflicting transitions of this homogeneous system,
and by $\sigma_r$ the T-sequence $t_2 \, t_1 \, t_3 \, t_4 \, t_5 \, t_2$,
the associated local ordering is $(t_2 \, t_3 \, t_2)^\infty$.
Then, if $t_3$ is fired first, the idea is to fire transitions until $p_2$ becomes enabled
and $t_2$ is fired,
implying that the prefix $t_2 \, t_3$ of the local ordering $(t_2 \, t_3 \, t_2)^\infty$ 
has been fired in a permuted fashion.
Thereafter,
the ordering can be used again as a policy solving the conflicts in $E$:
the next transition to be fired in $E$ is $t_2$.\par\vspace{\baselineskip}

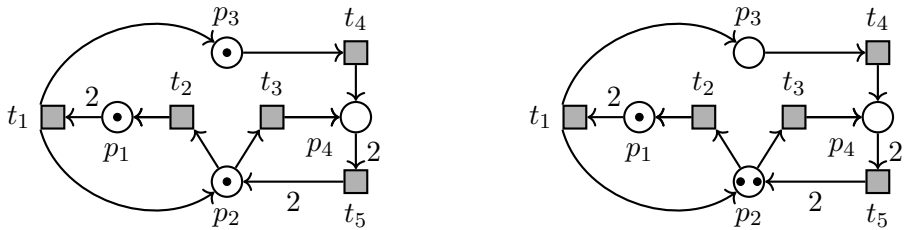
\begin{figure}[!h]
%
%

\centering

\begin{tikzpicture}[mypetristyle,scale=0.85]

\node (p1) at (1.3,1) [place,tokens=1] {};
\node (p2) at (3,0) [place,tokens=1] {};
\node (p3) at (3,2) [place,tokens=1] {};
\node (p4) at (5,1) [place] {};

\node [anchor=north] at (p1.south) {$p_1$};
\node [anchor=north] at (p2.south) {$p_2$};
\node [anchor=south] at (p3.north) {$p_3$};
\node [anchor=north east] at (p4.south west) {$p_4$};

\node (t1) at (0.3,1) [transition] {};
\node (t2) at (2.3,1) [transition] {};
\node (t3) at (3.7,1) [transition] {};
\node (t4) at (5,2) [transition] {};
\node (t5) at (5,0) [transition] {};

\node [anchor=east] at (t1.west) {$t_1$};
\node [anchor=south] at (t2.north) {$t_2$};
\node [anchor=south] at (t3.north) {$t_3$};
\node [anchor=south] at (t4.north) {$t_4$};
\node [anchor=north] at (t5.south) {$t_5$};

\draw [->,thick] (p1) to node [above, near start] {$2$} (t1);
\draw [->,thick] (t2) to node [above] {} (p1);

\draw [->,thick] (p2) to node [above right] {} (t2.south east);
\draw [->,thick] (p2) to node [above left] {} (t3.south west);

\draw [->,thick,bend left=55] (t1.north west) to node [above] {} (p3);
\draw [->,thick] (t2) to node [above] {} (p1);
\draw [->,thick,bend right=55] (t1.south west) to node [below] {} (p2);

\draw [->,thick] (t3) to node [above] {} (p4);
\draw [->,thick] (p3) to node [above] {} (t4);

\draw [->,thick] (t4) to node [right] {} (p4);

\draw [->,thick] (t3) to node [above] {} (p4);

\draw [->,thick] (p4) to node [right] {$2$} (t5);

\draw [->,thick] (t5) to node [below] {$2$} (p2);

\end{tikzpicture}
\hspace*{1.5cm}
\begin{tikzpicture}[mypetristyle,scale=0.85]

\node (p1) at (1.3,1) [place,tokens=1] {};
\node (p2) at (3,0) [place,tokens=2] {};
\node (p3) at (3,2) [place,tokens=0] {};
\node (p4) at (5,1) [place,tokens=0] {};

\node [anchor=north] at (p1.south) {$p_1$};
\node [anchor=north] at (p2.south) {$p_2$};
\node [anchor=south] at (p3.north) {$p_3$};
\node [anchor=north east] at (p4.south west) {$p_4$};

\node (t1) at (0.3,1) [transition] {};
\node (t2) at (2.3,1) [transition] {};
\node (t3) at (3.7,1) [transition] {};
\node (t4) at (5,2) [transition] {};
\node (t5) at (5,0) [transition] {};

\node [anchor=east] at (t1.west) {$t_1$};
\node [anchor=south] at (t2.north) {$t_2$};
\node [anchor=south] at (t3.north) {$t_3$};
\node [anchor=south] at (t4.north) {$t_4$};
\node [anchor=north] at (t5.south) {$t_5$};

\draw [->,thick] (p1) to node [above, near start] {$2$} (t1);
\draw [->,thick] (t2) to node [above] {} (p1);

\draw [->,thick] (p2) to node [above right] {} (t2.south east);
\draw [->,thick] (p2) to node [above left] {} (t3.south west);

\draw [->,thick,bend left=55] (t1.north west) to node [above] {} (p3);
\draw [->,thick] (t2) to node [above] {} (p1);
\draw [->,thick,bend right=55] (t1.south west) to node [below] {} (p2);

\draw [->,thick] (t3) to node [above] {} (p4);
\draw [->,thick] (p3) to node [above] {} (t4);

\draw [->,thick] (t4) to node [right] {} (p4);

\draw [->,thick] (t3) to node [above] {} (p4);

\draw [->,thick] (p4) to node [right] {$2$} (t5);

\draw [->,thick] (t5) to node [below] {$2$} (p2);

\end{tikzpicture}

\caption{
The HFC system on the left is live and enables the T-sequence $\sigma_r = t_2 \, t_1 \, t_3 \, t_4 \, t_5 \, t_2$.
At each reachable marking, some place is enabled, while $t_2$ and $t_3$ are either both enabled or both disabled.
Notice that only one occurrence of $t_2$ appears before the first occurrence of $t_3$ in $\sigma_r$:
after firing $t_3$ on the left, $t_4 \, t_5$ can be fired, leading to the system on the right, 
in which $p_2$ is enabled; firing $t_2$ fills the gap relative to $p_2\lbul$ in the largest prefix of $\sigma_r$ before $t_3$;
then, $t_1 \, t_2$  contains the rest of $\sigma_r$ and leads to the initial marking.
}
\label{HFCvsJF}
\end{figure}

\noindent {\bf Adapting the method to live H$1$S systems.}
We show that,
after the firing of any outgoing transition of $p$,
it is always possible to enable and fire the outgoing transitions of $p$ according to their occurrence order in $\sigma_r^\infty$,
in such a way that the initial marking is reached.
To prove it, we use the fact that the other input places (if any) of these outgoing transitions have exactly one output,
hence keep the tokens produced by ingoing transitions until their unique output is fired.

In next subsection, we provide further notions useful to our purpose. 
Then, in Subsection~\ref{SubsecSuffCondRliveJF},
we develop our characterization of reversibility using the notion of T-sequence.

\subsection{Additional notions and notations}\label{SubsecAddNotions}

To investigate the reversibility property, we borrow the following concepts from \cite{Hujsa2015}.
Let $S = (N,M_0)$ be any weighted system with $N=(P,T,W)$ and let $\sigma$ be a sequence feasible in $S$.\par\vspace{\baselineskip}

\noindent {\bf The next transition function $\mathbf{tnext}$:}
Consider some place $p$ and sequences $\sigma$, $\kappa$ 
such that $\Parikh(\sigma) \lneq \Parikh(\kappa)$.
Assume there exists 
a transition $t'$ in $p\lbul$ for which $\Parikh(\sigma)(t') < \Parikh(\kappa)(t')$. 
The transition $t'$ in $p\lbul$, 
among the ones satisfying $\Parikh(\sigma)(t') < \Parikh(\kappa)(t')$,
whose $(\Parikh(\sigma)(t')+1)$-th occurrence is the first to appear in $\kappa$,
is returned by a function,
called the \emph{next transition function} and denoted by ${tnext}(p\lbul, \sigma, \kappa)$.

Consider  
$\kappa = \sigma_r = t_2 \, t_1 \, t_3 \, t_4 \, t_5 \, t_2$ 
and 
$\sigma = t_2$ on the left of Figure~\ref{HFCvsJF}.
Then,
${tnext}(p_2\lbul, \sigma, \kappa) = t_3$,
where $p_2\lbul = \{t_2,t_3\}$.
For $\sigma' = t_3$,
we have 
${tnext}(p_2\lbul, \sigma', \kappa) = t_2$.\par\vspace{\baselineskip}

\noindent {\bf Prefix sequence $\mathbf{K_{t_i}^n(\sigma)}$:} 
Assuming $t_i$ occurs at least $n$ times in $\sigma$, with $n\geq 1$, 
the largest prefix sequence of $\sigma$ preceding the $n$-th occurrence of $t_i$ in $\sigma$,
thus containing $n-1$ occurrences of $t_i$,
is denoted by $K_{t_i}^n(\sigma)$, $n \ge 1$, or more simply $K_i^n(\sigma)$. 
For example, 
if $\sigma = t_1 \, t_2 \, t_1 \, t_3 \, t_1 \, t_2 \, t_3$, 
then $K_{t_1}^3(\sigma) = t_1 \, t_2 \, t_1 \, t_3$ 
and $K_{t_3}^1(\sigma) =  t_1 \, t_2 \, t_1$.\par\vspace{\baselineskip}

\noindent {\bf Delayed occurrences:} 
Consider a subset  $T' \subseteq T$ of transitions and a transition $t \in T'$.
Denote by $\tau = \projection{\sigma^\infty}{T'}$ 
the local ordering of $T'$ induced by $\sigma$.
An \emph{occurrence} of $t$ is \emph{delayed} by the firing of a sequence $\alpha$
relatively to $\tau$ if there exists $t' \in T'$, $t' \neq t$, such that,
noting $n = \Parikh(\alpha)(t')$ and $K = K_{t'}^n(\tau)$,
we have
$\Parikh(\alpha)(t) < \Parikh(K)(t)$.
In other words,
an occurrence of the transition $t$ is delayed by $\alpha$ relatively to the local ordering $\tau$ if 
$t$ occurred (strictly) fewer times in $\alpha$ 
than in the largest (finite) prefix sequence $K$ of $\tau$ preceding the $n$-th occurrence of $t'$ in $\tau$.

To illustrate,
in the system on the left of Figure~\ref{HFCvsJF}, let $\sigma_r = t_2 \, t_1 \, t_3 \, t_4 \, t_5 \, t_2$
be a feasible T-sequence;
consider the set
$p_2\lbul  = \{t_2,t_3\}$ and the associated local ordering $\tau_2 =  (t_2 \, t_3 \, t_2)^\infty$.
If the sequence $\alpha = t_3$ is fired first, then the local ordering defined by $\tau_2$ is broken 
and 
one occurrence of $t_2$ is delayed\footnote{Note that if $p_2\lbul$ contained more than two transitions,
we could have various delays for several output transitions.}: 
denoting 
$n = \Parikh(\alpha)(t_3) = 1$ 
and 
$K = K_{t_3}^n(\tau_2) = K_{t_3}^1(\tau_2) = t_2$,
we have
$\Parikh(\alpha)(t_2) = 0 < 1 = \Parikh(K)(t_2)$.
Consequently, after the initial firing of $\alpha = t_3$,
if
one aims at removing the delay(s) as soon as possible, 
possibly by following the local orderings in other places first,
the next transition to be fired in $p_2\lbul$ 
is ${tnext}(p_2\lbul, \alpha, \tau_2) = t_2$.

These notions are also illustrated on the H$1$S system of Figure~\ref{TsequenceHAC}.

\begin{figure}[!h]
\centering

\begin{tikzpicture}[mypetristyle,scale=0.85]

\node (p1) at (1,1.3) [place,tokens=1] {};
\node (p2) at (3,-0.5) [place,tokens=1] {};
\node (p3) at (3,2.5) [place,tokens=2] {};
\node (p4) at (5,1) [place] {};
\node (p5) at (3,1.3) [place,tokens=2] {};
\node (p6) at (3,0.7) [place,tokens=2] {};
\node (p7) at (1,0.7) [place,tokens=1] {};

\node [anchor=south] at (p1.north) {$p_1$};
\node [anchor=north] at (p2.south) {$p_2$};
\node [anchor=south] at (p3.north) {$p_3$};
\node [anchor=west] at (p4.east) {$p_4$};
\node [anchor=south] at (p5.north) {$p_5$};
\node [anchor=north] at (p6.south) {$p_6$};
\node [anchor=north] at (p7.south) {$p_7$};

\node (t1) at (0,1) [transition] {};
\node (t2) at (2,1) [transition] {};
\node (t3) at (4,1) [transition] {};
\node (t4) at (5,2.5) [transition] {};
\node (t5) at (5,-0.5) [transition] {};

\node [anchor=east] at (t1.west) {$t_1$};
\node [anchor=south] at (t2.north) {$t_2$};
\node [anchor=south] at (t3.north) {$t_3$};
\node [anchor=west] at (t4.east) {$t_4$};
\node [anchor=west] at (t5.east) {$t_5$};

\draw [->,thick] (p1) to node [above] {$2$} (t1);
\draw [->,thick] (t2) to node [above] {} (p1);

\draw [->,thick] (p2) to node [above right] {} (t2);
\draw [->,thick] (p2) to node [above left] {} (t3);

\draw [->,thick,bend left=40] (t1.north west) to node [above] {} (p3);
\draw [->,thick] (t2) to node [above] {} (p1);
\draw [->,thick,bend right=40] (t1.south west) to node [below] {} (p2);

\draw [->,thick] (t3) to node [above] {} (p4);
\draw [->,thick] (p3) to node [above] {} (t4);

\draw [->,thick] (t4) to node [right] {} (p4);
\draw [->,thick] (t3) to node [above] {} (p4);

\draw [->,thick] (p4) to node [right] {$2$} (t5);
\draw [->,thick] (t5) to node [below] {$2$} (p2);

\draw [->,thick] (t1) to node [below] {$2$} (p7);
\draw [->,thick] (p7) to node [below] {} (t2);

\draw [->,thick] (t3) to node [below] {} (p5);
\draw [->,thick] (p5) to node [below] {} (t2);

\draw [->,thick] (t2) to node [below] {} (p6);
\draw [->,thick] (p6) to node [below] {} (t3);

\end{tikzpicture}

\caption{
This live H$1$S system enables the T-sequence 
$\sigma_r = t_2 \, t_1 \, t_3 \, t_4 \, t_5 \, t_2 \, t_3 \, t_5 \, t_3 \, t_2 \, t_1 \, t_2 \, t_4 \, t_5 \, t_3$.
It is also reversible.
Consider $p_2\lbul  = \{t_2,t_3\}$ and the associated local ordering $\tau_2 = (t_2 \, t_3 \, t_2 \, t_3 \, t_3 \, t_2 \, t_2 \, t_3)^\infty$.
Suppose $\alpha = t_3$ is fired first. Then, the local ordering defined by $\tau_2$ is broken 
and 
one occurrence of $t_2$ is delayed. Denoting $n = \Parikh(\alpha)(t_3) = 1$ 
and 
$K = K_{t_3}^n(\tau_2) = K_{t_3}^1(\tau_2) = t_2$,
we have
$\Parikh(\alpha)(t_2) = 0 < 1 = \Parikh(K)(t_2)$.
Thus, ${tnext}(p_2\lbul, \alpha, \tau_2) = t_2$.
}
\label{TsequenceHAC}
%
\end{figure}
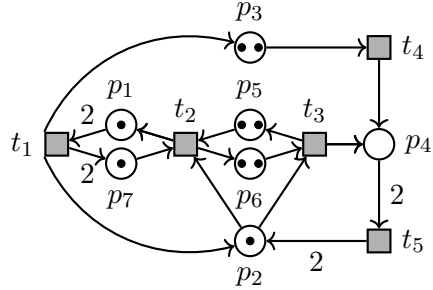

\subsection{Reversibility of live \texorpdfstring{H$1$S}{} systems}\label{SubsecSuffCondRliveJF}

We show that the existence of a feasible T-sequence is sufficient for reversibility in live H$1$S systems. 
Since this result is already known to hold in \WMGineq{}~\cite{WTS92}, we suppose the existence of one shared place $p$.
This result is embodied by Corollary~\ref{H1Srev} in Section~\ref{CharRev}.
To achieve it, we derive variants of the proofs of Section $4$ in \cite{Hujsa2015},
pointing out the differences between the HFC case and our H$1$S case.

In \cite{Hujsa2015}, for any live HFC system with a feasible T-sequence $\sigma_r$,
two algorithms are presented that construct, after any single firing of any transition,
a firing sequence leading to the initial marking. These algorithms form two consecutive steps: 
\begin{enumerate}
\item After the firing of some transition $t$ from the initial marking,
the first algorithm, called Algorithm~\ref{AlgoSC1} in the sequel,
fires transitions by following local orderings 
until all delayed occurrences are fired. The sequence obtained is denoted by $\sigma_t$.
If no occurrence is delayed by the firing of $t$, $\sigma_t$ is empty.
\item Then, the second algorithm, called Algorithm~\ref{AlgoSC2} in the sequel,
applies to the sequence $t \sigma_t$ resulting from Algorithm~\ref{AlgoSC1}
and completes this sequence to reach the initial marking.
The sequence obtained is denoted by $\sigma_t'$.
\end{enumerate}

At the end, we obtain the sequence $t \sigma_t \sigma_t'$ 
whose Parikh vector is a multiple of the Parikh vector of the initial T-sequence $\sigma_r$,
hence it is also a T-sequence.
These two steps are depicted in Figure~\ref{FigSigmaC}.

\begin{figure}[!h]
\begin{minipage}{1\linewidth}

\centering

\begin{tikzpicture}[scale=1.1]

\node (M1) at (0.9,1.25) {$M_t'$};
\node (M0) at (0,2) {$M_0$};
\node (M) at (2,2) {$M_t$};

\draw [->,thick,above] (M0) to node {$t$} (M);
\draw [->,thick, loop,below] (M0) to node {$\sigma_r$} (M0);

\draw [->,thick,below right, very near start, bend left=35] (M) to node {$\sigma_t$} (M1);

\draw [->,thick, left, bend left=35] (M1) to node {$\sigma_t' \;$} (M0);

\end{tikzpicture}
		
\end{minipage}

\caption{If the T-sequence $\sigma_r$ is feasible and $t$ is fired,   
then Algorithm~\ref{AlgoSC1} builds the sequence $\sigma_t$
and  Algorithm~\ref{AlgoSC2} computes the sequence $\sigma_t'$,
which returns to the initial marking. 
}

\label{FigSigmaC}
\end{figure}
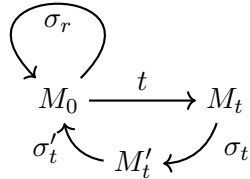

Once the existence of such sequences is proved,
the method can be generalized to arbitrary sequences, constructing,
from any reachable marking, some sequence that leads to the initial marking.

In the sequel, we consider any live H$1$S system $S=(N,M_0)$ enabling a T-sequence.
For each algorithm, we design a variant applying to $S$.

In \cite{STECS,Hujsa2015,HDM2016},
an \emph{equal-conflict set} $E$ of a net $N=(P,T,W)$ is defined as a maximal subset of~$T$ 
such that, for all distinct transitions $t_i, t_j \in E$,
$t_i$ and $t_j$ are in conflict (i.e.\ share some input place),
have the same pre-set and, for each place $p \in {\lbul t_i}$, $W(p,t_i)=W(p,t_j)$.
Notice that each singleton formed of a transition without input defines an equal-conflict set.

In the case of H$1$S systems with shared place $p$,
the equal-conflict sets considered are (disjoint) subsets of the post-set of~$p$ and singletons.

\subsubsection{First part of the sequence construction}

We deduce Algorithm~\ref{AlgoSC1} below from Algorithm $1$ in \cite{Hujsa2015},
where $p$ is the unique shared place in the system.
We suppose w.l.o.g. that $t$ belongs to the post-set of $p$, since otherwise 
we only have to fire the sequence $\sigma_r \pminus t$ to reach the initial marking, which is done by Algorithm~\ref{AlgoSC2}:
in this case, we leave $\sigma_t$ empty.
Notice that the inner loop terminates when $tnext(p\lbul,\alpha,\kappa_0)$ becomes enabled.

\begin{algorithm}\label{AlgoSC1}
    \KwData{The T-sequence $\sigma_r$, which is feasible in $S$; the system $(N,M_t)$ obtained by firing $t\in p^\bullet$ in $S$.}
    \KwResult{A sequence $\sigma_t$ feasible in $(N,M_t)$ 
	that fires the delayed occurrences of $\kappa_0 = K^1_{t}(\sigma_r)$.}

    $\alpha := t$\;
    \While{$\exists \, t' \in p\lbul \setminus \{t\}, \, \Parikh(\kappa_0)(t') > \Parikh(\alpha)(t')$}{
	\While{$tnext(p\lbul,\alpha,\kappa_0)$ is not enabled}{
	    Among the enabled transitions, 
	    fire the transition $t_i$ whose next occurrence after the $\Parikh(\alpha)(t_i)$-th
		appears first in $\sigma_r^\infty$\;
	    $\alpha := \alpha \, t_i$\;
	}
	Fire the transition $t_j = tnext(p\lbul,\alpha,\kappa_0)$\;
	$\alpha := \alpha \, t_j$\;
    }
    $\alpha$ is of the form $t \, \sigma_t$\;
    \Return{$\sigma_t$}
  
\caption{From the feasible T-sequence $\sigma_r$ and the transition $t$, 
with $t \in p\lbul$, 
construction of a sequence $\sigma_t$ that contains the occurrences of $p\lbul$ 
delayed by $t$ relatively to $\projection{\sigma_r^\infty}{p\lbul}$
by following the ordering induced by $\sigma_r$ for enabled transitions.}
\end{algorithm}

Algorithm~\ref{AlgoSC1} considers an initial firing of a single transition $t \in p^\bullet$
and follows the local (trivial) orderings induced by the T-sequence in each (singleton) equal-conflict set not intersecting $p^\bullet$,
as well as the local ordering of $p^\bullet$ 
(which is not always an equal-conflict set)
until all the delayed occurrences (if any) are fired.

An application of Algorithm~\ref{AlgoSC1} is given in Figure \ref{JFalgo1}.

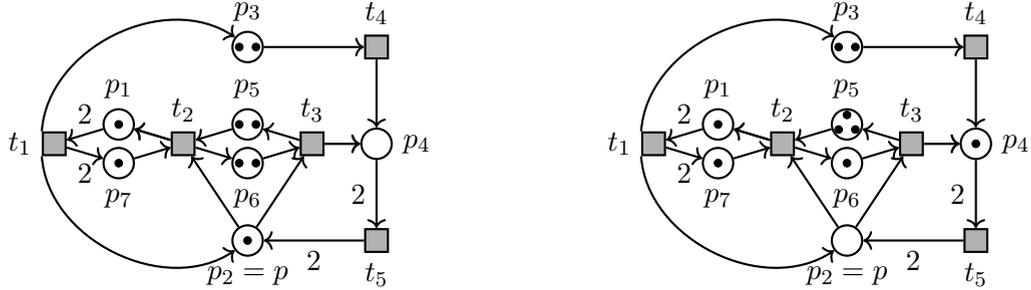
\begin{figure}[!h]

\centering
 
\begin{minipage}{0.43\linewidth}
\centering

\begin{tikzpicture}[mypetristyle,scale=0.85]

\node (p1) at (1,1.3) [place,tokens=1] {};
\node (p2) at (3,-0.5) [place,tokens=1] {};
\node (p3) at (3,2.5) [place,tokens=2] {};
\node (p4) at (5,1) [place] {};
\node (p5) at (3,1.3) [place,tokens=2] {};
\node (p6) at (3,0.7) [place,tokens=2] {};
\node (p7) at (1,0.7) [place,tokens=1] {};

\node [anchor=south] at (p1.north) {$p_1$};
\node [anchor=north] at (p2.south) {$p_2=p$};
\node [anchor=south] at (p3.north) {$p_3$};
\node [anchor=west] at (p4.east) {$p_4$};
\node [anchor=south] at (p5.north) {$p_5$};
\node [anchor=north] at (p6.south) {$p_6$};
\node [anchor=north] at (p7.south) {$p_7$};

\node (t1) at (0,1) [transition] {};
\node (t2) at (2,1) [transition] {};
\node (t3) at (4,1) [transition] {};
\node (t4) at (5,2.5) [transition] {};
\node (t5) at (5,-0.5) [transition] {};

\node [anchor=east] at (t1.west) {$t_1$};
\node [anchor=south] at (t2.north) {$t_2$};
\node [anchor=south] at (t3.north) {$t_3$};
\node [anchor=south] at (t4.north) {$t_4$};
\node [anchor=north] at (t5.south) {$t_5$};

\draw [->,thick] (p1) to node [above] {$2$} (t1);
\draw [->,thick] (t2) to node [above] {} (p1);

\draw [->,thick] (p2) to node [above right] {} (t2);
\draw [->,thick] (p2) to node [above left] {} (t3);

\draw [->,thick,bend left=60] (t1.north west) to node [above] {} (p3);
\draw [->,thick] (t2) to node [above] {} (p1);
\draw [->,thick,bend right=60] (t1.south west) to node [below] {} (p2);

\draw [->,thick] (t3) to node [above] {} (p4);
\draw [->,thick] (p3) to node [above] {} (t4);

\draw [->,thick] (t4) to node [right] {} (p4);
\draw [->,thick] (t3) to node [above] {} (p4);

\draw [->,thick] (p4) to node [left] {$2$} (t5);
\draw [->,thick] (t5) to node [below] {$2$} (p2);

\draw [->,thick] (t1) to node [below] {$2$} (p7);
\draw [->,thick] (p7) to node [below] {} (t2);

\draw [->,thick] (t3) to node [below] {} (p5);
\draw [->,thick] (p5) to node [below] {} (t2);

\draw [->,thick] (t2) to node [below] {} (p6);
\draw [->,thick] (p6) to node [below] {} (t3);

\end{tikzpicture}
\end{minipage}
\hspace*{1cm}
\begin{minipage}{0.43\linewidth}
\centering
\begin{tikzpicture}[mypetristyle,scale=0.85]

\node (p1) at (1,1.3) [place,tokens=1] {};
\node (p2) at (3,-0.5) [place,tokens=0] {};
\node (p3) at (3,2.5) [place,tokens=2] {};
\node (p4) at (5,1) [place,tokens=1] {};
\node (p5) at (3,1.3) [place,tokens=3] {};
\node (p6) at (3,0.7) [place,tokens=1] {};
\node (p7) at (1,0.7) [place,tokens=1] {};

\node [anchor=south] at (p1.north) {$p_1$};
\node [anchor=north] at (p2.south) {$p_2=p$};
\node [anchor=south] at (p3.north) {$p_3$};
\node [anchor=west] at (p4.east) {$p_4$};
\node [anchor=south] at (p5.north) {$p_5$};
\node [anchor=north] at (p6.south) {$p_6$};
\node [anchor=north] at (p7.south) {$p_7$};

\node (t1) at (0,1) [transition] {};
\node (t2) at (2,1) [transition] {};
\node (t3) at (4,1) [transition] {};
\node (t4) at (5,2.5) [transition] {};
\node (t5) at (5,-0.5) [transition] {};

\node [anchor=east] at (t1.west) {$t_1$};
\node [anchor=south] at (t2.north) {$t_2$};
\node [anchor=south] at (t3.north) {$t_3$};
\node [anchor=south] at (t4.north) {$t_4$};
\node [anchor=north] at (t5.south) {$t_5$};

\draw [->,thick] (p1) to node [above] {$2$} (t1);
\draw [->,thick] (t2) to node [above] {} (p1);

\draw [->,thick] (p2) to node [above right] {} (t2);
\draw [->,thick] (p2) to node [above left] {} (t3);

\draw [->,thick,bend left=60] (t1.north west) to node [above] {} (p3);
\draw [->,thick] (t2) to node [above] {} (p1);
\draw [->,thick,bend right=60] (t1.south west) to node [below] {} (p2);

\draw [->,thick] (t3) to node [above] {} (p4);
\draw [->,thick] (p3) to node [above] {} (t4);

\draw [->,thick] (t4) to node [right] {} (p4);
\draw [->,thick] (t3) to node [above] {} (p4);

\draw [->,thick] (p4) to node [left] {$2$} (t5);
\draw [->,thick] (t5) to node [below] {$2$} (p2);

\draw [->,thick] (t1) to node [below] {$2$} (p7);
\draw [->,thick] (p7) to node [below] {} (t2);

\draw [->,thick] (t3) to node [below] {} (p5);
\draw [->,thick] (p5) to node [below] {} (t2);

\draw [->,thick] (t2) to node [below] {} (p6);
\draw [->,thick] (p6) to node [below] {} (t3);

\end{tikzpicture}

\end{minipage}

\caption{
On the left, a live, homogeneous system $S=(N,M_0)$ with a single shared place. 
The T-sequence $\sigma_r = t_2 \, t_1 \, t_3 \, t_4 \, t_5 \, t_2 \, t_3 \, t_5 \, t_3 \, t_2 \, t_1 \, t_2 \, t_4 \, t_5 \, t_3$
is initially feasible.
A single firing of $t_3$ leads to the system on the right,
in which one occurrence of $t_2$ is delayed by $t_3$ relatively to 
$\projection{\sigma_r^\infty}{p_2\lbul} = (t_2 \, t_3 \, t_2 \, t_3 \, t_3 \, t_2 \, t_2 \, t_3)^\infty$.
Thus, $\kappa_0 = t_2 \, t_1$. 
Algorithm~\ref{AlgoSC1} fires $t_4$.
The enabled transitions are now $t_4$ and $t_5$.
Since $\alpha = t_3 \, t_4$, the transition $t_i \in \{t_4, t_5\}$ whose next occurrence after the $\Parikh(\alpha)(t_i)$-th
appears first in $\sigma_r^\infty$ is $t_5$.
A firing of $t_5$ enables $p_2 = p$ and disables $p_4$. The inner loop stops with $\alpha = t_3 \, t_4 \, t_5$.
Then,
the transition $t_j = tnext(p\lbul,\alpha,\kappa_0) = t_2$ is fired, implying that $\alpha = t_3 \, t_4 \, t_5 \, t_2$.
Since $\Parikh(\kappa_0)(t_2) = \Parikh(\alpha)(t_2)$, the outer loop stops and no delayed occurrence remains.
}
\label{JFalgo1}
\end{figure}

The termination and validity of this algorithm are shown in Lemma \ref{FirstLoop} below,
which is a variant of Lemma $2$ in \cite{Hujsa2015} for H$1$S systems.
We do not need here the notion of fairness exploited in~\cite{Hujsa2015}, as detailed in the following proof.

\begin{lemma}[Termination and validity of Algorithm~\ref{AlgoSC1}]\label{FirstLoop}
Let $S=(N,M_0)$ be a live H$1$S system in which a T-sequence $\sigma_r$ is feasible.
Then, 
for every transition~$t$ enabled by $M_0$, with $\myarrow{M_0}{M_t}{t}$,
Algorithm$\,$\ref{AlgoSC1} terminates and computes a sequence $\sigma_t$ feasible at $M_t$
such that $t \, \sigma_t$ does not induce any delayed occurrence relatively to the local ordering of $p^\bullet$ induced by $\sigma_r$.
\end{lemma}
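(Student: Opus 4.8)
Throughout let $p$ be the unique shared place, $w$ its common output weight (homogeneity), and $S_{\mathrm{CF}}$ the choice-free --- hence persistent --- system obtained from $S$ by deleting $p$. First I dispose of the case $t\notin p\lbul$: a non-choice transition draws tokens only from input places each of which has it as their sole output, and never from $p$, so firing $t$ disables no output of $p$; hence $t$ alone induces no delayed occurrence relative to $\projection{\sigma_r^\infty}{p\lbul}$, and the algorithm rightly returns $\sigma_t=\epsilon$. So assume $t\in p\lbul$, put $\kappa_0=K^1_t(\sigma_r)$, $\rho=\projection{\kappa_0}{p\lbul}$ and $\tau_p=\projection{\sigma_r^\infty}{p\lbul}$; since $\sigma_r=\kappa_0\,t\,\cdots$ and $t\notin\support(\kappa_0)$ we have $\tau_p=\rho\,t\,\cdots$, and note $tnext(p\lbul,\alpha,\kappa_0)\neq t$ always, because $\alpha$ starts with $t$ while $\kappa_0$ contains no $t$.

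The plan is to establish three facts about the run of Algorithm~\ref{AlgoSC1}: (i) every transition it fires is enabled when fired, so $\sigma_t$ is feasible at $M_t$; (ii) it terminates; (iii) at termination $\Parikh(\alpha)$ restricted to $p\lbul$ equals $\Parikh(t)+\Parikh(\rho)$, i.e.\ the Parikh image of the length-$(|\rho|+1)$ prefix of $\tau_p$ --- equivalently, in each iteration the only output of $p$ fired is $t_j=tnext$, and these $t_j$ never over-cover $\rho$. Granting (iii), the conclusion is immediate: for $t',t''\in p\lbul$ with $n=\Parikh(t\sigma_t)(t')$, the prefix of $\tau_p$ in question already contains the $n$-th occurrence of $t'$ (it contains exactly $n$ copies of $t'$), so $\Parikh(K^n_{t'}(\tau_p))(t'')\le\Parikh(t\sigma_t)(t'')$, meaning $t\sigma_t$ induces no delayed occurrence relative to $\tau_p$.

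The backbone for (i)--(iii) is an invariant maintained across inner steps: writing $\mu$ for the prefix of $\sigma_r$ lying before the occurrence of $p\lbul$ currently being covered, once the algorithm has fired (in $\sigma_r$-order, following the local orderings) the transitions of $\mu$, the non-$p$ input places of $tnext$ are at least as marked as after firing $\mu$ in $\sigma_r$. This rests on the structural fact that every non-shared input place of an output of $p$ has that single transition as its only output --- so no token feeding such a place is ever diverted --- together with $M_t(x)\ge M_0(x)$ for every non-$p$ input place $x$ of any output of $p$ distinct from $t$ (firing $t$ drains only $p$ and $t$'s own private inputs). Hence, once $\mu$ has been mimicked, $tnext$ is disabled solely when $M(p)<w$, which by homogeneity disables \emph{all} outputs of $p$, so the inner loop can then only fire non-choice transitions, which merely add tokens to $p$. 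That the ``catch-up'' transitions of $\kappa_0$ are really feasible in order follows from Keller's theorem (Theorem~\ref{kellersTheorem}) applied to $S_{\mathrm{CF}}$: since $t$ and $\kappa_0 t$ are feasible there, so are $t\kappa_0$ and all its prefixes $t\mu$; and such prefixes use only non-choice transitions, hence remain feasible from $M_t$ in $S$.

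The crux is the termination of the inner loop, which I would prove by contradiction. An infinite inner loop fires infinitely many transitions, all non-choice (since $M(p)<w$ throughout), so $M(p)$ is monotone non-decreasing and bounded by $w$, hence eventually constant at some $v<w$; from then on no transition of $\lbul p$ fires, so the run is confined to the $p$-disjoint subsystem, which is persistent. But $S$ is live, so $tnext$ is live: from the current marking some sequence raises $M(p)$ to $\ge w$, hence fires some $b\in\lbul p$ enabled after a $p$-disjoint prefix. Keller's theorem in the $p$-disjoint subsystem, combined with the weak fairness of the explicit greedy rule --- a continuously enabled transition keeps a fixed position in $\sigma_r^\infty$ while the positions of repeatedly fired transitions grow without bound, so it eventually becomes the minimum and must be fired; no abstract fairness notion is needed --- forces the run to realise that prefix and then fire $b\in\lbul p$, contradicting the confinement. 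Therefore the inner loop ends with $tnext$ enabled, the outer loop performs exactly $|\rho|$ iterations firing $\rho$ in order, and (i)--(iii) hold. The point demanding the most care is exactly this last interplay, inside the confinement argument, between Keller-confluence on the persistent $p$-disjoint part and the weak fairness of the greedy schedule.
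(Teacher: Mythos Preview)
Your approach is essentially the paper's: both exploit that every non-shared input of a choice transition has that transition as its unique output, use persistence of the $p$-deleted CF subsystem (you via Keller's theorem explicitly, the paper via a direct Parikh-comparison invariant $\Parikh(\tau_1\cdots\tau_n)\ge\Parikh(\tau_1'\cdots\tau_n')$ maintained by explicit induction on the outer iteration $n$), and argue inner-loop termination from liveness, homogeneity, and the greedy $\sigma_r^\infty$-ordering. One slip to fix: the prefixes $t\mu$ of $t\kappa_0$ do \emph{not} in general ``use only non-choice transitions'', since $\kappa_0$ may contain outputs of $p$ other than $t$; the correct reading---which is exactly the paper's inductive decomposition---is that the non-choice segments $\tau_{n+1}'$ between successive $p^\bullet$-occurrences in $\kappa_0$ are feasible in $S$ (being non-choice, their feasibility in $S_{\mathrm{CF}}$ transfers), while feasibility in $S$ of each choice transition $t_j=tnext$ is secured separately by the algorithm's explicit enabledness check at the end of the inner loop.
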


\begin{proof}
Comparing with the proof of Lemma~$2$ in~\cite{Hujsa2015},
we replace the equal-conflict set $E^t$ with $p\lbul$,
we use explicitly $tnext(p\lbul,\alpha,\kappa_0)$ in the stopping condition of the inner loop,
we get rid of the fairness results
and
we reason on the H$1$S structure.

Notice that, since the system is live, it is always possible to fire some transition.
Let us show that the algorithm terminates and is correct.
To achieve it, denote by $\alpha_n$ the sequence $\alpha$ at the beginning of the $n$-th iteration of the outer loop, 
just before entering the inner loop.
Denote by~$\ell$ the number of transition occurrences in $\kappa_0$ that belong to $p\lbul$:
the algorithm stops when these~$\ell$ occurrences have been fired.
Besides, for any positive integer $n < \ell$, $\alpha_n$ contains exactly~$n$ occurrences of transitions that belong to $p\lbul$:
before the first iteration of the inner loop, $\alpha_1 = t$.
Let us prove by induction on this number $n < \ell$ the following property $\mathcal{P}(n)$:

"At the beginning of the $n$-th iteration of the inner loop, 
let us write $\alpha_n = \tau_1 \, t_1 \, \tau_2 \, t_2 \, \ldots \, \tau_n \, t_n$
where
each $t_i$ is an occurrence of some transition in $p\lbul$, with $t_1 = t$,
and 
each $\tau_i$ is a sequence whose transitions do not belong to $p\lbul$, with $\tau_1 = \epsilon$.
Also, let us write $\kappa_0 = \tau_1' \, \tau_2' \, t_2' \, \ldots \, \tau_\ell' \, t_\ell'$
where
$\tau_1' = \epsilon$,
for each $i \in [1,n]$, $t_i = t_i'$,
and
for each $j \in [1,\ell]$, $\tau_j'$ contains only occurrences of transitions that do no belong to $p\lbul$.
Then $\alpha_n$ is constructible and fulfills $\Parikh(\tau_1 \ldots \tau_n) \ge \Parikh(\tau_1' \ldots \tau_n')$."

\noindent $-$ Base case: $n=1$: $\alpha_1=t$ is trivially constructed at the beginning, and $\tau_1' = \epsilon$,
so that
$\Parikh(\tau_1) = \zero \ge \Parikh(\tau_1') = \zero$.

\noindent $-$ Inductive case: $n>1$: We suppose $\mathcal{P}(n)$ to be true,
thus
$\alpha_n$ is constructible and fulfills $\Parikh(\tau_1 \ldots \tau_n) \ge \Parikh(\tau_1' \ldots \tau_n')$.
Since $\kappa_0$ is a prefix of $\sigma_r$
and
since the inner loop fires enabled transitions following the order of $\sigma_r^\infty$,
it fires first the occurrences (if any) that remain to be fired in $\tau_{n+1}'$, yielding a sequence $\gamma$
that leads to the marking $M_{n+1}$ (such occurrences remain feasible since they are non-choice, 
i.e.\ since the P-subsystem $S_{\setminus p}$ obtained by deleting $p$ in $S$ is persistent).
Let us show that $M_{n+1}$ enables at least all the non-shared input places of $t_{n+1} = tnext(p\lbul,\alpha_n,\kappa_0)$.

Let us define $\kappa_0(n) = \tau_1' \, \tau_2' \, t_2' \, \ldots \, \tau_n' \, t_n'$.
From the above, we deduce $\Parikh(\alpha_n \, \gamma) \ge \Parikh(\kappa_0(n) \, \tau_{n+1}')$.
Moreover, $\Parikh(\alpha_n \, \gamma)(t_{n+1}) = \Parikh(\kappa_0(n) \, \tau_{n+1}')(t_{n+1})$.
Consequently, for each input place $p'$ of $t_{n+1}$, the inputs of $p'$ have been fired at least as many times as in $\kappa_0(n) \tau_{n+1}'$,
and
the single output of $p'$ has been fired the same number of times.
Thus, $M_{n+1}$ enables each such place $p'$.

If $M_{n+1}$ enables $p$, we are done. Otherwise, further firings occur in the inner loop,
following the order of $\sigma_r^\infty$,
knowing that $\support(\sigma_r)$ is the set $T$ of all transitions of the net.
Let us show that $\gamma$ is completed to a sequence $\tau_{n+1}$ that leads to a marking enabling $p$.

Since $S$ is live, there is a sequence $\sigma$ feasible at $M_{n+1}$ that leads to some marking $M$ enabling $p$
and
such that $\sigma$ does not contain any outgoing transition of $p$, since $S$ is homogeneous.
Since $\support(\sigma_r) = T$, firings occur in the inner loop until either $\Parikh(\sigma)$ is exceeded, enabling $p$, or $p$ is enabled before.

We proved that a finite sequence $\tau_{n+1}$ is fired in the inner loop, leading to a marking that enables 
$t_{n+1} = tnext(p\lbul,\alpha_n,\kappa_0)$.
Thus, $\mathcal{P}(n+1)$ is true: $\alpha_{n+1}$ is constructible 
and 
fulfills $\Parikh(\tau_1 \ldots \tau_{n+1}) \ge \Parikh(\tau_1' \ldots \tau_{n+1}')$.

We proved $\mathcal{P}(n)$ to be true for each $n \in [1,\ell-1]$.
At the end of the $\ell$-th iteration of the outer loop, there is no delayed occurrence anymore.

We deduce the lemma.
\end{proof}

Before studying Algorithm~\ref{AlgoSC2}, we give in Lemma~\ref{PropTau} 
a property of the sequence $\alpha = t \, \sigma_t$ obtained at the end of Algorithm~\ref{AlgoSC1}.
This result, derived from Lemma~$3$ in \cite{Hujsa2015}, compares the number of occurrences
in $\alpha$ with other ones in prefixes of $\kappa$, the latter being defined as a finite concatenation of the T-sequence $\sigma_r$.
This will prove useful for the study of Algorithm~\ref{AlgoSC2}.

\begin{lemma}[Property of $\alpha = t \, \sigma_t$]\label{PropTau}
Let $S = (N,M_0)$ be a live H$1$S system in which a T-sequence $\sigma_r$ is feasible.
Consider the sequence $\sigma_t$ constructed by Algorithm~\ref{AlgoSC1} after the firing of any transition $t$ in $p\lbul$.
Consider the sequences $\alpha = t \, \sigma_t$ and $\kappa = \sigma_r^\ell$ 
where 
$\ell \ge 1$ is the smallest integer 
such that $\Parikh(\alpha) \le \ell \cdot \Parikh(\sigma_r)$.
For every transition $t'$, denote by $E^{t'}$ the set $\{t'\}$ if $t' \not\in p\lbul$, the set $p\lbul$ otherwise;
then for each transition $t'$ such that $t_u = {tnext}(E^{t'}, \alpha, \kappa)$ is defined,
with 
$m = \Parikh(\alpha)(t_u) + 1$ 
and $K_u = K^m_u(\kappa)$,
we have that $\Parikh(\alpha)(t') = \Parikh(K_u)(t')$.
Besides, for each other transition $t''$, i.e.\ such that ${tnext}(E^{t''}, \alpha, \kappa)$ is not defined,
we have that 
$\Parikh(\alpha)(t'') = \Parikh(\kappa)(t'')$.
\end{lemma}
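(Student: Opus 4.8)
The plan is to adapt the argument of Lemma~$3$ in~\cite{Hujsa2015} to the H$1$S setting, by unfolding the behaviour of Algorithm~\ref{AlgoSC1} on the sequence $\alpha = t\,\sigma_t$ it returns and comparing $\alpha$, component by component, with the word $\kappa = \sigma_r^\ell$. The single structural fact used throughout is that $\ell$ is chosen minimal with $\Parikh(\alpha) \le \ell \cdot \Parikh(\sigma_r)$, so $\Parikh(\alpha) \le \Parikh(\kappa)$ holds componentwise. Two of the three cases then become almost immediate. If ${tnext}(E^{t''},\alpha,\kappa)$ is undefined, then by definition of ${tnext}$ we have $\Parikh(\alpha)(x) \ge \Parikh(\kappa)(x)$ for every $x \in E^{t''}$; with $\Parikh(\alpha) \le \Parikh(\kappa)$ this forces $\Parikh(\alpha)(x) = \Parikh(\kappa)(x)$, in particular for $x = t''$. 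If $t' \notin p\lbul$, so $E^{t'} = \{t'\}$, then $t_u = {tnext}(E^{t'},\alpha,\kappa) = t'$, $m = \Parikh(\alpha)(t')+1$, and $K_u = K^m_{t'}(\kappa)$ contains by definition exactly $m-1 = \Parikh(\alpha)(t')$ occurrences of $t'$, which is the claimed equality. Hence the genuine content is the case $t' \in p\lbul$, $E^{t'} = p\lbul$, with $t_u = {tnext}(p\lbul,\alpha,\kappa)$ defined.

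For that case I would first pin down $\Parikh(\projection{\alpha}{p\lbul})$. The algorithm starts from $\alpha := t$ with $t \in p\lbul$, and $\kappa_0 = K^1_t(\sigma_r)$, the prefix of $\sigma_r$ strictly before the first occurrence of $t$, satisfies $\Parikh(\kappa_0)(t) = 0$. By Lemma~\ref{FirstLoop} and the invariant $\mathcal{P}(n)$ in its proof --- at the start of the $n$-th outer iteration the current sequence contains exactly $n$ occurrences of outgoing transitions of $p$ --- the inner loop never fires an outgoing transition of $p$, and each outer iteration fires exactly one of them, namely the current value of ${tnext}(p\lbul,\cdot,\kappa_0)$. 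Since $t$ itself never qualifies (that would need $\Parikh(\cdot)(t) < \Parikh(\kappa_0)(t) = 0$), the outgoing transitions of $p$ occurring in $\sigma_t$ are precisely the occurrences of $p\lbul \setminus \{t\}$ recorded in $\kappa_0$, each brought up until $\Parikh(\alpha)(t') = \Parikh(\kappa_0)(t')$ for all $t' \in p\lbul \setminus \{t\}$, the outer-loop guard forbidding any overshoot. Therefore $\Parikh(\projection{\alpha}{p\lbul}) = \Parikh(\projection{\kappa_0\,t}{p\lbul})$, where $\kappa_0\,t$ denotes the prefix of $\sigma_r$, hence of $\kappa = \sigma_r^\ell$ (using $\ell \ge 1$), up to and including the first occurrence of $t$.

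It then remains to identify $K_u$. Since $\Parikh(\projection{\kappa_0\,t}{p\lbul}) = \Parikh(\projection{\alpha}{p\lbul})$, the prefix $\kappa_0\,t$ of $\kappa$ contains exactly $\Parikh(\alpha)(t')$ occurrences of each $t' \in p\lbul$; hence every $t' \in p\lbul$ with $\Parikh(\alpha)(t') = \Parikh(\kappa)(t')$ has all its occurrences inside $\kappa_0\,t$, while every $t'$ with $\Parikh(\alpha)(t') < \Parikh(\kappa)(t')$ has its $(\Parikh(\alpha)(t')+1)$-th occurrence strictly after $\kappa_0\,t$, which is moreover its first occurrence located after $\kappa_0\,t$. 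Consequently the first occurrence of any transition of $p\lbul$ located in $\kappa$ after $\kappa_0\,t$ is exactly the $m$-th occurrence of $t_u$ with $m = \Parikh(\alpha)(t_u)+1$ --- this is precisely the transition selected by the definition of ${tnext}(p\lbul,\alpha,\kappa)$, and it exists because ${tnext}$ is assumed defined. Thus $K_u = K^m_{t_u}(\kappa)$, the prefix of $\kappa$ before that occurrence, is obtained from $\kappa_0\,t$ by appending only transitions outside $p\lbul$, so $\projection{K_u}{p\lbul}$ and $\projection{\kappa_0\,t}{p\lbul}$ have the same Parikh vector; hence $\Parikh(K_u)(t') = \Parikh(\kappa_0\,t)(t') = \Parikh(\alpha)(t')$ for every $t' \in p\lbul$, the desired equality.

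I expect the main obstacle to be, not these Parikh-vector manipulations, but the claim that Algorithm~\ref{AlgoSC1} adds no occurrence of an outgoing transition of $p$ while running an inner loop --- equivalently, that $\alpha$ fires exactly the occurrences of $p\lbul$ recorded in $\kappa_0\,t$. This is the content of the invariant used in the proof of Lemma~\ref{FirstLoop}, and it is where the H$1$S hypothesis bites: homogeneity of $p$ makes $p$ simultaneously enabled for all its outputs, and the remaining input places of those outputs are non-shared, hence keep the tokens produced for them until their unique output is fired. Everything else follows by inspection of the loops, as on the run of Figure~\ref{JFalgo1}.
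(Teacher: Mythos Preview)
Your proposal is correct and follows essentially the same approach as the paper: both rely on Lemma~\ref{FirstLoop} (in particular its invariant that the inner loop of Algorithm~\ref{AlgoSC1} fires no output of $p$) and then adapt the argument of Lemma~3 in~\cite{Hujsa2015} by replacing the equal-conflict set $E^t$ with $p\lbul$. The paper's proof is little more than a pointer to that lemma, whereas you unfold the three cases explicitly and identify $K_u$ as the prefix $\kappa_0\,t$ extended only by non-$p\lbul$ transitions; this is exactly the content that substitution is meant to convey, and your flagging of the inner-loop claim as the sole place where the H$1$S structure matters is accurate.
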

    
\begin{proof}
By Lemma \ref{FirstLoop}, Algorithm~\ref{AlgoSC1} terminates and is valid.
Consequently, at the end,
there is no delayed occurrence of any transition in the system obtained.
The equalities are then deduced as in the proof of Lemma~$3$ in \cite{Hujsa2015} by replacing $E^t$ with $p\lbul$.
\end{proof}

\subsubsection{Second part of the sequence construction}

Algorithm~\ref{AlgoSC2}, presented next, 
completes the sequence constructed by Algorithm~\ref{AlgoSC1}
and leads to the initial marking,
as illustrated in Figure~\ref{JFalgo2} with the particular case that $\ell = 1$.

\begin{algorithm}\label{AlgoSC2}
   \KwData{The sequences $\alpha = t \, \sigma_t$ and $\kappa = \sigma_r^\ell$, 
	    where $\ell \ge 1$ is the smallest integer 
such that $\Parikh(\alpha) \le \ell \cdot \Parikh(\sigma_r)$;
	    the marking $M_t'$ such that $\myarrow{M_0}{M_t'}{\alpha}$}
   \KwResult{A completion sequence $\sigma_t'$ that is feasible in $(N,M_t')$ such that $\myarrow{M_t'}{M_0}{\sigma_t'}$}

    \While{$\Parikh(\alpha) \neq \Parikh(\kappa)$}{
	Fire the transition $t_i$ whose next occurrence after its $\Parikh(\alpha)(t_i)$-th 
	appears first in $\kappa$\;
	$\alpha := \alpha \, t_i$\;
    }
    $\alpha$ is of the form $t \, \sigma_t \, \sigma_t'$\;
    \Return{$\sigma_t'$}

    \caption{Computation of the feasible sequence $\sigma_t'$ after the end of Algorithm~\ref{AlgoSC1}.}

\end{algorithm}

The following theorem shows that Algorithm~\ref{AlgoSC2} is valid and terminates.
It is Theorem $2$ in \cite{Hujsa2015} applied to our systems.

\begin{theorem}\label{ThSC}
Let $S=(N,M_0)$ be a live H$1$S system, with $N=(P,T,W)$, in which a T-sequence $\sigma_r$ is feasible.
For every transition $t$ enabled by $M_0$ such that $\myarrow{M_0}{M_t}{t}$ ~,
consider the sequence $\sigma^{\star} = \sigma_t \, \sigma_t'$
where $\sigma_t$ is constructed from $M_t$ by Algorithm~\ref{AlgoSC1}
and $\sigma_t'$ is built by Algorithm~\ref{AlgoSC2} after the execution of Algorithm~\ref{AlgoSC1}.
Then, $\sigma = t \, \sigma^{\star}$ is a T-sequence feasible in $S$
that satisfies $\Parikh(\sigma) = k \cdot \Parikh(\sigma_r)$ for some integer $k \ge 1$.
\end{theorem}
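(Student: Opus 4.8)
The plan is to follow the structure of the proof of Theorem~2 in~\cite{Hujsa2015}, adapting it to the H$1$S setting by using the validity results already obtained for Algorithm~\ref{AlgoSC1} (Lemma~\ref{FirstLoop}) and the accompanying occurrence-count property (Lemma~\ref{PropTau}). First I would observe that, by Lemma~\ref{FirstLoop}, the sequence $\alpha = t\,\sigma_t$ produced by Algorithm~\ref{AlgoSC1} is feasible from $M_0$, leads to some marking $M_t'$, and induces no delayed occurrence relative to the local ordering of $p\lbul$; moreover every (trivial, singleton) equal-conflict set not meeting $p\lbul$ is also free of delayed occurrences, since those transitions are non-choice and the P-subsystem $S_{\setminus p}$ is persistent. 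Then I would let $\ell \ge 1$ be the smallest integer with $\Parikh(\alpha) \le \ell\cdot\Parikh(\sigma_r)$ and set $\kappa = \sigma_r^\ell$; the goal is to show the while-loop of Algorithm~\ref{AlgoSC2} terminates with $\alpha$ extended to a sequence of Parikh vector exactly $\Parikh(\kappa) = \ell\cdot\Parikh(\sigma_r)$, which is a consistency vector, hence $\sigma = t\,\sigma^\star$ is a T-sequence and, since it returns to $M_0$ (because $I\cdot\Parikh(\sigma)=\zero$ and the firing is feasible), witnesses reversibility after the single firing of $t$.

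The core of the argument is an invariant showing that, as long as $\Parikh(\alpha)\neq\Parikh(\kappa)$, the transition selected by the rule ``fire the transition whose next occurrence after its $\Parikh(\alpha)(t_i)$-th appears first in $\kappa$'' is indeed enabled at the current marking. Here is where Lemma~\ref{PropTau} does the work: it guarantees that for every transition $t'$ the count $\Parikh(\alpha)(t')$ matches either $\Parikh(\kappa)(t')$ (no more occurrences owed) or $\Parikh(K_u)(t')$ for the relevant prefix $K_u$ of $\kappa$ determined by the next transition $t_u$ of the equal-conflict set $E^{t'}$. From this synchronization one deduces, place by place, that the input places of the next scheduled transition have received at least as many tokens as they would have along the prefix of $\kappa$, while their outputs have fired the matching number of times --- for the non-shared places this is immediate because they have a single output (the \WMGineq{}-like structure of $S_{\setminus p}$), and for the shared place $p$ one uses homogeneity plus the fact that the scheduled outgoing transition of $p$ is precisely $tnext(p\lbul,\alpha,\kappa)$, so the $p$-token requirement is met. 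Thus the scheduled transition fires, $\alpha$ grows by one, and the invariant is re-established; since each step strictly increases $|\alpha|$ toward the finite bound $|\kappa|$, the loop terminates.

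The main obstacle --- and the place where the H$1$S case genuinely departs from the HFC case of~\cite{Hujsa2015} --- is handling the shared place $p$ when its outgoing transitions have \emph{additional} input places. In the free-choice setting, two conflicting transitions are simultaneously enabled or disabled, so any conflict-resolution policy is realizable; here a transition $t'\in p\lbul$ may be blocked by a non-shared input place even when $p$ itself is sufficiently marked. The resolution is the structural observation already used in Lemma~\ref{FirstLoop}: every such extra input place has exactly one output, namely $t'$, so it retains every token deposited by its ingoing transitions until $t'$ fires; combined with the count-matching of Lemma~\ref{PropTau}, this forces those places to be enabled exactly when the ordering schedules $t'$. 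I would therefore make this point the technical heart of the termination-and-validity argument, citing Lemma~\ref{PropTau} for the arithmetic and the single-output property of $S_{\setminus p}$ for the token-retention, and otherwise transcribe the inductive bookkeeping of Theorem~2 in~\cite{Hujsa2015}, replacing each equal-conflict set $E^t$ that arises from $p$ by $p\lbul$ and each ``both enabled or both disabled'' step by the retention argument above. The conclusion $\Parikh(\sigma) = k\cdot\Parikh(\sigma_r)$ with $k = \ell$ is then immediate from the exit condition $\Parikh(\alpha)=\Parikh(\kappa)$.
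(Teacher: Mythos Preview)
Your proposal is correct and follows essentially the same approach as the paper: the paper's proof is a one-liner stating that Lemma~\ref{PropTau} supplies the adapted occurrence-count property and that ``the rest of the proof remains the same as in~\cite{Hujsa2015}''. Your write-up simply unpacks what that entails---the invariant for Algorithm~\ref{AlgoSC2}, the replacement of equal-conflict sets by $p\lbul$, and the token-retention argument for the non-shared inputs of choice transitions---and arrives at the same conclusion with $k=\ell$.
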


\begin{proof}
We use Lemma \ref{PropTau} and the rest of the proof remains the same as in~\cite{Hujsa2015}.
\end{proof}

\begin{figure}[!h]
%
%
\centering
\begin{tikzpicture}[mypetristyle,scale=0.8]

\node (p1) at (1,1.3) [place,tokens=2] {};
\node (p2) at (3,-0.5) [place,tokens=1] {};
\node (p3) at (3,2.5) [place,tokens=1] {};
\node (p4) at (5,1) [place] {};
\node (p5) at (3,1.3) [place,tokens=2] {};
\node (p6) at (3,0.7) [place,tokens=2] {};
\node (p7) at (1,0.7) [place,tokens=0] {};

\node [anchor=south] at (p1.north) {$p_1$};
\node [anchor=north] at (p2.south) {$p_2=p$};
\node [anchor=south] at (p3.north) {$p_3$};
\node [anchor=north east] at (p4.south west) {$p_4$};
\node [anchor=south] at (p5.north) {$p_5$};
\node [anchor=north] at (p6.south) {$p_6$};
\node [anchor=north] at (p7.south) {$p_7$};

\node (t1) at (0,1) [transition] {};
\node (t2) at (2,1) [transition] {};
\node (t3) at (4,1) [transition] {};
\node (t4) at (5,2.5) [transition] {};
\node (t5) at (5,-0.5) [transition] {};

\node [anchor=east] at (t1.west) {$t_1$};
\node [anchor=south] at (t2.north) {$t_2$};
\node [anchor=south] at (t3.north) {$t_3$};
\node [anchor=south] at (t4.north) {$t_4$};
\node [anchor=north] at (t5.south) {$t_5$};

\draw [->,thick] (p1) to node [above] {$2$} (t1);
\draw [->,thick] (t2) to node [above] {} (p1);

\draw [->,thick] (p2) to node [above right] {} (t2);
\draw [->,thick] (p2) to node [above left] {} (t3);

\draw [->,thick,bend left=60] (t1.north west) to node [above] {} (p3);
\draw [->,thick] (t2) to node [above] {} (p1);
\draw [->,thick,bend right=60] (t1.south west) to node [below] {} (p2);

\draw [->,thick] (t3) to node [above] {} (p4);
\draw [->,thick] (p3) to node [above] {} (t4);

\draw [->,thick] (t4) to node [right] {} (p4);
\draw [->,thick] (t3) to node [above] {} (p4);

\draw [->,thick] (p4) to node [right] {$2$} (t5);
\draw [->,thick] (t5) to node [below] {$2$} (p2);

\draw [->,thick] (t1) to node [below] {$2$} (p7);
\draw [->,thick] (p7) to node [below] {} (t2);

\draw [->,thick] (t3) to node [below] {} (p5);
\draw [->,thick] (p5) to node [below] {} (t2);

\draw [->,thick] (t2) to node [below] {} (p6);
\draw [->,thick] (p6) to node [below] {} (t3);
\end{tikzpicture}
%
%
\hspace*{2cm}
%
%
\begin{tikzpicture}[mypetristyle,scale=0.8]

\node (p1) at (1,1.3) [place,tokens=1] {};
\node (p2) at (3,-0.5) [place,tokens=1] {};
\node (p3) at (3,2.5) [place,tokens=2] {};
\node (p4) at (5,1) [place] {};
\node (p5) at (3,1.3) [place,tokens=2] {};
\node (p6) at (3,0.7) [place,tokens=2] {};
\node (p7) at (1,0.7) [place,tokens=1] {};

\node [anchor=south] at (p1.north) {$p_1$};
\node [anchor=north] at (p2.south) {$p_2=p$};
\node [anchor=south] at (p3.north) {$p_3$};
\node [anchor=north east] at (p4.south west) {$p_4$};
\node [anchor=south] at (p5.north) {$p_5$};
\node [anchor=north] at (p6.south) {$p_6$};
\node [anchor=north] at (p7.south) {$p_7$};

\node (t1) at (0,1) [transition] {};
\node (t2) at (2,1) [transition] {};
\node (t3) at (4,1) [transition] {};
\node (t4) at (5,2.5) [transition] {};
\node (t5) at (5,-0.5) [transition] {};

\node [anchor=east] at (t1.west) {$t_1$};
\node [anchor=south] at (t2.north) {$t_2$};
\node [anchor=south] at (t3.north) {$t_3$};
\node [anchor=south] at (t4.north) {$t_4$};
\node [anchor=north] at (t5.south) {$t_5$};

\draw [->,thick] (p1) to node [above] {$2$} (t1);
\draw [->,thick] (t2) to node [above] {} (p1);

\draw [->,thick] (p2) to node [above right] {} (t2);
\draw [->,thick] (p2) to node [above left] {} (t3);

\draw [->,thick,bend left=60] (t1.north west) to node [above] {} (p3);
\draw [->,thick] (t2) to node [above] {} (p1);
\draw [->,thick,bend right=60] (t1.south west) to node [below] {} (p2);

\draw [->,thick] (t3) to node [above] {} (p4);
\draw [->,thick] (p3) to node [above] {} (t4);

\draw [->,thick] (t4) to node [right] {} (p4);
\draw [->,thick] (t3) to node [above] {} (p4);

\draw [->,thick] (p4) to node [right] {$2$} (t5);
\draw [->,thick] (t5) to node [below] {$2$} (p2);

\draw [->,thick] (t1) to node [below] {$2$} (p7);
\draw [->,thick] (p7) to node [below] {} (t2);

\draw [->,thick] (t3) to node [below] {} (p5);
\draw [->,thick] (p5) to node [below] {} (t2);

\draw [->,thick] (t2) to node [below] {} (p6);
\draw [->,thick] (p6) to node [below] {} (t3);

\end{tikzpicture}


\caption{
The marking reached at the end of Algorithm~\ref{AlgoSC1}
by firing $\alpha = t_3 \, t_4 \, t_5 \, t_2$
in the system on the left of Figure~\ref{JFalgo1}
is depicted on the left.
Applying Algorithm~\ref{AlgoSC2} to this marking,
the sequence $t_1 \, t_2 \, t_3 \, t_5 \, t_3 \, t_2 \, t_1 \, t_2 \, t_4 \, t_5 \, t_3$ is fired,
completing $\alpha$ such that the initial marking is reached again, as shown on the right.
In this example, the sequence of firings corresponding to $t \, \sigma_t \, \sigma_t'$ in Figure~\ref{FigSigmaC}
is $t_3 \, t_4 \, t_5 \, t_2 \, t_1 \, t_2 \, t_3 \, t_5 \, t_3 \, t_2 \, t_1 \, t_2 \, t_4 \, t_5 \, t_3$,
whose Parikh vector equals $\Parikh(\sigma_r)$.
Here,
$\ell=1$ and $\kappa=\sigma_r$.
}
\label{JFalgo2}
\end{figure}
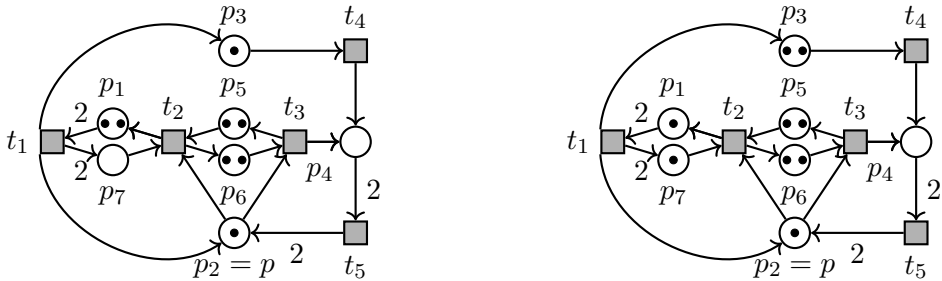

\subsubsection{Deriving the reversibility characterization}\label{CharRev}

We are now able to derive the next condition for reversibility, 
whose proof is illustrated in Figure~\ref{FigReturnsM0}.
It is a variant of Corollary~$1$ in~\cite{Hujsa2015}.

\begin{corollary}[Reversibility characterization]\label{H1Srev}
Consider a live H1S system $S=(N,M_0)$.
Then, $S$ is reversible iff $S$ enables a T-sequence.
\end{corollary}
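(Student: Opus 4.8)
The plan is to establish the two implications separately. Necessity ($\Rightarrow$) is essentially immediate and in fact holds in all weighted Petri nets~\cite{Hujsa2015}: since $S$ is live, one may greedily fire transitions from $M_0$ to obtain a feasible sequence $\tau$ with $\support(\tau)=T$, say $M_0[\tau\rangle M$; since $S$ is reversible, some $\mu$ satisfies $M[\mu\rangle M_0$, whence $M_0[\tau\mu\rangle M_0$. Then $I\cdot\Parikh(\tau\mu)=\zero$ and $\support(\tau\mu)=T$, so $\Parikh(\tau\mu)$ is a consistency vector and $\tau\mu$ is a feasible T-sequence of $S$.

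For sufficiency ($\Leftarrow$), assume $S$ is live and enables a T-sequence $\sigma_r$; the goal is to show $M_0\in R((N,M))$ for every $M\in R(S)$, i.e.\ $RG(S)$ is strongly connected. The key preliminary observation is that the hypothesis ``live H$1$S system enabling a T-sequence'' propagates along single firings, together with an explicit return witness. Indeed, if $M_0[t\rangle M_1$, then Theorem~\ref{ThSC} provides a feasible T-sequence $\sigma=t\,\sigma^\star$ of $S$ with $\Parikh(\sigma)=k\cdot\Parikh(\sigma_r)$ for some $k\ge1$; since $I\cdot\Parikh(\sigma)=\zero$ we get $M_0[t\,\sigma^\star\rangle M_0$, hence $M_1[\sigma^\star\rangle M_0$ and $M_1[\sigma^\star t\rangle M_1$. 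As $\Parikh(\sigma^\star t)=k\cdot\Parikh(\sigma_r)$ is a consistency vector with support $T$, the sequence $\sigma^\star t$ is a feasible T-sequence of $(N,M_1)$, and $(N,M_1)$ is again a live H$1$S system (same net, reachable marking of a live system).

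I would then prove, by induction on $n\ge0$, the statement: for every live H$1$S system $(N',M')$ enabling a T-sequence and every marking $M$ reachable from $M'$ by a firing sequence of length at most $n$, one has $M'\in R((N',M))$. The base case $n=0$ is trivial. For the step, assuming the statement for $n$, let $M$ be reachable from $M'$ in at most $n+1$ steps; a length-minimal such sequence either gives $M=M'$ (done) or factors as $M'[t\rangle M_1[\rho\rangle M$ with $|\rho|\le n$. By the observation above, $(N',M_1)$ is a live H$1$S system enabling a T-sequence, and $M$ is reachable from $M_1$ in at most $n$ steps, so the induction hypothesis yields $M_1\in R((N',M))$; combining with $M_1[\sigma^\star\rangle M'$ gives $M'\in R((N',M))$. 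Applying this with $(N',M')=S$ and letting $n$ range over all path lengths shows that $M_0$ is reachable from every reachable marking, i.e.\ $S$ is reversible.

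The main obstacle is the sufficiency direction, specifically lifting Theorem~\ref{ThSC} --- which only handles a single firing from the initial marking, via Algorithms~\ref{AlgoSC1} and~\ref{AlgoSC2} --- to arbitrary firing sequences. The abstract argument above does so through the closure property and the induction on path length; the points requiring care are (i) verifying that $\sigma^\star t$ genuinely is a T-sequence of $(N,M_1)$ (its Parikh vector equals $k\cdot\Parikh(\sigma_r)$, a consistency vector of support $T$), and (ii) setting up the induction parameter so that replacing the origin $M'$ by $M_1$ while shortening the remaining path strictly reduces it. Finally, I would stress that the restriction to one shared place is essential: Theorem~\ref{ThSC} breaks for H$2$S, as witnessed by the live H$2$S system constructed in Subsection~\ref{NonExtensibility2places}, so the characterization does not extend there.
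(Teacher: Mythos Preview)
Your proposal is correct and follows essentially the same approach as the paper: both argue necessity trivially, then for sufficiency establish the closure property (if $M_0[t\rangle M_1$ then $(N,M_1)$ is again a live H$1$S system enabling the T-sequence $\sigma^\star t$, and $M_1[\sigma^\star\rangle M_0$) via Theorem~\ref{ThSC}, and use it in an induction on the length of the firing sequence, universally quantified over all such systems, factoring off the first transition and applying the hypothesis at $M_1$. The paper's proof is slightly terser but structurally identical, including the same care point you flag about the induction quantifying over the base system so that it can be instantiated at $(N,M_1)$.
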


\begin{proof}
As mentioned earlier, the left-to-right direction is clear.
We prove by induction on the length~$n$ of any feasible sequence $\sigma$
that,
after the firing of this sequence, another sequence is feasible that leads to the initial marking.
More precisely, we mimic the proof of Corollary $1$ in \cite{Hujsa2015}
by replacing the property $\mathcal{P}(n)$ with the following one:

``Consider a live H$1$S system $S=(N,M_0)$ enabling a T-sequence $\sigma_r$ and a sequence $\sigma$ of length~$n$.
There exists a firing sequence $\sigma^\star$ such that $\myarrow{M_0}{M_0}{\sigma \, \sigma^\star}$.''

The main steps are illustrated in Figure~\ref{FigReturnsM0}.

The base case, with $n=0$, is clear: $\sigma^\star = \epsilon$.
For the inductive case, with $n>0$,
let us suppose that $\mathcal{P}(n-1)$ is true
and
let us write $\sigma = t \, \sigma'$.
The firing of $t$ from $M_0$ leads to a marking $M_t$, and $\sigma'$ leads to $M$ from $M_t$.
Theorem~\ref{ThSC} applies: a sequence $\sigma_t^\star$ is feasible at $M_t$, leading to $M_0$,
such that $t \, \sigma_t^\star$ is a T-sequence.

Since $M_t$ is live and $\sigma_t^\star \, t$ is a T-sequence feasible at $M_t$,
the induction hypothesis applies to $(N,M_t)$ and $\sigma'$ of length $n-1$,
so that a sequence $\sigma''$ exists that is feasible from $M$ and leads to~$M_t$.

Thus, after the firing of $\sigma$, the sequence $\sigma^\star = \sigma'' \, \sigma_t^\star$ leads to the initial marking. 
We deduce reversibility.
\end{proof}

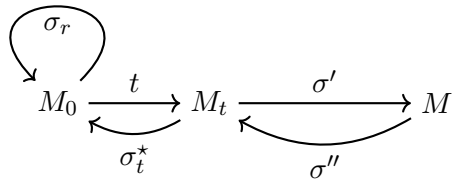
\begin{figure}[!h]

\begin{minipage}{1\linewidth}
\centering
\begin{tikzpicture}[scale=1]

\node (M0) at (0,2) {$M_0$};
\node (Mt) at (2,2) {$M_t$};
\node (M) at (5,2) {$M$};

\draw [->,thick,above] (M0) to node {$t$} (Mt);
\draw [->,thick,above] (Mt) to node {$\sigma'$} (M);

\draw [->,thick,below,bend left=30] (Mt) to node {$\sigma_t^\star$} (M0);
\draw [->,thick,below,bend left=30] (M) to node {$\sigma''$} (Mt);

\draw [->,thick, loop, below] (M0) to node {$\sigma_r$} (M0);

\end{tikzpicture}

\end{minipage}

\caption{Illustration of the general case in the proof of Corollary~\ref{H1Srev}.}
\label{FigReturnsM0}
\end{figure}

\subsection{Non-extensibility to several shared places}\label{NonExtensibility2places}

We show that Corollary~\ref{H1Srev} is no more true in the case of two shared places,
even in unit-weighted H$2$S-WMG:
we design a counter-example in Figure~\ref{NonRevTwoPlacesBis}.

\begin{figure}[!h]

\centering

\begin{tikzpicture}[scale=0.85,mypetristyle]

\node (p0) at (-0.5,2.8) [place,tokens=1] {};
\node (p1) at (2.4,2) [place,tokens=1] {};
\node (p2) at (1,0.5) [place,tokens=0] {};
\node (p3) at (3.9,-0.3) [place,tokens=0] {};
\node (p4) at (0.3,1) [place,tokens=1] {};
\node (p5) at (0,2) [place,tokens=1] {};
\node (p6) at (3.4,0.5) [place,tokens=1] {};
\node (p7) at (3,1.5) [place,tokens=1] {};

\node [anchor=east] at (p0.west) {$p_0$};
\node [anchor=south] at (p1.north) {$p_1~~~$};
\node [anchor=north] at (p2.south) {$~~~p_2$};
\node [anchor=west] at (p3.east) {$p_3$};
\node [anchor=south] at (p4.north) {$p_4$};
\node [anchor=south] at (p5.north) {$p_5$};
\node [anchor=north] at (p6.south) {$p_6$};
\node [anchor=north] at (p7.south) {$~~~p_7$};

\node (t0) at (3.9,2.8) [transition] {};
\node (t1) at (1,2) [transition] {};
\node (t2) at (-0.5,-0.3) [transition] {};
\node (t3) at (2.4,0.5) [transition] {};

\node [anchor=west] at (t0.east) {$t_0$};
\node [anchor=south] at (t1.north) {$t_1$};
\node [anchor=east] at (t2.west) {$t_2$};
\node [anchor=north] at (t3.south) {$t_3$};

\draw [->,thick,bend right=30] (t2) to node {} (p2);
\draw [->,thick,bend left=0] (p2) to node {} (t2);

\draw [->,thick] (p2) to node {} (t1);
\draw [->,thick] (t3) to node {} (p2);
\draw [->,thick] (p1) to node {} (t3);
\draw [->,thick] (t1) to node {} (p1);

\draw [->,thick] (t0) to node {} (p7);
\draw [->,thick] (p7) to node {} (t3);

\draw [->,thick,bend right=30] (t0.west) to node {} (p1);
\draw [->,thick,bend left=0] (p1) to node {} (t0.west);

\draw [->,thick, bend right=0] (p0.north east) to node {} (t0.north west);
\draw [->,thick] (t2.north west) to node {} (p0.south west);


\draw [->,thick,bend left=0] (t0.south east) to node {} (p3.north east);
\draw [->,thick,bend left=0] (p3.south west) to node {} (t2.south east);

\draw [->,thick] (t2) to node {} (p4);
\draw [->,thick] (p4) to node {} (t1);

\draw [->,thick,bend right=0] (t1) to node {} (p5);
\draw [->,thick,bend right=0] (p5) to node {} (t2.north);

\draw [->,thick,bend right=0] (t3) to node {} (p6);
\draw [->,thick,bend right=0] (p6) to node {} (t0);

\end{tikzpicture}
\hspace*{2cm}
\raisebox{0mm}{
\begin{tikzpicture}[scale=0.7,mypetristyle]
\node[ltsNode,label=above:$s_0$,minimum width=7pt](s0)at(1,2.5){};
\node[ltsNode,label=above:$s_1$](s1)at(3,2.5){};
\node[ltsNode,label=left:$s_2$](s2)at(1,1){};
\node[ltsNode,label=right:$s_3$](s3)at(3,1){};
\node[ltsNode,label=below:$s_4$](s4)at(1,-0.5){};
\node[ltsNode,label=below:$s_5$](s5)at(3,-0.5){};
\node[ltsNode,label=left:$s_6$](s6)at(0,1){};
\node[ltsNode,label=right:$s_7$](s7)at(4,1){};
\draw[-{>[scale=2.5,length=2,width=2]}](s0)to node[above]{$t_3$}(s1);
\draw[-{>[scale=2.5,length=2,width=2]}](s1)to node[left]{$t_1$}(s3);
\draw[-{>[scale=2.5,length=2,width=2]}](s3)to node[left]{$t_0$}(s5);
\draw[-{>[scale=2.5,length=2,width=2]}](s5)to node[below right]{$t_3$}(s7);
\draw[-{>[scale=2.5,length=2,width=2]}](s0)to node[right]{$t_0$}(s2);
\draw[-{>[scale=2.5,length=2,width=2]}](s2)to node[right]{$t_3$}(s4);
\draw[-{>[scale=2.5,length=2,width=2]}](s4)to node[above]{$t_1$}(s5);
\draw[-{>[scale=2.5,length=2,width=2]}](s4)to node[below left]{$t_2$}(s6);
\draw[-{>[scale=2.5,length=2,width=2]}](s6)to node[above left]{$t_1$}(s0);
\draw[-{>[scale=2.5,length=2,width=2]}](s7)to node[above right]{$t_2$}(s1);
\end{tikzpicture}
}
\caption{On the left, a unit-weighted, live, structurally bounded H$2$S-WMG system, where the two shared places are $p_1$ and $p_2$.
The system enables the T-sequence $t_0 \, t_3 \, t_2 \, t_1$ but is not reversible.
On the right, its non strongly connected reachability graph is pictured, with initial state $s_0$.
}
\label{NonRevTwoPlacesBis}

\end{figure}
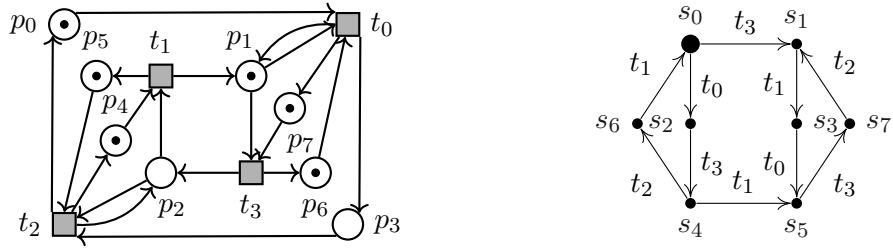

Once an H$1$S system is known to be live, 
checking reversibility thus amounts to checking the existence of a feasible T-sequence.
In next section,
we illustrate on use-cases how our results can be used to simplify the checking of liveness and reversibility.

\section{Use-cases}\label{UseCase}

In this section, we apply our results to two use-cases  modeled with H$1$S-WMG,
allowing to check liveness and reversibility more efficiently.
The first use-case is the \emph{Swimming pool protocol}, extracted from the Model Checking Contest.
The second one is the logotype of the \emph{Petri nets} conference 
(\emph{International Conference on Application and Theory of Petri Nets and Concurrency}).

\subsection{The Swimming pool protocol}\label{UseCaseSP}

The Swimming pool protocol, extracted from the Model Checking Contest database\footnote{\url{https://mcc.lip6.fr/models.php}},
stems from~\cite{latteux1980synchronisation}.
Variants of the protocol can be found in~\cite{Brams83}.
It is encoded by the system $S=(N,M_0)$ on the left of Figure~\ref{OneChoiceExample}, parameterized by the variables $a,b,c \ge 1$.
Initially, there are $a$ users outside the building, $b$ free bags and $c$ free cabins.
The protocol for one user follows: 
a user gets into the building, 
asks for the key of a cabin,
asks for a bag to pack his clothes,
uses the cabin to undress and to put his swimming suit on,
returns the key then swims,
gets out of the swimming pool,
asks for the key of a cabin,
dresses,
gives the bag back,
gives the key back, 
and finally leaves the building.\\
So as to check the liveness of this system $S$ for given values of $a$, $b$ and $c$,
one can use Proposition~\ref{LivenessOfHAC}, since $S$ is H$1$S.
It amounts to check that no minimal siphon is deadlocked at any reachable marking.
However,
since the number of these siphons is often exponential in the number of places, this method is generally too costly.
To alleviate this difficulty, we exploit our new results in the following.

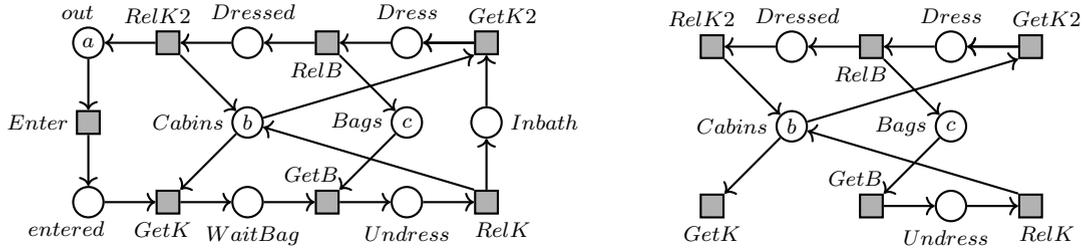
\begin{figure}[!h]
\centering
\begin{tikzpicture}[scale=0.7,mypetristyle]

\scriptsize

\node (out) at (1,3) [place,tokens=0,inner sep=1pt] {$a$};
\node [anchor=south] at (out.north) {$out~~~~$};

\node (entered) at (1,0) [place,tokens=0] {};
\node [anchor=north] at (entered.south west) {$entered~~~~$};

\node (WaitBag) at (4,0) [place] {};
\node [anchor=north] at (WaitBag.south) {$~~WaitBag$};

\node (Dressed) at (4,3) [place,tokens=0] {};
\node [anchor=south] at (Dressed.north) {$~~Dressed$};

\node (Cabins) at (4,1.5) [place,tokens=0,inner sep=1pt] {$b$};
\node [anchor=east] at (Cabins.west) {$Cabins$};

\node (Bags) at (7,1.5) [place,tokens=0,inner sep=1pt] {$c$};
\node [anchor=east] at (Bags.west) {$Bags$};

\node (Dress) at (7,3) [place] {};
\node [anchor=south] at (Dress.north) {$Dress$};

\node (Undress) at (7,0) [place] {};
\node [anchor=north] at (Undress.south) {$Undress$};

\node (InBath) at (8.5,1.5) [place,tokens=0] {};
\node [anchor=west] at (InBath.east) {$Inbath$};

\node (Enter) at (1,1.5) [transition,thick] {};
\node (RelK2) at (2.5,3) [transition,thick] {};
\node (GetK) at (2.5,0) [transition,thick] {};
\node (GetB) at (5.5,0) [transition,thick] {};
\node (RBag) at (5.5,3) [transition,thick] {};
\node (RelK) at (8.5,0) [transition,thick] {};
\node (GetK2) at (8.5,3) [transition,thick] {};

\node [anchor=east] at (Enter.west) {$Enter$};
\node [anchor=south] at (RelK2.north) {$RelK2~~~~$};
\node [anchor=north] at (GetK.south) {$GetK~~$};
\node [anchor=south] at (GetB.north) {$GetB~~~~~~$};
\node [anchor=north] at (RBag.south) {$RelB~~~~$};
\node [anchor=north] at (RelK.south) {$~~~~~~RelK$};
\node [anchor=south] at (GetK2.north) {$~~~~~~GetK2$};

\draw [->,thick] (out) to node [] {} (Enter);
\draw [->,thick] (Enter) to node [] {} (entered);
\draw [->,thick] (entered) to node [] {} (GetK);
\draw [->,thick] (GetK) to node [] {} (WaitBag);
\draw [->,thick] (WaitBag) to node [] {} (GetB);
\draw [->,thick] (GetB) to node [] {} (Undress);
\draw [->,thick] (Undress) to node [] {} (RelK);
\draw [->,thick] (RelK) to node [] {} (InBath);
\draw [->,thick] (InBath) to node [] {} (GetK2);
\draw [->,thick] (GetK2) to node [] {} (Dress);
\draw [->,thick] (Cabins) to node [] {} (GetK);
\draw [->,thick] (Dress) to node [] {} (RBag);
\draw [->,thick] (RBag) to node [] {} (Dressed);
\draw [->,thick] (Dressed) to node [] {} (RelK2);
\draw [->,thick] (RelK2) to node [] {} (out);
\draw [->,thick] (RelK2) to node [] {} (Cabins);
\draw [->,thick] (Cabins) to node [] {} (GetK2.south west);
\draw [->,thick] (GetK2) to node [] {} (Dress);
\draw [->,thick] (RelK.north west) to node [] {} (Cabins);
\draw [->,thick] (RBag) to node [] {} (Bags);
\draw [->,thick] (Bags) to node [] {} (GetB);
\end{tikzpicture}
%
%
\hspace*{8mm}
\begin{tikzpicture}[scale=0.7,mypetristyle]

\scriptsize

%


\node (Dressed) at (4,3) [place,tokens=0] {};
\node [anchor=south] at (Dressed.north) {$~~Dressed$};

\node (Cabins) at (4,1.5) [place,tokens=0,inner sep=1pt] {$b$};
\node [anchor=east] at (Cabins.west) {$Cabins$};

\node (Bags) at (7,1.5) [place,tokens=0,inner sep=1pt] {$c$};
\node [anchor=east] at (Bags.west) {$Bags$};

\node (Dress) at (7,3) [place] {};
\node [anchor=south] at (Dress.north) {$Dress$};

\node (Undress) at (7,0) [place] {};
\node [anchor=north] at (Undress.south) {$Undress~~$};


\node (RelK2) at (2.5,3) [transition,thick] {};
\node (GetK) at (2.5,0) [transition,thick] {};
\node (GetB) at (5.5,0) [transition,thick] {};
\node (RBag) at (5.5,3) [transition,thick] {};
\node (RelK) at (8.5,0) [transition,thick] {};
\node (GetK2) at (8.5,3) [transition,thick] {};

\node [anchor=south] at (RelK2.north) {$RelK2~~~~$};
\node [anchor=north] at (GetK.south) {$GetK$};
\node [anchor=south] at (GetB.north) {$GetB~~~~~~$};
\node [anchor=north] at (RBag.south) {$RelB~~~~$};
\node [anchor=north] at (RelK.south) {$~~~~~~RelK$};
\node [anchor=south] at (GetK2.north) {$~~~~~~GetK2$};

%
%
\draw [->,thick] (GetB) to node [] {} (Undress);
\draw [->,thick] (Undress) to node [] {} (RelK);
%
\draw [->,thick] (GetK2) to node [] {} (Dress);
\draw [->,thick] (Cabins) to node [] {} (GetK);
\draw [->,thick] (Dress) to node [] {} (RBag);
\draw [->,thick] (RBag) to node [] {} (Dressed);
\draw [->,thick] (Dressed) to node [] {} (RelK2);
\draw [->,thick] (RelK2) to node [] {} (Cabins);
\draw [->,thick] (Cabins) to node [] {} (GetK2.south west);
\draw [->,thick] (GetK2) to node [] {} (Dress);
\draw [->,thick] (RelK.north west) to node [] {} (Cabins);
\draw [->,thick] (RBag) to node [] {} (Bags);
\draw [->,thick] (Bags) to node [] {} (GetB);
\end{tikzpicture}
\caption{
The swimming-pool protocol is modeled by the parameterized H$1$S-WMG system $S=(N,M_0)$ on the left, with $a,b,c \ge 1$.
Consider the minimal siphon $Q=\{Dressed, Dress, Cabins, Bags, Undress\}$ in $S$:
it induces the non-live, unbounded system $S_Q$ on the right.
All the other minimal siphons of the system $S$ induce live strongly connected state-machine P-subsystems.
Each firing of transition $RelB$ generates one new token in~$Q$, and each firing of transition $GetK$ removes one token from~$Q$.
The system $S$ is live for $a=b=c=1$, but is no more live if $a=2$ and $b=c=1$ since it permits to remove all tokens from $Q$.
There are various other live instantiations, e.g.\ $a=14,b=10,c=5$ and $a=15,b=10,c=6$
and
various other non-live ones, e.g.\ $a=15,b=10,c=5$.
Instead of checking each minimal siphon, it suffices to apply our results of Section~\ref{SecLiveness},
yielding an ILP of polynomial size that checks the existence of a potentially reachable deadlock, thus non-liveness, with complexity NP.
}
\label{OneChoiceExample}
\end{figure}

\noindent {\bf Checking liveness more efficiently using Corollary~\ref{CheckLivenessH1SWMG} and Proposition~\ref{PropTh5.4}.}
The net $N$ is conservative (which can be checked in polynomial-time with a linear program over the rational numbers), 
thus structurally bounded (see~\cite{LAT98,HDM2014}) and consistent (which is also checked in polynomial-time).
Moreover, it is a strongly connected H1S-WMG whose shared place deletion yields a strongly connected WMG,
hence Corollary~\ref{CheckLivenessH1SWMG} and Proposition~\ref{PropTh5.4} can be used with the associated ILP to check non-liveness 
with complexity NP in $|P|\cdot(|T|+|P|)\cdot m$, 
where $m$ is the maximal length of the net binary-encoded numbers 
(which are the arc weights and the given upper bounds on the structural bounds).\\
Besides, notice that the strongly connected, unit-weighted WMG $S_\mathrm{WMG}$ 
obtained by deleting the shared place $Cabins$ is also conservative, consistent, live and bounded.
If needed, its liveness can be checked in polynomial-time~\cite{March09}.\\
Now, so as to check reversibility more efficiently, 
let us recall the next two results, extracted from Theorem~4.10.2 and Proposition~4.7 in~\cite{WTS92}.

\begin{proposition}[Reversibility of live and bounded WMG~\cite{WTS92}]\label{LRWMG}
Each consistent and live WMG is reversible.
\end{proposition}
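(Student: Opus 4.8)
The plan is to show directly that $M_0$ is a home state, using the realisability result for T-vectors in weighted marked graphs (Proposition~\ref{RealizableTvectors}, which applies since every WMG is a \WMGineq{}). One could instead observe that a WMG is an H$1$S net (it has no shared place, hence is trivially homogeneous) and invoke Corollary~\ref{H1Srev} to reduce the claim to the existence of a feasible T-sequence; the two routes share the same core step, so I carry out the direct one.

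Write $S=(N,M_0)$ and let $I$ be the incidence matrix of $N$. By consistency, first I would fix a consistency vector $Y_0\in\N^T$, so that $Y_0\ge\one$ and $I\cdot Y_0=\zero$. Now let $M\in R(S)$ be an arbitrary reachable marking, say $M_0[\rho\rangle M$; the goal is to produce a firing sequence from $M$ back to $M_0$. Since $[M\rangle\subseteq[M_0\rangle$ and $S$ is live, from $M$ one can fire a feasible sequence in which every transition occurs an arbitrarily large number of times: unfolding the definition of liveness, from any reachable marking one reaches a marking enabling any prescribed transition, so by treating the transitions of $T$ one after another and iterating this process $R$ times, one obtains a feasible sequence $\sigma$ with $M[\sigma\rangle$ and $\Parikh(\sigma)(t)\ge R$ for every $t\in T$.

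Next I would choose an integer $k\ge 1$ large enough that $k\cdot Y_0\ge\Parikh(\rho)$ (possible because $Y_0\ge\one$) and set $Z:=k\cdot Y_0-\Parikh(\rho)\in\N^T$. Then
\[
M+I\cdot Z=\bigl(M_0+I\cdot\Parikh(\rho)\bigr)+I\cdot\bigl(k\cdot Y_0-\Parikh(\rho)\bigr)=M_0+I\cdot(k\cdot Y_0)=M_0\ \ge\ \zero .
\]
Choosing $R\ge\max_{t\in T}Z(t)$ in the previous paragraph guarantees $Z\le\Parikh(\sigma)$, while $M[\sigma\rangle$. Proposition~\ref{RealizableTvectors}, applied to the WMG $N$ with the roles of the two markings played by $M$ (source) and $M_0$ (target), the T-vector $Z$, and the witness sequence $\sigma$, then yields a firing sequence $\sigma'$ with $M[\sigma'\rangle M_0$ and $\Parikh(\sigma')=Z$. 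Hence $M_0\in R((N,M))$ for every $M\in R(S)$, i.e.\ $M_0$ is a home state, so $S$ is reversible.

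I do not foresee a genuine obstacle here: the realisability proposition carries out the combinatorial work, and the only step requiring a short argument of its own is the claim that liveness allows one to fire every transition as many times as desired while staying inside the reachable set, which is an immediate induction on the definition of liveness and, notably, does not use boundedness.
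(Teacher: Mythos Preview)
Your argument is correct. The paper does not supply its own proof of this proposition: it is simply recalled from \cite{WTS92} (Theorem~4.10.2 and Proposition~4.7 there) and used as a black box in the use-case section. So there is no ``paper proof'' to compare against; you have in effect filled in a proof the paper omits.

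A brief remark on the route you chose versus the alternatives. Your direct application of Proposition~\ref{RealizableTvectors} is clean and self-contained: consistency gives the vector $Z=k\cdot Y_0-\Parikh(\rho)$ solving the state equation back to $M_0$, liveness provides a dominating witness sequence from $M$, and realisability does the rest. The alternative you mention---viewing the WMG as an H$1$S net and invoking Corollary~\ref{H1Srev}---is also legitimate and non-circular here (Corollary~\ref{H1Srev} is established earlier and does not rely on Proposition~\ref{LRWMG}); to use it one would still need a feasible T-sequence, which is exactly what Proposition~\ref{WMGfireY} (or your argument with $Z=Y_0$ and $\rho=\epsilon$) supplies. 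Your observation that boundedness is not actually used is accurate: the hypothesis doing the work is consistency, and the proposition title's mention of boundedness reflects the context in \cite{WTS92} rather than a logical dependency in the statement as given.
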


\begin{proposition}[Fireability of the minimal T-semiflow~\cite{WTS92}]\label{WMGfireY}
If $S=(N,M_0)$ is a live WMG with incidence matrix $I$, then for each T-vector $Y \ge \one$ such that $I \cdot Y \ge 0$
there exists
$\sigma_Y$ feasible in $S$ such that $\Parikh(\sigma_Y) = Y$.
\end{proposition}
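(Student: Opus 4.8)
The plan is to deduce this from Proposition~\ref{RealizableTvectors}, so that the real work is just to exhibit \emph{some} feasible sequence from $M_0$ whose Parikh vector dominates $Y$; liveness is exactly what produces such a sequence.

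First I would record that $(N,M)$ is live for every $M\in[M_0\rangle$, simply because $[M\rangle\subseteq[M_0\rangle$, so the defining property of liveness of $S$ restricts to $M$. Consequently, from any reachable marking one can fire a finite sequence containing every transition: fire a sequence reaching a marking that enables $t_1$, fire $t_1$, then do the same for $t_2$ from the new marking, and so on through all of $T$. Iterating this construction $k$ times starting at $M_0$ yields a feasible sequence $\sigma$ with $M_0[\sigma\rangle$ and $\Parikh(\sigma)(t)\ge k$ for every $t\in T$; taking $k=\max_{t\in T}Y(t)$ (finite, since $Y$ is a T-vector) gives $\Parikh(\sigma)\ge Y$.

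Next, since $I\cdot Y\ge\zero$ by hypothesis, the marking $M:=M_0+I\cdot Y$ satisfies $M\ge M_0\ge\zero$, so $(M,Y)$ is an honest solution of the state equation. A WMG is in particular a \WMGineq{}, hence Proposition~\ref{RealizableTvectors} applies to $N$ with this $M$, this $Y$, and the sequence $\sigma$ above (which is feasible from $M_0$ and satisfies $Y\le\Parikh(\sigma)$): it provides a firing sequence $M_0[\sigma_Y\rangle M$ with $\Parikh(\sigma_Y)=Y$, which is precisely the claim.

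The one point needing a little care is the iterated use of liveness: one must check that liveness is inherited by all reachable markings (it is, by the inclusion noted above) so that the ``fire all transitions once more'' step can be repeated, and observe that finitely many repetitions suffice because $Y$ has finitely many, finite components. Beyond that, the proof is a direct appeal to Proposition~\ref{RealizableTvectors}. A more self-contained alternative would be to schedule transitions greedily following the classical earliest-firing policy for T-systems and argue by induction on $\sum_{t\in T}Y(t)$ that this policy realizes $Y$ without blocking prematurely; but routing through Proposition~\ref{RealizableTvectors} is shorter here.
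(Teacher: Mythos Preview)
Your argument is correct. The paper itself does not supply a proof of this proposition: it is recalled from~\cite{WTS92} (Proposition~4.7 there), so there is no paper-proof to compare against directly. That said, your derivation via Proposition~\ref{RealizableTvectors} is exactly the reasoning the paper employs \emph{in passing} inside the proof of Theorem~\ref{OnePlaceCommonReachability}, where it writes that the \WMGineq{} subsystem ``is live, thus enables a sequence $\gamma$ such that $\Parikh(\gamma) \ge Y$, hence a sequence $\sigma_Y$ is feasible \ldots\ such that $\Parikh(\sigma_Y) = Y$''. So your route is not merely correct but is the one the paper implicitly relies on elsewhere; it also shows that the hypothesis $Y\ge\one$ is not needed for the conclusion (any $Y\ge\zero$ works), which is a mild strengthening over the statement as quoted.
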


\noindent {\bf Checking reversibility under the liveness assumption.}
Applying Corollary~\ref{H1Srev}, we only need to check the existence of a feasible T-sequence.
Moreover, the system $S$ of Figure~\ref{OneChoiceExample} has minimal T-semiflow $\mathcal{Y} = \one$
which is a (minimal) consistency vector, which can be computed in polynomial time.
The WMG $S_\mathrm{WMG}$ obtained by deleting the shared place $Cabins$ is well-formed and live 
(as recalled earlier, it is also a polynomial-time problem), 
thus reversible by Proposition~\ref{LRWMG}
and it enables a T-sequence $\sigma_\mathcal{Y}$ with Parikh vector $\mathcal{Y}$ by Proposition~\ref{WMGfireY}.
If, in addition, the shared place $Cabins$ contains at least $|Cabins\lbul|=2$ tokens,
then $\sigma_\mathcal{Y}$ is also feasible in $S$, implying its reversibility by Corollary~\ref{H1Srev}.
This is a polynomial-time sufficient condition of reversibility.\\
Otherwise, when $Cabins$ initially contains a single token, the checking complexity may be higher
and will make the subject of a future work for further improvement.
The difficulty comes from the conflict resolution policy for the shared place:
when constructing a T-sequence, 
a series of "bad" choices can lead to a longer T-sequence (whose Parikh vector is a multiple of $\mathcal{Y}$),
and one has to stop this increase, possibly by backtracking.

\subsection{The Petri nets conference emblem}\label{UseCasePNF}

On the left of Figure~\ref{PNconfLOGO}, we depict the emblem of the
\emph{International Conference on Application and Theory of Petri Nets and Concurrency}.
On the right, a simplified version with isomorphic reachability graph (hence preserving liveness, for instance)
obtained by applying some of the efficient reduction rules of~\cite{BLD2018} under some assumptions on the initial marking. 
Both nets are structurally bounded and fulfill the conditions of Corollary~\ref{CheckLivenessH1SWMG}
and 
can consequently benefit from the ILP of Section~\ref{SubsecSmallILP} to check (non-)liveness.

Recall that the number of inequalities and variables of this ILP is linear in the number of places and transitions,
and the length of each inequality is linear in the number of places, transitions and in the number of bits 
in the largest binary-encoded number among weights and structural bounds;
this describes the input size of the ILP.
The general problem of solving an ILP is NP-complete (even for the decision version) in the input size with binary-encoded numbers,
thus the problem of solving our ILP belongs to NP, implying that the non-liveness problem in this class belongs to co-NP.

Once liveness has been checked for a given initial marking $M_0$, reversibility can be checked using Corollary~\ref{H1Srev},
i.e.\ 
by checking the existence of a feasible T-sequence.
Since there is a unique minimal T-semiflow in both nets, which is $\one$, i.e.\ a consistency vector,
each T-sequence, if any, has a Parikh vector equal to a multiple of $\one$.
If, moreover, $m_5 \ge 2$, we know (as in the Swimming pool example) that a T-sequence with Parikh vector $\one$ is feasible, implying reversibility.

\begin{figure}[!h]

\centering

\begin{tikzpicture}[scale=1,mypetristyle]

\node (p0) at (4.5,1.5) [place,inner sep=0pt] {\tiny $m_0$};
\node (p1) at (1.5,0) [place,inner sep=0pt] {\tiny $m_1$};
\node (p2) at (2.05,0) [place,inner sep=0pt] {\tiny $m_2$};
\node (p3) at (3,0) [place,inner sep=0pt] {\tiny $m_3$};
\node (p4) at (3.95,0) [place,inner sep=0pt] {\tiny $m_4$};
\node (p5) at (4.5,0) [place,inner sep=0pt] {\tiny $m_5$};
\node (p6) at (6,0) [place,inner sep=0pt] {\tiny $m_6$};
\node (p7) at (7.5,0) [place,inner sep=0pt] {\tiny $m_7$};
\node (p8) at (4.5,-1.5) [place,inner sep=0pt] {\tiny $m_8$};

\node [anchor=south] at (p0.north) {$p_0$};
\node [anchor=north] at (p1.south east) {$~~p_1$};
\node [anchor=west,inner sep=0pt] at (p2.east) {$p_2$};
\node [anchor=west,inner sep=0pt] at (p3.east) {$p_3$};
\node [anchor=south, inner sep=1pt] at (p4.north) {$p_4$};
\node [anchor=west] at (p5.east) {$p_5$};
\node [anchor=west,inner sep=1pt] at (p6.east) {$p_6$};
\node [anchor=east,inner sep=1pt] at (p7.west) {$p_7$};
\node [anchor=north] at (p8.south) {$p_8$};

\node (t0) at (3,1) [transition] {};
\node (t1) at (6,1) [transition] {};
\node (t2) at (3,-1) [transition] {};
\node (t3) at (6,-1) [transition] {};

\node [anchor=south east] at (t0.north west) {$t_0$};
\node [anchor=south west] at (t1.north east) {$t_1$};
\node [anchor=north east] at (t2.south west) {$t_2$};
\node [anchor=north west] at (t3.south east) {$t_3$};

\draw [->,thick,bend right=0,rounded corners=1mm] (t0.west) -- (1.5,1) -- (p1);
\draw [->,thick,bend right=0,rounded corners=1mm] (p1) -- (1.5,-1) -- (t2.west);

\draw [->,thick,bend right=0] (t0.south west) to node {} (p2);
\draw [->,thick,bend right=0] (p2) to node {} (t2.north west);

\draw [->,thick,bend right=0] (t2) to node {} (p3);
\draw [->,thick,bend right=0] (p3) to node {} (t0);

\draw [->,thick,bend right=0] (t2.north east) to node {} (p4);
\draw [->,thick,bend right=0] (p4) to node {} (t0.south east);

\draw [->,thick,bend left=30,rounded corners=1mm] (t0) -- (3,1.5) -- (p0);
\draw [->,thick,bend left=30,rounded corners=1mm] (p0) -- (6,1.5) -- (t1);

\draw [->,thick,bend right=0] (t1) to node {} (p6);
\draw [->,thick,bend right=0] (p6) to node {} (t3);

\draw [->,thick,bend left=0,rounded corners=1mm] (t1) -- (7.5,1) -- (p7);
\draw [->,thick,bend left=0,rounded corners=1mm] (p7) -- (7.5,-1) -- (t3);

\draw [->,thick,bend left=0,rounded corners=1mm] (t3) -- (6,-1.5) -- (p8);
\draw [->,thick,bend left=0,rounded corners=1mm] (p8) -- (3,-1.5) -- (t2);

\draw [->,thick,bend right=40,rounded corners=1mm] (t2) -- (4.35,-1) -- (p5.south west);
\draw [->,thick,bend right=40,rounded corners=1mm] (p5.north west) -- (4.35,1) -- (t0);

\draw [->,thick,bend left=40,rounded corners=1mm] (t3) -- (4.65,-1) -- (p5.south east);
\draw [->,thick,bend left=40,rounded corners=1mm] (p5.north east) -- (4.65,1) -- (t1);

\end{tikzpicture}
\hspace*{15mm}
\begin{tikzpicture}[scale=1,mypetristyle]

\node (p0) at (4.5,1.5) [place,inner sep=0pt] {\tiny $m_0$};
\node (p1) at (1.5,0) [place,inner sep=0pt] {\tiny $m_1$};
\node (p3) at (3,0) [place,inner sep=0pt] {\tiny $m_3$};
\node (p5) at (4.5,0) [place,inner sep=0pt] {\tiny $m_5$};
\node (p6) at (6,0) [place,inner sep=0pt] {\tiny $m_6$};
\node (p8) at (4.5,-1.5) [place,inner sep=0pt] {\tiny $m_8$};

\node [anchor=south] at (p0.north) {$p_0$};
\node [anchor=north] at (p1.south east) {$~~p_1$};
\node [anchor=west,inner sep=0pt] at (p3.east) {$p_3$};
\node [anchor=west] at (p5.east) {$p_5$};
\node [anchor=west,inner sep=1pt] at (p6.east) {$p_6$};
\node [anchor=north] at (p8.south) {$p_8$};

\node (t0) at (3,1) [transition] {};
\node (t1) at (6,1) [transition] {};
\node (t2) at (3,-1) [transition] {};
\node (t3) at (6,-1) [transition] {};

\node [anchor=south east] at (t0.north west) {$t_0$};
\node [anchor=south west] at (t1.north east) {$t_1$};
\node [anchor=north east] at (t2.south west) {$t_2$};
\node [anchor=north west] at (t3.south east) {$t_3$};

\draw [->,thick,bend right=0,rounded corners=1mm] (t0.west) -- (1.5,1) -- (p1);
\draw [->,thick,bend right=0,rounded corners=1mm] (p1) -- (1.5,-1) -- (t2.west);


\draw [->,thick,bend right=0] (t2) to node {} (p3);
\draw [->,thick,bend right=0] (p3) to node {} (t0);


\draw [->,thick,bend left=30,rounded corners=1mm] (t0) -- (3,1.5) -- (p0);
\draw [->,thick,bend left=30,rounded corners=1mm] (p0) -- (6,1.5) -- (t1);

\draw [->,thick,bend right=0] (t1) to node {} (p6);
\draw [->,thick,bend right=0] (p6) to node {} (t3);


\draw [->,thick,bend left=0,rounded corners=1mm] (t3) -- (6,-1.5) -- (p8);
\draw [->,thick,bend left=0,rounded corners=1mm] (p8) -- (3,-1.5) -- (t2);

\draw [->,thick,bend right=40,rounded corners=1mm] (t2) -- (4.35,-1) -- (p5.south west);
\draw [->,thick,bend right=40,rounded corners=1mm] (p5.north west) -- (4.35,1) -- (t0);

\draw [->,thick,bend left=40,rounded corners=1mm] (t3) -- (4.65,-1) -- (p5.south east);
\draw [->,thick,bend left=40,rounded corners=1mm] (p5.north east) -- (4.65,1) -- (t1);

\end{tikzpicture}
\caption{On the left, the emblem of the Petri net conference, in which we parameterize the initial marking. 
This net is a unit-weighted H$1$S-WMG.
It is $1$-conservative hence structurally bounded.
It is also strongly connected, and the WMG obtained by deleting $p_5$ is also strongly connected,
so that the conditions of Corollary~\ref{CheckLivenessH1SWMG} are fulfilled,
implying that the ILP of Section~\ref{SubsecSmallILP} can be used to check liveness.
Suppose that $m_2 \ge m_1$, $m_4 \ge m_3$ and $m_7 \ge m_6$ (w.l.o.g.): we deduce from the reduction rules of~\cite{BLD2018}
that $p_2$, $p_4$ and $p_7$ are redundant and that their removal yields a system
with isomorphic reachability graph, depicted on the right.
Now, if $m_5 \ge 2$,   
a T-sequence with Parikh vector $\one$ is feasible;
for other markings, the computation might be more costly.
}

\label{PNconfLOGO}
\end{figure}
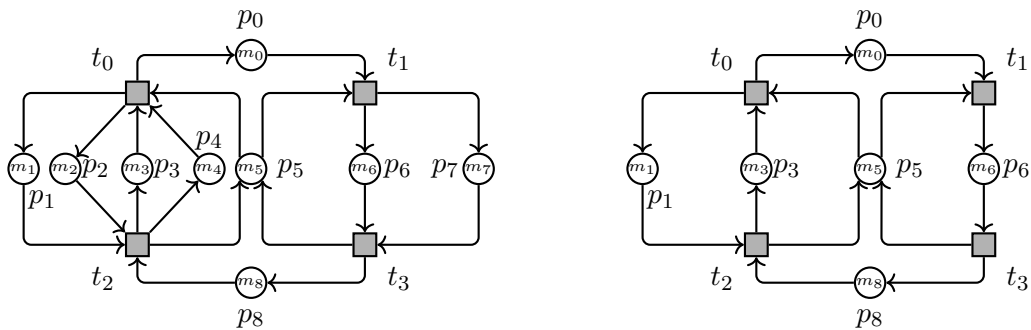

\section{Toward modular systems with \texorpdfstring{H$1$S}{} agents}\label{ModularH1C}

H$1$S systems, although more expressive than CF systems, can only model one shared place.
Moreover, our results on liveness and reversibility do not extend to H$2$S systems.
To increase the range of applicability of our results,
we can exploit modular approaches, such as the work of~\cite{DSSP98}
on Deterministically Synchronized Sequential Process (DSSP) Petri nets.
The latter are modular systems composed by a set of distributed agents 
communicating with asynchronous message passing.
Each agent is a $1$-bounded, live state machine.
In~\cite{DSSP98}, the authors provide structural results related to liveness
and derive a liveness enforcement technique.

In~\cite{SCECS96}, DSSP systems are generalized to $\{$SC$\}^*$ECS systems, which allow weights and shared places in a restricted fashion,
while retaining some strong structural properties of the DSSP systems.
They allow agents to be HFC systems.

A perspective is to exploit such structural techniques to extend the expressiveness of H$1$S systems, 
e.g.\ by allowing agents to be H$1$S systems,
so as to obtain live and reversible systems with an arbitrary number of shared places.

\section{Conclusions and Perspectives}\label{SecConclu}

In Petri net analysis, the reachability, liveness and reversibility checking problems are well-known to be intractable.
Their importance for real-life applications triggered numerous fruitful studies.
In this work, we obtained new results alleviating their solving difficulty in particular Petri net subclasses, summarized as follows.

We introduced new notions useful to our purpose, such as initial directedness and strong liveness.
We proved a property relating these notions in weighted Petri nets.\\
For the H$1$S-\WMGineq{} subclass, we obtained a variant of the well-known Keller's theorem, exploiting initial directedness.
This sheds new light on the reachability problem in this class.\\
For the same subclass, we developed a new liveness characterization, based on the state equation.
Under further classical assumptions such as strong connectedness,
we derived from it the first liveness characterization whose complexity lies in co-NP.\\
For the larger H$1$S class, under the liveness assumption,
we proved the first characterization of reversibility that leads to the first polynomial-time and wide-ranging sufficient conditions
of liveness and reversibility in this class.\\
Thus, in most cases, our results drastically reduce the part of the reachability graph to be checked in H$1$S systems,
improving upon all previously known general methods.
We also provided several counter-examples showing that our new conditions do not apply to homogeneous systems with more than one shared place.
Hence, the H$1$S class embodies a part of the frontier between the systems whose reachability, liveness and reversibility checking
can be alleviated with such approaches, and the other ones.\\
Finally, we highlighted the effectiveness and scalability or our approach on two use-cases known to the Petri net community.

To extend this work, there are various ways in which one might proceed.
One way is to consider modular systems, as envisioned in Section~\ref{ModularH1C}.
Another way is to provide more efficient checking algorithms for liveness and reversibility in the H1S class, 
and to determine other Petri net classes that may benefit from the proof techniques developed in this work.
Since we did not aim at optimizing our ILP for liveness checking, focusing on the complexity class,
it is subject to imrovement: we might look for the minimal number of variables or inequalities, 
and possibly avoid the use of any transformation.
We also plan to implement our techniques in a model-checker and complete this work with a series of benchmarks on much more complex use-cases.\\
A complementary objective is to investigate the combination of our methods with the reduction techniques developed in~\cite{BLD2018},
as well as separability~\cite{BD2011,BHW18} and marking homothety~\cite{Homothetic2012},
always with the aim of improving their efficiency and of widening their applicative area.

} 

\bibliographystyle{fundam}
\bibliography{FI2020}

\end{document}